\newenvironment{proof}{\noindent\textbf{Proof}}{\hfill\qed}
\newcommand{\qed}{\hfill$\Box$}
\newtheorem{lemma}{Lemma}
\newtheorem{theorem}{Theorem}
\newtheorem{definition}{Definition}
\begin{document}

\title{Maximum Metric Spanning Tree made Byzantine Tolerant}

\author{Swan Dubois\protect\footnote{UPMC Sorbonne Universit\'es \& INRIA, France, swan.dubois@lip6.fr} \and Toshimitsu Masuzawa\protect\footnote{Osaka University, Japan, masuzawa@ist.osaka-u.ac.jp} \and S\'{e}bastien Tixeuil\protect\footnote{UPMC Sorbonne Universit\'es \& Institut Universitaire de France, France, sebastien.tixeuil@lip6.fr}}
\date{}

\maketitle

\begin{abstract}
Self-stabilization is a versatile approach to fault-tolerance since it permits a distributed system to recover from any transient fault that arbitrarily corrupts the contents of all memories in the system. Byzantine tolerance is an attractive feature of distributed systems that permits to cope with arbitrary malicious behaviors. This paper focus on systems that are both self-stabilizing and Byzantine tolerant.

We consider the well known problem of constructing a maximum metric tree in this context. Combining these two properties is known to induce many impossibility results. In this paper, we provide first two impossibility results about the construction of maximum metric tree in presence of transients and (permanent) Byzantine faults. Then, we provide a new self-stabilizing protocol that provides optimal containment of an arbitrary number of Byzantine faults.
\end{abstract}

\paragraph{Keywords}
Byzantine fault, Distributed protocol, Fault tolerance,
Stabilization, Spanning tree construction

\section{Introduction}

The advent of ubiquitous large-scale distributed systems advocates that tolerance to various kinds of faults and hazards must be included from the very early design of such systems. \emph{Self-stabilization}~\cite{D74j,D00b,T09bc} is a versatile technique that permits forward recovery from any kind of \emph{transient} faults, while \emph{Byzantine Fault-tolerance}~\cite{LSP82j} is traditionally used to mask the effect of a limited number of \emph{malicious} faults. Making distributed systems tolerant to both transient and malicious faults is appealing yet proved difficult~\cite{DW04j,DD05c,NA02c} as impossibility results are expected in many cases.

\paragraph{Related Works}
A promizing path towards multitolerance to both transient and Byzantine faults is \emph{Byzantine containment}. For \emph{local} tasks (\emph{i.e.} tasks whose correctness can be checked locally, such as vertex coloring, link coloring, or dining philosophers), the notion of \emph{strict stabilization} was proposed~\cite{NA02c,MT07j}. Strict stabilization guarantees that there exists a \emph{containment radius} outside which the effect of permanent faults is masked, provided that the problem specification makes it possible to break the causality chain that is caused by the faults. As many problems are not local, it turns out that it is impossible to provide strict stabilization for those. To circumvent impossibility results, the weaker notion of \emph{strong stabilization} was proposed~\cite{MT06cb,DMT11j}: here, correct nodes outside the containment radius may be perturbated by the actions of Byzantine node, but only a finite number of times.

Recently, the idea of generalizing strict and strong stabilization to an area that depends on the graph topology and the problem to be solved rather than an arbitrary fixed containment radius was proposed~\cite{DMT10ca,DMT10cd} and denoted by \emph{topology aware} strict (and strong) stabilization. When maximizable metric trees are considered, \cite{DMT10ca} proposed an optimal (with respect to impossibility results) protocol for topology-aware strict stabilization, and for the simpler case of breath-first-search metric trees, \cite{DMT10cd} presented a protocol that is optimal both with respect to strict and strong variants of topology-aware stabilization. The case of optimality for topology-aware strong stabilization in the general maximal metric case remains open.

\paragraph{Our Contribution} 

In this paper, we investigate the possibility of topology-aware strong stabilization for tasks that are global (\emph{i.e.} for with there exists a causality chain of size $r$, where $r$ depends on $n$ the size of the network), and focus on the maximum metric tree problem. Our contribution in this paper is threefold. First, we provide two impossibility results for self-stabilizing maximum metric tree construction in presence of Byzantine faults. In more details, we characterize a specific class of maximizable metrics (which includes breath-first-search and shortest path metrics) that prevents the existence of strong stabilizing solutions and we generalize an impossibilty result of \cite{DMT10cd} that provides a lower bound on the containmemt area for topology-aware strong stabilization (Section \ref{sec:impossibility}). Second, we provide a topology-aware strongly stabilizing protocol that matches this lower bound on the containment area (Section \ref{sec:protocol}). Finally, we provide a necessary and sufficient condition for the existence of a strongly stabilizing solution (Section \ref{sec:relationship}).

\section{Model, Definitions and Previous Results}

\subsection{State Model}

A \emph{distributed system} $S=(V,L)$ consists of a set $V=\{v_1,v_2,\ldots,v_n\}$ of processes and a set $L$ of bidirectional communication links (simply called links). A link is an unordered pair of distinct processes. A distributed system $S$ can be regarded as a graph whose vertex set is $V$ and whose link set is $L$, so we use graph terminology to describe a distributed system $S$. We use the following notations: $n=|V|$, $m=|L|$ and $d(u,v)$ denotes the distance between two processes $u$ and $v$ (\emph{i.e} the length of the shortest path between $u$ and $v$).

Processes $u$ and $v$ are called \emph{neighbors} if $(u,v)\in L$. The set of neighbors of a process $v$ is denoted by $N_v$. We do not assume existence of a unique identifier for each process. Instead we assume each process can distinguish its neighbors from each other by locally labeling them.

In this paper, we consider distributed systems of arbitrary topology. We assume that a single process is distinguished as a \emph{root}, and all the other processes are identical. We adopt the \emph{shared state model} as a communication model in this paper, where each process can directly read the states of its neighbors.

The variables that are maintained by processes denote process states. A process may take actions during the execution of the system. An action is simply a function that is executed in an atomic manner by the process. The action executed by each process is described by a finite set of guarded actions of the form $\langle$guard$\rangle\longrightarrow\langle$statement$\rangle$. Each guard of process $u$ is a boolean expression involving the variables of $u$ and its neighbors.

A global state of a distributed system is called a \emph{configuration} and is specified by a product of states of all processes. We define $C$ to be the set of all possible configurations of a distributed system $S$. For a process set $R \subseteq V$ and two configurations $\rho$ and $\rho'$, we denote $\rho \stackrel{R}{\mapsto} \rho'$ when $\rho$ changes to $\rho'$ by executing an action of each process in $R$ simultaneously. Notice that $\rho$ and $\rho'$ can be different only in the states of processes in $R$. For completeness of execution semantics, we should clarify the configuration resulting from simultaneous actions of neighboring processes. The action of a process depends only on its state at $\rho$ and the states of its neighbors at $\rho$, and the result of the action reflects on the state of the process at $\rho '$.

We say that a process is \emph{enabled} in a configuration $\rho$ if the guard of at least one of its actions is evaluated as true in $\rho$.

A \emph{schedule} of a distributed system is an infinite sequence of process sets. Let $Q=R^1, R^2, \ldots$  be a schedule, where $R^i \subseteq V$ holds for each $i\ (i \ge 1)$. An infinite sequence of configurations $e=\rho_0,\rho_1,\ldots$ is called an \emph{execution} from an initial configuration $\rho_0$ by a schedule $Q$, if $e$ satisfies $\rho_{i-1} \stackrel{R^i}{\mapsto} \rho_i$ for each $i\ (i \ge 1)$. Process actions are executed atomically, and we distinguish some properties on the scheduler (or daemon). A \emph{distributed daemon} schedules the actions of processes such that any subset of processes can simultaneously execute their actions. We say that the daemon is \emph{central} if it schedules action of only one process at any step. The set of all possible executions from $\rho_0\in C$ is denoted by $E_{\rho_0}$. The set of all possible executions is denoted by $E$, that is, $E=\bigcup_{\rho\in C}E_{\rho}$. We consider \emph{asynchronous} distributed systems but we add the following assumption on schedules: any schedule is strongly fair (that is, it is impossible for any process to be infinitely often enabled without executing its action in an execution) and $k$-bounded (that is, it is impossible for any process to execute more than $k$ actions between two consecutive action executions of any other process).

In this paper, we consider (permanent) \emph{Byzantine faults}: a Byzantine process (\emph{i.e.} a Byzantine-faulty process) can make arbitrary behavior independently from its actions. If $v$ is a Byzantine process, $v$ can repeatedly change its variables arbitrarily. For a given execution, the number of faulty processes is arbitrary but we assume that the root process is never faulty.

\subsection{Self-Stabilizing Protocols Resilient to Byzantine Faults}

Problems considered in this paper are so-called \emph{static problems}, \emph{i.e.} they require the system to find static solutions. For example, the spanning-tree construction problem is a static problem, while the mutual exclusion problem is not. Some static problems can be defined by a \emph{specification predicate} (shortly, specification), $spec(v)$, for each process $v$: a configuration is a desired one (with a solution) if every process satisfies $spec(v)$. A specification $spec(v)$ is a boolean expression on variables of $P_v~(\subseteq V)$ where $P_v$ is the set of processes whose variables appear in $spec(v)$. The variables appearing in the specification are called \emph{output variables} (shortly, \emph{O-variables}). In what follows, we consider a static problem defined by specification $spec(v)$.

A \emph{self-stabilizing protocol} (\cite{D74j}) is a protocol that eventually reaches a \emph{legitimate configuration}, where $spec(v)$ holds at every process $v$, regardless of the initial configuration. Once it reaches a legitimate configuration, every process never changes its O-variables and always satisfies $spec(v)$. From this definition, a self-stabilizing protocol is expected to tolerate any number and any type of transient faults since it can eventually recover from any configuration affected by the transient faults. However, the recovery from any configuration is guaranteed only when every process correctly executes its action from the configuration, \emph{i.e.}, we do not consider existence of permanently faulty processes.

When (permanent) Byzantine processes exist, Byzantine processes may not satisfy $spec(v)$. In addition, correct processes near the Byzantine processes can be influenced and may be unable to satisfy $spec(v)$. Nesterenko and Arora~\cite{NA02c} define a \emph{strictly stabilizing protocol} as a self-stabilizing protocol resilient to unbounded number of Byzantine processes.

Given an integer $c$, a \emph{$c$-correct process} is a process defined as follows.

\begin{definition}[$c$-correct process]
A process is $c$-correct if it is correct (\emph{i.e.} not Byzantine) and located at distance more than $c$ from any Byzantine process.
\end{definition}

\begin{definition}[$(c,f)$-containment]
\label{def:cfcontained}
A configuration $\rho$ is \emph{$(c,f)$-contained} for specification $spec$ if, given at most $f$ Byzantine processes, in any execution starting from $\rho$, every $c$-correct process $v$ always satisfies $spec(v)$ and never changes its O-variables.
\end{definition}

The parameter $c$ of Definition~\ref{def:cfcontained} refers to the \emph{containment radius} defined in \cite{NA02c}. The parameter $f$ refers explicitly to the number of Byzantine processes, while \cite{NA02c} dealt with unbounded number of Byzantine faults (that is $f\in\{0\ldots n\}$).

\begin{definition}[$(c,f)$-strict stabilization]
\label{def:cfstabilizing}
A protocol is \emph{$(c,f)$-strictly stabilizing} for specification $spec$ if, given at most $f$ Byzantine processes, any execution $e=\rho_0,\rho_1,\ldots$ contains a configuration $\rho_i$ that is $(c,f)$-contained for $spec$.
\end{definition}

An important limitation of the model of \cite{NA02c} is the notion of $r$-\emph{restrictive} specifications. Intuitively, a specification is $r$-restrictive if it prevents combinations of states that belong to two processes $u$ and $v$ that are at least $r$ hops away. An important consequence related to Byzantine tolerance is that the containment radius of protocols solving those specifications is at least $r$. For some (global) problems $r$ can not be bounded by a constant. In consequence, we can show that there exists no $(c,1)$-strictly stabilizing
protocol for such a problem for any (finite) integer $c$.

\paragraph{Strong stabilization} To circumvent such impossibility results, \cite{DMT11j} defines a weaker notion than the strict stabilization. Here, the requirement to the containment radius is relaxed, \emph{i.e.} there may exist processes outside the containment radius that invalidate the specification predicate, due to Byzantine actions. However, the impact of Byzantine triggered action is limited in times: the set of Byzantine processes may only impact processes outside the containment radius a bounded number of times, even if Byzantine processes execute an infinite number of actions.

In the following of this section, we recall the formal definition of strong stabilization adopted in \cite{DMT11j}. From the states of $c$-correct processes, \emph{$c$-legitimate configurations} and \emph{$c$-stable configurations} are defined as follows.

\begin{definition}[$c$-legitimate configuration]
A configuration $\rho$ is $c$-legitimate for \emph{spec} if every $c$-correct process $v$ satisfies $spec(v)$.
\end{definition}

\begin{definition}[$c$-stable configuration]
A configuration $\rho$ is $c$-stable if every $c$-correct process never changes the values of its O-variables as long as Byzantine processes make no action.
\end{definition}

Roughly speaking, the aim of self-stabilization is to guarantee that a distributed system eventually reaches a $c$-legitimate and $c$-stable configuration. However, a self-stabilizing system can be disturbed by Byzantine processes after reaching a $c$-legitimate and $c$-stable configuration. The \emph{$c$-disruption} represents the period where $c$-correct processes are disturbed by Byzantine processes and is defined as follows 

\begin{definition}[$c$-disruption]
A portion of execution $e=\rho_0,\rho_1,\ldots,\rho_t$ ($t>1$) is a $c$-disruption if and only if the following holds:
\begin{enumerate}
\item $e$ is finite,
\item $e$ contains at least one action of a $c$-correct process for changing the value of an O-variable,
\item $\rho_0$ is $c$-legitimate for \emph{spec} and $c$-stable, and
\item $\rho_t$ is the first configuration after $\rho_0$ such that $\rho_t$ is $c$-legitimate for \emph{spec} and $c$-stable.
\end{enumerate}
\end{definition}

Now we can define a self-stabilizing protocol such that Byzantine processes may only impact processes outside the containment radius a bounded number of times, even if Byzantine processes execute an infinite number of actions.

\begin{definition}[$(t,k,c,f)$-time contained configuration]
A configuration $\rho_0$ is $(t,k,c,f)$-time contained for \emph{spec} if given at most $f$ Byzantine processes, the following properties are satisfied:
\begin{enumerate}
\item $\rho_0$ is $c$-legitimate for \emph{spec} and $c$-stable,
\item every execution starting from $\rho_0$ contains a $c$-legitimate configuration for \emph{spec} after which the values of all the O-variables of $c$-correct processes remain unchanged (even when Byzantine processes make actions repeatedly and forever), 
\item every execution starting from $\rho_0$ contains at most $t$ $c$-disruptions, and 
\item every execution starting from $\rho_0$ contains at most $k$ actions of changing the values of O-variables for each $c$-correct process.
\end{enumerate}
\end{definition}

\begin{definition}[$(t,c,f)$-strongly stabilizing protocol]
A protocol $A$ is $(t,c,f)$-strongly stabilizing if and only if starting from any arbitrary configuration, every execution involving at most $f$ Byzantine processes contains a $(t,k,c,f)$-time contained configuration that is reached after at most $l$ rounds. Parameters $l$ and $k$ are respectively the $(t,c,f)$-stabilization time and the $(t,c,f)$-process-disruption times of $A$.
\end{definition}

Note that a $(t,k,c,f)$-time contained configuration is a $(c,f)$-contained configuration when $t=k=0$, and thus, $(t,k,c,f)$-time contained configuration is a generalization (relaxation) of a $(c,f)$-contained configuration. Thus, a strongly stabilizing protocol is weaker than a strictly stabilizing one (as processes outside the containment radius may take incorrect actions due to Byzantine influence). However, a strongly stabilizing protocol is stronger than a classical self-stabilizing one (that may never meet their specification in the presence of Byzantine processes).

The parameters $t$, $k$ and $c$ are introduced to quantify the strength of fault containment, we do not require each process to know the values of the parameters.

\paragraph{Topology-aware Byzantine resilience} We saw previously that there exist a number of impossibility results on strict stabilization due to the notion of $r$-restrictive specifications. To circumvent this impossibility result, we describe here another weaker notion than the strict stabilization: the \emph{topology-aware strict stabilization} (denoted by TA strict stabilization for short) introduced by \cite{DMT10ca}. Here, the requirement to the containment radius is relaxed, \emph{i.e.} the set of processes which may be disturbed by Byzantine ones is not reduced to the union of $c$-neighborhood of Byzantine processes (\emph{i.e.} the set of processes at distance at most $c$ from a Byzantine process) but can be defined depending on the graph topology and Byzantine processes location.

In the following, we give formal definition of this new kind of Byzantine containment. From now, $B$ denotes the set of Byzantine processes and $S_B$ (which is function of $B$) denotes a subset of $V$ (intuitively, this set gathers all processes which may be disturbed by Byzantine processes).

\begin{definition}[$S_{B}$-correct node]
A node is \emph{$S_{B}$-correct} if it is a correct node (\emph{i.e.} not Byzantine) which not belongs to $S_{B}$.
\end{definition}

\begin{definition}[$S_{B}$-legitimate configuration]
A configuration $\rho$ is \emph{$S_{B}$-legitimate} for $spec$ if every $S_{B}$-correct node $v$ is legitimate for $spec$ (\emph{i.e.} if $spec(v)$ holds).
\end{definition}

\begin{definition}[$(S_{B},f)$-topology-aware containment]
\label{def:SfTAcontained}
A configuration $\rho_{0}$ is \emph{$(S_{B},f)$-topology-aware contained} for specification $spec$ if, given at most $f$ Byzantine processes, in any execution $e=\rho_0,\rho_1,\ldots$, every configuration is $S_{B}$-legitimate and every $S_B$-correct process never changes its O-variables. 
\end{definition}

The parameter $S_{B}$ of Definition~\ref{def:SfTAcontained} refers to the \emph{containment area}. Any process which belongs to this set may be infinitely disturbed by Byzantine processes. The parameter $f$ refers explicitly to the number of Byzantine processes.

\begin{definition}[$(S_{B},f)$-topology-aware strict stabilization]
\label{def:SfTAStrictstabilizing}
A protocol is \emph{$(S_{B},f)$-topology-aware strictly stabilizing} for specification $spec$ if, given at most $f$ Byzantine processes, any execution $e=\rho_0,\rho_1,\ldots$ contains a configuration $\rho_i$ that is $(S_{B},f)$-topology-aware contained for $spec$.
\end{definition}

Note that, if $B$ denotes the set of Byzantine processes and $S_{B}=\left\{v\in V|\underset{b\in B}{min}\left(d(v,b)\right)\leq c\right\}$, then a $(S_{B},f)$-topology-aware strictly stabilizing protocol is a $(c,f)$-strictly stabilizing protocol. Then, the concept of topology-aware strict stabilization is a generalization of the strict stabilization. However, note that a TA strictly stabilizing protocol is stronger than a classical self-stabilizing protocol (that may never meet their specification in the presence of Byzantine processes). The parameter $S_{B}$ is introduced to quantify the strength of fault containment, we do not require each process to know the actual definition of the set.

Similarly to topology-aware strict stabilization, we can weaken the notion of strong stabilization using the notion of containment area. This idea was introduced by \cite{DMT10cd}. We recall in the following the formal definition of this concept.

\begin{definition}[$S_B$-stable configuration]
A configuration $\rho$ is $S_B$-stable if every $S_B$-correct process never changes the values of its O-variables as long as Byzantine processes make no action.
\end{definition}

\begin{definition}[$S_{B}$-TA-disruption]
A portion of execution $e=\rho_0,\rho_1,\ldots,\rho_t$ ($t>1$) is a $S_{B}$-TA-disruption if and only if the followings hold:
\begin{enumerate}
\item $e$ is finite,
\item $e$ contains at least one action of a $S_{B}$-correct process for changing the value of an O-variable,
\item $\rho_0$ is $S_{B}$-legitimate for $spec$ and $S_B$-stable, and
\item $\rho_t$ is the first configuration after $\rho_0$ such that $\rho_t$ is $S_{B}$-legitimate for $spec$ and $S_B$-stable.
\end{enumerate}
\end{definition}

\begin{definition}[$(t,k,S_{B},f)$-TA time contained configuration]
A configuration $\rho_0$ is $(t,k,S_{B},$ $f)$-TA time contained for \emph{spec} if given at most $f$ Byzantine processes, the following properties are satisfied:
\begin{enumerate}
\item $\rho_0$ is $S_{B}$-legitimate for \emph{spec} and $S_B$-stable,
\item every execution starting from $\rho_0$ contains a $S_B$-legitimate configuration for \emph{spec} after which the values of all the O-variables of $S_B$-correct processes remain unchanged (even when Byzantine processes make actions repeatedly and forever), 
\item every execution starting from $\rho_0$ contains at most $t$ $S_B$-TA-disruptions, and 
\item every execution starting from $\rho_0$ contains at most $k$ actions of changing the values of O-variables for each $S_B$-correct process.
\end{enumerate}
\end{definition}

\begin{definition}[$(t,S_{B},f)$-TA strongly stabilizing protocol]
A protocol $A$ is $(t,S_{B},f)$-TA\\ strongly stabilizing if and only if starting from any arbitrary configuration, every execution involving at most $f$ Byzantine processes contains a $(t,k,S_{B},f)$-TA-time contained configuration that is reached after at most $l$ rounds of each $S_{B}$-correct node. Parameters $l$ and $k$ are respectively the $(t,S_{B},f)$-stabilization time and the $(t,S_{B},f)$-process-disruption time of $A$.
\end{definition}

\subsection{Maximum Metric Tree Construction}

In this work, we deal with maximum (routing) metric trees as defined in \cite{GS03j}. Informally, the goal of a routing protocol is to construct a tree that simultaneously maximizes the metric values of all of the nodes with respect to some total ordering $\prec$. In the following, we recall all definitions and notations introduced in \cite{GS03j}. 

\begin{definition}[Routing metric]
A \emph{routing metric} (or just \emph{metric}) is a five-tuple $(M,W,met,mr,$ $\prec)$ where:
\begin{enumerate}
\item $M$ is a set of metric values,
\item $W$ is a set of edge weights,
\item $met$ is a metric function whose domain is $M\times W$ and whose range is $M$,
\item $mr$ is the maximum metric value in $M$ with respect to $\prec$ and is assigned to the root of the system,
\item $\prec$ is a less-than total order relation over $M$ that satisfies the following three conditions for arbitrary metric values $m$, $m'$, and $m''$ in $M$:
\begin{enumerate}
\item irreflexivity: $m\not\prec m$,
\item transitivity : if $m\prec m'$ and $m'\prec m''$ then $m\prec m''$,
\item totality: $m\prec m'$ or $m'\prec m$ or $m=m'$.
\end{enumerate}
\end{enumerate}
Any metric value $m\in M\setminus\{mr\}$ satisfies the \emph{utility condition} (that is, there exist $w_0,\ldots,w_{k-1}$ in $W$ and $m_0=mr,m_1,\ldots,m_{k-1},m_{k}=m$ in $M$ such that $\forall i\in\{1,\ldots,k\},m_i=met(m_{i-1},w_{i-1})$).
\end{definition}

For instance, we provide the definition of four classical metrics with this model: the shortest path metric ($\mathcal{SP}$), the flow metric ($\mathcal{F}$), and the reliability metric ($\mathcal{R}$). Note also that we can modelise the construction of a spanning tree with no particular constraints in this model using the metric $\mathcal{NC}$ described below and the construction of a BFS spanning tree using the shortest path metric ($\mathcal{SP}$) with $W_1=\{1\}$ (we denoted this metric by $\mathcal{BFS}$ in the following).

\[\begin{array}{rclrcl}
\mathcal{SP}&=&(M_1,W_1,met_1,mr_1,\prec_1)&\mathcal{F}&=&(M_2,W_2,met_2,mr_2,\prec_2)\\
\text{where}& & M_1=\mathbb{N}&\text{where}& & mr_2\in\mathbb{N}\\
&& W_1=\mathbb{N}&&& M_2=\{0,\ldots,mr_2\}\\
&& met_1(m,w)=m+w&&& W_2=\{0,\ldots,mr_2\}\\
&& mr_1=0&&& met_2(m,w)=min\{m,w\}\\
&& \prec_1 \text{ is the classical }>\text{ relation}&&& \prec_2 \text{ is the classical }<\text{ relation}
\end{array}\]
\[\begin{array}{rclrcl}
\mathcal{R}&=&(M_3,W_3,met_3,mr_3,\prec_3) & \mathcal{NC}&=&(M_4,W_4,met_4,mr_4,\prec_4)\\
\text{where}& & M_3=[0,1] & \text{where}& & M_4=\{0\} \\
&& W_3=[0,1] &&& W_4=\{0\}\\
&& met_3(m,w)=m*w &&& met_4(m,w)=0\\
&& mr_3=1 &&& mr_4=0\\
&& \prec_3 \text{ is the classical }<\text{ relation} &&& \prec_4 \text{ is the classical }<\text{ relation}
\end{array}\]

\begin{definition}[Assigned metric]
An \emph{assigned metric} over a system $S$ is a six-tuple $(M,W,met,$ $mr,\prec,wf)$ where $(M,W,met,mr,\prec)$ is a metric and $wf$ is a function that assigns to each edge of $S$ a weight in $W$.
\end{definition}

Let a rooted path (from $v$) be a simple path from a process $v$ to the root $r$. The next set of definitions are with respect to an assigned metric $(M,W,met,mr,\prec,wf)$ over a given system $S$.

\begin{definition}[Metric of a rooted path]
The \emph{metric of a rooted path} in $S$ is the prefix sum of $met$ over the edge weights in the path and $mr$.
\end{definition}

For example, if a rooted path $p$ in $S$ is $v_k,\ldots,v_0$ with $v_0=r$, then the metric of $p$ is $m_k=met(m_{k-1},wf(\{v_k,v_{k-1}\}))$ with $\forall i\in\{1,\ldots,k-1\},m_i=met(m_{i-1},wf(\{v_i,v_{i-1}\})$ and $m_0=mr$.

\begin{definition}[Maximum metric path]
A rooted path $p$ from $v$ in $S$ is called a \emph{maximum metric path} with respect to an assigned metric if and only if for every other rooted path $q$ from $v$ in $S$, the metric of $p$ is greater than or equal to the metric of $q$ with respect to the total order $\prec$. 
\end{definition}
 
\begin{definition}[Maximum metric of a node]
The \emph{maximum metric of a node} $v\neq r$ (or simply \emph{metric value} of $v$) in $S$ is defined by the metric of a maximum metric path from $v$. The maximum metric of $r$ is $mr$. 
\end{definition}

\begin{definition}[Maximum metric tree]
A spanning tree $T$ of $S$ is a \emph{maximum metric tree} with respect to an assigned metric over $S$ if and only if every rooted path in $T$ is a maximum metric path in $S$ with respect to the assigned metric.
\end{definition}

The goal of the work of \cite{GS03j} is the study of metrics that always allow the construction of a maximum metric tree. More formally, the definition follows.

\begin{definition}[Maximizable metric]
A metric is \emph{maximizable} if and only if for any assignment of this metric over any system $S$, there is a maximum metric tree for $S$ with respect to the assigned metric.
\end{definition}

Given a maximizable metric $\mathcal{M}=(M,W,mr,met,\prec)$, the aim of this work is to study the construction of a maximum metric tree with respect to $\mathcal{M}$ which spans the system in a self-stabilizing way in a system subject to permanent Byzantine faults (but we must assume that the root process is never a Byzantine one). It is obvious that these Byzantine processes may disturb some correct processes. It is why we relax the problem in the following way: we want to construct a maximum metric forest with respect to $\mathcal{M}$. The root of any tree of this forest must be either the real root or a Byzantine process. 

Each process $v$ has three O-variables: a pointer to its parent in its tree ($prnt_v\in N_v\cup\{\bot\}$), a level which stores its current metric value ($level_v\in M$) and an integer which stores a distance ($dist_v\in\mathbb{N}$). Obviously, Byzantine process may disturb (at least) their neighbors. We use the following specification of the problem.

We introduce new notations as follows. Given an assigned metric $(M,W,met,mr,\prec,wf)$ over the system $S$ and two processes $u$ and $v$, we denote by $\mu(u,v)$ the maximum metric of node $u$ when $v$ plays the role of the root of the system. If $u$ and $v$ are neighbors, we denote by $w_{u,v}$ the weight of the edge $\{u,v\}$ (that is, the value of $wf(\{u,v\})$).

\begin{definition}[$\mathcal{M}$-path]
Given an assigned metric $\mathcal{M}=(M,W,mr,met,\prec,wf)$ over a system $S$, a path $(v_0,\ldots,v_k)$ ($k\geq 1$) of $S$ is a \emph{$\mathcal{M}$-path} if and only if:
\begin{enumerate}
\item $prnt_{v_0}=\bot$, $level_{v_0}=mr$, $dist_{v_0}=0$, and $v_0\in B\cup\{r\}$,
\item $\forall i\in\{1,\ldots,k\}, prnt_{v_i}=v_{i-1}$ and $level_{v_i}=met(level_{v_{i-1}},w_{v_i,v_{i-1}})$,
\item $\forall i\in\{1,\ldots,k\}, met(level_{v_{i-1}},w_{v_i,v_{i-1}})=\underset{u\in N_v}{max_\prec}\{met(level_{u},w_{v_i,u})\}$,
\item $\forall i\in\{1,\ldots,k\}, dist_{v_i}=legal\_dist_{v_{i-1}}$ with $\forall u\in N_v, legal\_dist_u=\begin{cases} dist_{u}+1 \mbox{ if } level_{v}=level_{u}\\ 0 \mbox{ otherwise}\end{cases}$, and
\item $level_{v_{k}}=\mu(v_k,v_0)$.
\end{enumerate}
\end{definition}

We define the specification predicate $spec(v)$ of the maximum metric tree construction with respect to a maximizable metric $\mathcal{M}$ as follows.
\[spec(v) : \begin{cases}
 prnt_v = \bot \text{ and }  level_v = mr, \text{ and } dist_v=0 \text{ if } v \text{ is the root } r \\
 \text{there exists a }\mathcal{M}\text{-path } (v_0,\ldots,v_k) \text{ such that } v_k=v \text{ otherwise}
\end{cases}\]

\subsection{Previous results}

In this section, we summarize known results about maximum metric tree construction. The first interesting result about maximizable metrics is due to \cite{GS03j} that provides a fully characterization of maximizable metrics as follow.

\begin{definition}[Boundedness]
A metric $(M,W,met,mr,\prec)$ is \emph{bounded} if and only if: $\forall m \in M,\forall w\in W, met(m,w)\prec m \text{ or }met(m,w)=m$
\end{definition}

\begin{definition}[Monotonicity]
A metric $(M,W,met,mr,\prec)$ is \emph{monotonic} if and only if: $\forall (m,$ $m')\in M^2,\forall w\in W, m\prec m'\Rightarrow (met(m,w)\prec met(m',w)\text{ or }met(m,w)=met(m',w))$
\end{definition}

\begin{theorem}[Characterization of maximizable metrics \cite{GS03j}]
A metric is maximizable if and only if this metric is bounded and monotonic.
\end{theorem}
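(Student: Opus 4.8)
The plan is to prove the two implications separately, using throughout the explicit definitions of boundedness and monotonicity together with the utility condition, which lets us realize any reachable metric value at a chosen vertex by attaching to it a suitable path from the root.

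\textbf{Necessity (maximizable $\Rightarrow$ bounded and monotonic).} I would argue by contraposition, exhibiting for each violated property an assigned metric on a small system that admits no maximum metric tree. If the metric is not bounded, then by totality there exist $m\in M$ and $w\in W$ with $m\prec met(m,w)$. I realize the value $m$ at two distinct vertices $u$ and $v$ (each carrying a private root-path of metric $m$ built via the utility condition) and join $u$ and $v$ by an edge of weight $w$. Since $met(\cdot,w)$ strictly increases $m$, the best rooted path of $u$ goes through $v$ and that of $v$ goes through $u$, so any maximum metric tree would need $prnt_u=v$ and $prnt_v=u$ simultaneously, contradicting acyclicity. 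If the metric is not monotonic, then by totality there are $m_1\prec m_2$ and $w$ with $met(m_2,w)\prec met(m_1,w)$. I realize two rooted paths to a single vertex $y$, one of metric $m_2$ (so that $\mu(y,r)=m_2$) and one of metric $m_1$, and hang a degree-one vertex $z$ off $y$ through an edge of weight $w$. Then the maximum metric of $z$ equals $met(m_1,w)$ and is attained \emph{only} through the suboptimal ($m_1$) route to $y$. A maximum metric tree would be forced to route $z$ through $y$ with $y$ at level $m_1$, yet the prefix up to $y$ must itself be a maximum metric path, forcing $y$ to level $m_2\succ m_1$ — a contradiction.

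\textbf{Sufficiency (bounded and monotonic $\Rightarrow$ maximizable).} I would fix an arbitrary assigned metric over an arbitrary system $S$ and construct a maximum metric tree explicitly. Write $\mu(v)$ for the maximum metric of $v$ (with root $r$), defined as the $\prec$-maximum over the finite set of simple rooted paths from $v$ of their path metrics; totality guarantees this maximum exists. The core step is the optimality equation $\mu(r)=mr$ and, for $v\neq r$, $\mu(v)=\max_{\prec}\{met(\mu(u),w_{u,v}):u\in N_v\}$. Monotonicity yields that extending a higher-metric prefix produces a weakly higher metric, hence that a prefix of a maximum metric path is itself a maximum metric path (the principle of optimality); boundedness guarantees $met(m,w)\preceq m$, so revisiting vertices never helps and restricting to simple paths loses nothing. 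I would then set $prnt_r=\bot$ and, for each other $v$, choose $prnt_v$ among the neighbours attaining the maximum above. Boundedness gives $\mu(v)=met(\mu(prnt_v),w_{prnt_v,v})\preceq\mu(prnt_v)$, so metric values never increase down the pointers; the only remaining threat of a cycle is a set of vertices all sharing one metric value, which I break using the distance variable $dist$, selecting among optimal predecessors one of minimum distance to the root so that $dist$ strictly increases from parent to child within a level band. A final induction along each rooted path of the resulting tree, invoking the optimality equation at every edge, shows each such path is a maximum metric path, so the construction is a maximum metric tree.

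I expect the main obstacle to lie in the sufficiency direction, specifically in establishing the principle of optimality rigorously: replacing a prefix of a maximum metric path by a higher-metric rooted path to the same intermediate vertex and then following the original suffix requires monotonicity, but one must also guarantee that the resulting walk can be taken simple, which is exactly where boundedness is used to shortcut any repeated vertex without loss of metric. The companion difficulty is proving acyclicity of the parent pointers when many vertices carry equal metric values, for which the metric alone is insufficient and the explicit distance-based tie-break is needed. By contrast, the necessity gadgets are conceptually short once the utility condition is invoked to plant the required metric values.
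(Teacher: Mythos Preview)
The paper does not prove this theorem. It is quoted from \cite{GS03j} as a known characterization and is used only as background for the subsequent impossibility and possibility results; no argument for it appears anywhere in the present paper. There is therefore nothing to compare your attempt against here.

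That said, your sketch is broadly in the spirit of the original Gouda--Schneider argument: gadget constructions for necessity, and a greedy ``Bellman optimality'' construction for sufficiency. Two points deserve tightening if you intend this as a self-contained proof. First, in the unboundedness gadget you should make explicit that the two ``private'' root-paths realizing the value $m$ at $u$ and at $v$ are vertex-disjoint (except at $r$); otherwise the computation of $\mu(u)$ and $\mu(v)$ may not reduce to the two alternatives you describe. Second, your acyclicity argument in the sufficiency direction is not quite complete: once boundedness forces all vertices on a putative pointer-cycle to share the same $\mu$-value, you appeal to a ``minimum distance to the root'' tie-break, but you have not said which distance (hop distance in $S$, or a distance in the tree under construction) and have not verified that this quantity is strictly monotone along parent pointers within a level band. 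A clean fix is to order vertices by any fixed total order refining $\mu$ and, among optimal predecessors, always pick one that is strictly earlier in that order; alternatively, process vertices by nonincreasing $\mu$ and, within a tie class, by increasing BFS distance from the already-built partial tree. Either choice makes the acyclicity argument go through without circularity.
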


Secondly, \cite{GS99c} provides a self-stabilizing protocol to construct a maximum metric tree with respect to any maximizable metric. Now, we focus on self-stabilizating solutions resilient to Byzantine faults. Following discussion of Section 2, it is obvious that there exists no strictly stabilizing protocol for this problem. If we consider the weaker notion of topology-aware strict stabilization, \cite{DMT10ca} defines the best containment area as:

\[S_{B}=\left\{v\in V\setminus B\left|\mu(v,r)\preceq max_\prec\{\mu(v,b),b\in B\}\right.\right\}\setminus\{r\}\]

Intuitively, $S_B$ gathers correct processes that are closer (or at equal distance) from a Byzantine process than the root according to the metric. Moreover, \cite{DMT10ca} proves that the algorithm introduced for the maximum metric spanning tree construction in \cite{GS99c} performed this optimal containment area. More formally, \cite{DMT10ca} proves the following results.

\begin{theorem}[\cite{DMT10ca}]\label{th:impTAstrict}
Given a maximizable metric $\mathcal{M}=(M,W,mr,met,\prec)$, even under the central daemon, there exists no $(A_B,1)$-TA-strictly stabilizing protocol for maximum metric spanning tree construction with respect to $\mathcal{M}$ where $A_B\varsubsetneq S_B$.
\end{theorem}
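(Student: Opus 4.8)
The plan is to argue by contradiction, producing a single execution in which one carefully chosen correct process is forced to alter its O-variables infinitely often, which is incompatible with topology-aware strict stabilization. Assume some protocol $P$ were $(A_B,1)$-TA-strictly stabilizing with $A_B\varsubsetneq S_B$ on some assigned system. Since $A_B\varsubsetneq S_B$, fix a process $v\in S_B\setminus A_B$; as $v$ is correct and lies outside $A_B$, it is $A_B$-correct, so from any $(A_B,1)$-TA-contained configuration $v$ must satisfy $spec(v)$ in every subsequent configuration and never again change $(prnt_v,level_v,dist_v)$, whatever the single Byzantine process does. By definition of $S_B$ there is (with $f=1$) a Byzantine process $b$ with $\mu(v,r)\preceq\mu(v,b)$. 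The whole proof then reduces to showing that $b$ alone can keep the O-variables of $v$ moving forever, contradicting the freezing just described.

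First I would set up two reference behaviours for $b$. In the \emph{honest} behaviour $b$ holds the state it would have as an ordinary node of a maximum metric tree rooted at the true root $r$; in the \emph{pseudo-root} behaviour $b$ freezes $prnt_b=\bot$, $level_b=mr$, $dist_b=0$, pretending to be a root. Because $\mathcal{M}$ is maximizable, hence bounded and monotonic, a genuine maximum metric tree exists in each case, so there is a corresponding configuration satisfying $spec$ at every correct process, with $level_v=\mu(v,r)$ in the first and $level_v=\mu(v,b)$ in the second. Using strong fairness and $k$-boundedness of the daemon, while $b$ maintains one fixed behaviour long enough the self-stabilization/containment guarantee forces $v$ to the unique value compatible with $spec(v)$ for that behaviour. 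When $\mu(v,r)\prec\mu(v,b)$ these two values of $level_v$ differ, so each time $b$ switches behaviour $v$ is forced to change an O-variable, and alternating the two behaviours forever yields the desired infinite sequence of changes at $v$.

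The delicate point, and the \emph{main obstacle}, is the equality case $\mu(v,r)=\mu(v,b)$, where the two reference configurations agree on $level_v$. Here I would exploit the remaining O-variables, the parent pointer $prnt_v$ and, crucially, the distance $dist_v$. Along a maximum metric path the $dist$ field counts the length of the current metric plateau back to the last strict increase, so the $r$-rooted and $b$-rooted witnesses assign $v$ incompatible pairs $(prnt_v,dist_v)$, and condition~4 of the $\mathcal{M}$-path definition forbids $v$ from keeping a stale value once the plateau of its chosen neighbour shifts. Establishing this cleanly also requires showing that the disturbance genuinely propagates from $b$ to $v$: I would verify, using boundedness and monotonicity, that every node on the maximum metric path realizing $\mu(v,b)$ again lies in $S_B$, so that the intermediate correct processes are themselves disturbable and can relay the pseudo-root values of $b$ to a neighbour of $v$ without any $A_B$-correct process having to move. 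This closure property, together with a correct accounting of the $dist$ plateaus in the tie case, is where the real work lies.

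Finally I would assemble the oscillating execution explicitly under the central daemon: starting from any configuration, repeat phases in which $b$ holds one behaviour while single correct processes are scheduled one at a time, fairness guaranteeing progress, until $v$ reaches the corresponding target, then switch the behaviour of $b$. Because consecutive targets differ in some O-variable of $v$, the execution contains infinitely many O-variable changes of the $A_B$-correct process $v$; hence no configuration of this execution is $(A_B,1)$-TA-contained, so $P$ cannot be $(A_B,1)$-TA-strictly stabilizing. Since the construction never uses simultaneity, the impossibility holds even under the central daemon, as stated.
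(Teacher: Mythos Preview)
The paper does not supply a proof of this theorem; it is quoted from \cite{DMT10ca} as a prior result, so there is nothing here to compare your argument against directly. The closest thing in the paper is the proof of the analogous lower bound for TA \emph{strong} stabilization, Theorem~\ref{th:impTAstrong}, and the technique there is quite different from yours: it does not reason about an arbitrary system but constructs one explicit symmetric instance (the six vertices $r,u,u',v,v',b$ with tailored weights $w,w'$), lets $b$ mirror $r$ exactly, and uses the resulting symmetry of the execution to force $v,v'$ onto the $b$-side before switching $b$ to correct behaviour.

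Your route---pick any system with $A_B\varsubsetneq S_B$, pick $v\in S_B\setminus A_B$, and oscillate $b$---has a genuine gap in precisely the place you flag as the main obstacle. In the tie case $\mu(v,r)=\mu(v,b)$, nothing you wrote forces $v$ to move: if $v$ sits on an $\mathcal M$-path to $r$ whose prefix never touches $b$, then $spec(v)$ holds with $level_v=\mu(v,r)$ and with $(prnt_v,dist_v)$ fixed by that path, independently of $b$'s state. Condition~4 of the $\mathcal M$-path only ties $dist_v$ to the plateau of the \emph{chosen} parent, and a black-box protocol is under no obligation to make $v$ choose a neighbour on the $b$-side. Your propagation argument has the same flaw one level up: you check that the intermediate vertices on the $v$--$b$ path lie in $S_B$, but what you would need is that they lie in $A_B$ (so that they are \emph{not} $A_B$-correct and are allowed to relay $b$'s values); from $A_B\varsubsetneq S_B$ you cannot conclude this. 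Symmetry is exactly the device that removes the freedom to ``stay rooted at $r$'' in the tie case, which is why the paper's proof of Theorem~\ref{th:impTAstrong} builds a bespoke symmetric system rather than arguing generically; you should do the same here.
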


\begin{theorem}[\cite{DMT10ca}]
Given a maximizable metric $\mathcal{M}=(M,W,mr,met,\prec)$, the protocol of \cite{GS99c} is a $(S_B,n-1)$-TA strictly stabilizing protocol for maximum metric spanning tree construction with respect to $\mathcal{M}$.
\end{theorem}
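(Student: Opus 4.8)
The plan is to establish the two ingredients demanded by Definition~\ref{def:SfTAStrictstabilizing}: a \emph{closure} part, showing that a suitable family of configurations is $(S_B,f)$-topology-aware contained (Definition~\ref{def:SfTAcontained}) for every $f\le n-1$, and a \emph{convergence} part, showing that the protocol of \cite{GS99c} reaches such a configuration from an arbitrary initial configuration within a finite number of rounds of each $S_B$-correct node. Throughout I treat each Byzantine process as a potential alternative root: since it may advertise any level (in particular $mr$), the best metric it can ever propagate to a node $v$ is bounded by $\mu(v,b)$. For a correct node $u$ I abbreviate by $M(u)=\max_\prec(\{\mu(u,r)\}\cup\{\mu(u,b):b\in B\})$ the best metric value $u$ can legitimately sustain from any source; by the very definition of $S_B$, an $S_B$-correct node $v$ satisfies $M(v)=\mu(v,r)$, and moreover $\mu(v,r)\succ\mu(v,b)$ for every $b\in B$.

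The structural backbone is the claim that \emph{every maximum metric path from an $S_B$-correct node $v$ to $r$ uses only $r$ and $S_B$-correct nodes}. I would prove it by contradiction, using the optimal-substructure property of maximizable metrics (any prefix toward $r$ of a maximum metric path is again a maximum metric path, a consequence of boundedness and monotonicity). If an intermediate node $v_i$ were Byzantine or in $S_B$, then $\mu(v_i,r)\preceq\mu(v_i,b)$ for some $b\in B$ (trivially so when $v_i=b$, as $mr\succeq\mu(v_i,r)$). Reapplying $met$ with the edge weights of the suffix from $v_i$ to $v$ and invoking monotonicity yields a path from $v$ to the root $b$ of metric $\succeq\mu(v,r)$, whence $\mu(v,b)\succeq\mu(v,r)$, contradicting $v\notin S_B$. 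This guarantees that the supplier of an $S_B$-correct node along any optimal path is again $S_B$-correct (or the genuine root), which is exactly what drives the induction below.

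For closure I would fix a configuration in which every $S_B$-correct node $v$ carries $level_v=\mu(v,r)$ together with a parent and a distance realizing an $\mathcal{M}$-path, and in which the over-estimation invariant $level_u\preceq M(u)$ holds at every correct node $u$. The heart is a single inequality: for an $S_B$-correct $v$ and \emph{any} neighbor $u$, one has $met(level_u,w_{v,u})\preceq\mu(v,r)$, and the inequality is strict whenever $u$ is Byzantine. When $u$ is Byzantine this follows from $met(level_u,w_{v,u})\preceq met(mr,w_{v,u})\preceq\mu(v,u)\prec\mu(v,r)$; when $u$ is correct I split on whether $level_u\preceq\mu(u,r)$ or $level_u\preceq\mu(u,b)$, and in each case bound $met(level_u,w_{v,u})$ by the metric of a genuine path from $v$ to the corresponding source. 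Consequently no Byzantine neighbor can ever offer an $S_B$-correct node a metric matching or beating $\mu(v,r)$, while every correct neighbor able to realize $\mu(v,r)$ is itself $S_B$-correct (by the structural claim) and already stable; using the deterministic tie-breaking of \cite{GS99c} on $level$ and then $dist$, such a node is never enabled to modify $prnt_v$, $level_v$ or $dist_v$. This is precisely $(S_B,f)$-topology-aware containment.

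Convergence would then follow by a lexicographic induction on the target pair $(\mu(v,r),\text{target distance})$ of the $S_B$-correct nodes, taken in decreasing order of $\prec$ and increasing distance. The root stabilizes to $(\bot,mr,0)$ in a constant number of rounds; the structural claim ensures that each $S_B$-correct node's optimal supplier has strictly smaller rank, hence has already stabilized, after which strong fairness and $k$-boundedness force the node to adopt its target values within a bounded number of rounds. The main obstacle is the robustness of the invariant $level_u\preceq M(u)$ against an adaptive adversary acting infinitely often: one must show that phantom levels not grounded at a genuine source cannot persist, which is where the $dist$ component is essential — in any cycle of correct nodes mutually justifying a common level, the $legal\_dist$ rule forces the distances to grow without bound, an impossibility. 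Establishing this invariant and proving it is preserved under arbitrary interleavings of Byzantine and correct actions is the technically delicate step; the remainder is bookkeeping with the boundedness and monotonicity axioms.
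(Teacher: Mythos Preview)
The statement you are proving is \emph{not} proved in the present paper: it is quoted verbatim from \cite{DMT10ca} in the ``Previous results'' subsection. The paper gives no argument of its own for it. What the paper \emph{does} prove is the analogous Theorem~\ref{th:SSMAXstrict} for the new protocol $\mathcal{SSMAX}$, and it opens that section by stating ``This proof is similar to the one of \cite{DMT10ca} but we must modify it to take in account modifications of the protocol.'' So the closest thing to a ``paper's own proof'' against which your proposal can be compared is the chain Lemmas~\ref{lem:goodProperty}--\ref{lem:convergenceLCMax} leading to Theorem~\ref{th:SSMAXstrict}.

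Against that benchmark, your outline is correct and close in spirit, but the packaging differs. Your over-estimation invariant $level_u\preceq M(u)$ is exactly the predicate $IM_{m_k}$, and your structural claim (maximum metric paths out of $S_B$-correct nodes stay among $S_B$-correct nodes) is a geometric reformulation of Lemma~\ref{lem:goodProperty}. The paper does not prove that single structural claim and then induct lexicographically on $(\mu(v,r),\text{dist})$; instead it stratifies by metric value, defining the nested sets $\mathcal{LC}_{m_0}\supseteq\mathcal{LC}_{m_1}\supseteq\cdots\supseteq\mathcal{LC}_{m_k}=\mathcal{LC}$ and proving closure of each $\mathcal{LC}_{m_i}$ (Lemma~\ref{lem:LCmiclosed}) and convergence $\mathcal{LC}_{m_i}\rightarrow\mathcal{LC}_{m_{i+1}}$ (Lemmas~\ref{lem:LCmitodistD}--\ref{lem:LCmitoLCmi+1}). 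This buys a cleaner treatment of the cycle-breaking step: rather than your informal ``distances would grow without bound, an impossibility,'' the paper isolates the $dist$ mechanism in Lemma~\ref{lem:LCmitodistD} (all processes in $I_{m_i}$ still claiming level $m_i$ are driven to $dist=D$) and Lemma~\ref{lem:distDtolevelmi} (after which their level drops below $m_i$). Your route is slightly more direct conceptually; the paper's stratification makes the interaction with the bounded $dist$ variable and the round-robin $choose$ function more explicit, which is where the actual work against an adaptive Byzantine adversary lies.
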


Some other works try to circumvent the impossibility result of strict stabilization using the concept ot strong stabilization but do not provide results for any maximizable metric. Indeed, \cite{DMT11j} proves the following result about spanning tree.

\begin{theorem}[\cite{DMT11j}]\label{th:possstrongNC}
There exists a $(t,0,n-1)$-strongly stabilizing protocol for maximum metric spanning tree construction with respect to $\mathcal{NC}$ (that is, for a spanning tree with no particular constraints) with a finite $t$.
\end{theorem}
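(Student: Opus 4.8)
The plan is to exhibit a concrete protocol and analyze it in three stages: a description specialized to $\mathcal{NC}$, a convergence argument to a legitimate and stable configuration, and a bound on the number of disruptions. Since for $\mathcal{NC}$ we have $M=\{0\}$, the variable $level_v$ is pinned to the constant $mr$ and the metric comparisons in the definition of an $\mathcal{M}$-path become vacuous, so the whole combinatorial content lives in the pointers $prnt_v$ and the distances $dist_v$. I would take the self-stabilizing protocol of \cite{GS99c} (already reused in \cite{DMT10ca}) as the starting point: the root $r$ permanently holds $(prnt_r,level_r,dist_r)=(\bot,mr,0)$, and every other correct process selects as parent a neighbour advertising the minimum distance, copies the induced distance (that minimum plus one), and breaks ties by retaining its current parent whenever the latter still advertises the minimum (a hysteresis rule whose only role is to avoid gratuitous pointer changes). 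For $\mathcal{NC}$ a configuration is then $0$-legitimate for $spec$ exactly when the pointers of the correct processes form a forest whose roots lie in $B\cup\{r\}$ and whose distances increase by one along each branch.

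For convergence I would follow the classical analysis of breadth-first construction, treating each Byzantine process as a possible extra source: the real root is a permanent distance-$0$ source that no Byzantine action can corrupt, and the minimum-distance rule propagates consistent distances outward in waves. The only obstructions are spurious cycles and over-estimated distances inherited from the arbitrary initial configuration; both are eliminated because, among correct processes, a process holds a distance strictly larger than its chosen parent (so no cycle of correct processes survives) and any distance reaching $n$ is provably illegitimate and discarded. This should yield, after a bounded number of rounds of each correct node, a configuration that is $0$-legitimate and $0$-stable, \emph{i.e.} a candidate $(t,k,0,n-1)$-time contained configuration.

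The heart of the argument, and the step I expect to be the main obstacle, is bounding the number of $0$-disruptions by a finite $t$. I would argue by induction on the distance $\delta_v$ of a correct process $v$ to $r$ through correct processes, showing each correct process changes its O-variables only finitely often, and then sum these counts over the $n$ processes to bound $t$ and $k$. The base case is favourable: a correct neighbour of $r$ sees the incorruptible offer $dist_r=0$, a permanent floor on the minimum it computes, so its distance locks to $1$ and the hysteresis rule locks its pointer to a neighbour that permanently realises this floor. For the inductive step, once all correct processes up to distance $\delta$ are permanently locked, a correct process at distance $\delta+1$ possesses a locked correct neighbour furnishing a stable floor, and can only abandon it for a \emph{strictly} smaller offer.

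Ruling out sustained luring is exactly what distinguishes $\mathcal{NC}$ from a general maximizable metric, and is where I expect the design to be used in an essential way. Here two features must be exploited: because $M=\{0\}$ a Byzantine process cannot attract a correct one by advertising a \emph{better} metric value — every value is $mr$ — so the only lever is the distance; and admissible distances are bounded, since an offer of distance $d$ claims tree-distance $d$ from a member of $B\cup\{r\}$ and no legitimate distance reaches $n$, so only the finitely many values $\{0,\ldots,n-1\}$ are ever attractive. The genuinely delicate point is that nothing in the metric distinguishes the real root from a Byzantine process advertising $(prnt,dist)=(\bot,0)$, so a correct process may legitimately attach to such a fake source, and the threat is that several Byzantine neighbours repeatedly create and withdraw fake sources in alternation so as to keep a correct process switching forever. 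Overcoming this is the crux: I expect one must use the incorruptible floor from $r$ together with the bounded distance range to force the real root's offer eventually to dominate and to preclude re-adoption of a withdrawn source, and the main technical work is to package this into a potential, monotone across disruptions, certifying that the alternation must terminate. Establishing finiteness of $t$ in this way, and then reading off $k$ and the stabilization time $l$, completes the argument.
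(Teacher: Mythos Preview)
First, note that the paper does not prove this statement: it is quoted verbatim from \cite{DMT11j} in the ``Previous results'' subsection and is used as a black box (in particular inside the proof of Theorem~\ref{th:possStrong}). So there is no proof in the paper to compare against; your proposal is an attempt to reconstruct the result of \cite{DMT11j}.

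On the substance, there is a genuine gap. The protocol you describe --- each non-root process selects a neighbour advertising the \emph{minimum} distance and sets its own distance to that minimum plus one, with a hysteresis tie-break --- is precisely the BFS protocol, and the paper recalls (Theorem~5, from \cite{DMT10cd}) that there is \emph{no} $(t,c,1)$-strongly stabilizing protocol for the $\mathcal{BFS}$ metric for any finite $t$ and $c$. Your reduction of $\mathcal{NC}$ to a minimum-distance computation therefore imports exactly the impossibility you must avoid. Concretely, take a correct process $v$ at hop-distance $2$ from $r$ through correct processes and adjacent to a Byzantine $b$. When $b$ advertises $dist_b=0$, your rule forces $prnt_v=b$ and $dist_v=1$; when $b$ then switches to $dist_b=D$, $v$ must move to its correct neighbour with $dist_v=2$; when $b$ re-advertises $0$, this is a \emph{strictly} smaller offer, so hysteresis does not apply and $v$ switches back. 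This alternation is unbounded, so no finite $t$ exists for your protocol. The ``incorruptible floor from $r$'' you invoke is $2$, but the Byzantine offer $0$ permanently undercuts it; the floor never dominates, and the potential you hope for cannot be monotone.

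What makes $\mathcal{NC}$ different from $\mathcal{BFS}$ is that $spec$ does \emph{not} require shortest paths: for $\mathcal{NC}$ any rooted forest with roots in $B\cup\{r\}$ and hop-consistent $dist$ is legitimate. A correct protocol must exploit this slack rather than throw it away by greedily minimising distance. The mechanism actually used (both in \cite{DMT11j} and in the $\mathcal{SSMAX}$ protocol of this paper) is a \emph{fair} round-robin parent selection via the $choose_v$ function, together with a $k$-bounded daemon: a process that is repeatedly disrupted cycles through all its neighbours, is guaranteed to eventually pick a neighbour already locked onto $r$, and then stays put because nothing in the $\mathcal{NC}$ guards can dislodge it (there is no ``strictly better level'' to chase). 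Your proposal lacks this ingredient; without it the crux you correctly identify cannot be resolved.
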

 
On the other hand, regarding BFS spanning tree construction, \cite{DMT10cd} proved the following impossibility result.

\begin{theorem}[\cite{DMT10cd}]
Even under the central daemon, there exists no $(t,c,1)$-strongly stabilizing protocol for maximum metric spanning tree construction with respect to $\mathcal{BFS}$ where $t$ and $c$ are two finite integers.
\end{theorem}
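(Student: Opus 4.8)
The plan is to assume, for contradiction, that some protocol $A$ is $(t,c,1)$-strongly stabilizing for $\mathcal{BFS}$ with finite $t$ (and finite process-disruption bound $k$), and to exhibit a network together with a single Byzantine behaviour that forces one fixed correct process, sitting strictly farther than $c$ from the Byzantine process, to rewrite its O-variables infinitely often. Recall that for $\mathcal{BFS}$ the metric value of a node is exactly its distance to the root and that $max_\prec$ selects the minimum distance; the crucial feature, in contrast with $\mathcal{NC}$ (for which strong stabilization holds by Theorem~\ref{th:possstrongNC}), is that conditions~3 and~5 of the $\mathcal{M}$-path \emph{force} $level_v$ to equal the current shortest distance to an available root, so a victim cannot avoid rewriting $level_v$ whenever that distance changes.

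I would build the following network, parametrized by $c$. Take a simple path $r, x_1, \ldots, x_{D-1}, v, y_1, \ldots, y_{D'-1}, b$ with $c < D' < D$ (for instance $D'=c+1$, $D=c+2$), where $r$ is the (never faulty) root and $b$ is the unique Byzantine process. Then $d(v,r)=D$ along the $x$-side while $d(v,b)=D'>c$ along the $y$-side, so $v$ is $c$-correct, yet $v$ lies strictly closer to $b$ than to $r$ (it belongs to the containment area $S_B$ of Theorem~\ref{th:impTAstrict}). The two Byzantine behaviours I use are: (i) $b$ \emph{freezes} at $level_b=mr=0$ and $prnt_b=\bot$, i.e.\ pretends to be a second root; and (ii) $b$ freezes at a level larger than $D+D'$, making itself unattractive. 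A routine check on $spec$ (using $met(m,1)=m+1$) shows that every $c$-legitimate configuration has $level_v=D'$ and $prnt_v=y_1$ under behaviour~(i), and $level_v=D$ and $prnt_v=x_{D-1}$ under behaviour~(ii); these two values of $level_v$ differ because $D'\neq D$.

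From here I would assemble the contradicting execution. By strong stabilization any execution reaches a $(t,k,c,1)$-time contained configuration $\rho_0$. Starting from $\rho_0$ I concatenate phases, alternating behaviours (i) and (ii). In a phase using behaviour~(i), property~2 of the time contained configuration, applied to the execution that keeps $b$ frozen at $0$ forever, guarantees that a $c$-legitimate configuration is reached in finitely many steps after which $level_v$ is settled; by the forcing argument this settled value is $D'$. I keep that finite prefix, then switch $b$ to behaviour~(ii) and invoke property~2 again to reach, in finite time, a $c$-legitimate configuration with settled value $level_v=D$, and so on. Choosing the intermediate schedule round-robin over the enabled processes (with $b$ rewriting its frozen state periodically) keeps the resulting infinite schedule central, strongly fair and $k$-bounded. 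Along this execution $level_v$ takes the values $D', D, D', D, \ldots$, so the $c$-correct process $v$ changes an O-variable infinitely often; hence no configuration of the execution is followed by a suffix in which $v$ is forever unchanged. This contradicts property~2 for $\rho_0$ (and a fortiori the finiteness of $k$), so $A$ cannot be $(t,c,1)$-strongly stabilizing; since $t$ and $c$ were arbitrary finite integers, the claim follows.

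The step I expect to be delicate is the forcing argument ensuring that, while $b$ is frozen, $A$ must converge to a $c$-legitimate configuration with the announced value of $level_v$: it rests on the fact that a Byzantine process holding a fixed, root-consistent state is indistinguishable from a genuine second root, so the convergence guaranteed by strong stabilization necessarily re-optimizes $level_v$ to the shortest available distance. Making this indistinguishability precise, and handling the daemon bookkeeping (fairness and $k$-boundedness across infinitely many finite phases), is the main work; the geometric construction and the $spec$ computation are routine.
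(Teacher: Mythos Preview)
Your overall strategy is the right one, and it is essentially the approach the paper takes (in the proof of Theorem~\ref{th:necessarConditionStrong}, which specialises to $\mathcal{BFS}$): a path graph, a single Byzantine process alternating between ``root-like'' and ``non-root-like'', and a $c$-correct victim that is forced to rewrite its $level$ infinitely often. However, your forcing argument for behaviour~(i) does not go through as stated, and the asymmetry of your construction is precisely what blocks it.

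Concretely, with $D'=c+1$ every $y_j$ satisfies $d(y_j,b)=D'-j\le c$, so none of the $y_j$ are $c$-correct; $c$-legitimacy therefore says nothing about their $level$ variables. A configuration in which $level_{y_1}\ge D-1$ (for instance $level_{y_1}=D+1$, with $y_1$ pointing toward $r$ through $v$) and $level_v=D$ is perfectly $c$-legitimate even while $level_b=0$: condition~3 at $v$ holds because $\min(level_{x_{D-1}}+1,\,level_{y_1}+1)=D$, and the $\mathcal{M}$-path $(r,x_1,\ldots,x_{D-1},v)$ witnesses $spec(v)$. Property~2 only promises that $level_v$ eventually settles at \emph{some} value compatible with $spec(v)$; it does not force that value to be the smaller distance $D'$. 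Your indistinguishability heuristic fails because the black-box protocol, which knows that $r$ is distinguished, is under no obligation to propagate $b$'s fake root state through the unconstrained $y_j$'s all the way to $v$. Taking $D'$ larger does not help: the $c$ processes nearest $b$ are always non-$c$-correct and hence unconstrained.

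The paper sidesteps this by making the path \emph{symmetric} ($p_0=r,\ldots,p_{2c+3}=b$ with mirrored weights) and, in the ``root-like'' phase, having $b$ take exactly the same actions as $r$. The whole execution is then invariant under $p_i\leftrightarrow p_{2c+3-i}$, so the non-$c$-correct processes on $b$'s side are forced to carry the mirror image of the $c$-correct ones on $r$'s side; in particular the $c$-correct process $p_{c+2}$ (at distance $c+1$ from $b$) must point toward $b$. For the other phase the paper lets $b$ \emph{execute the protocol correctly}, so that the fault-free self-stabilisation guarantee (not merely property~2) forces \emph{every} correct process, including the non-$c$-correct ones, to satisfy $spec$ and hence to point toward $r$. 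The symmetry trick and the ``$b$ behaves correctly'' trick are the two missing ideas; once you have them, your assembly of the infinite execution and the daemon bookkeeping go through as you describe.
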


Now, if we focus on topology-aware strong stabilization, \cite{DMT10cd} introduced the following containment area: $S_B^*=\{v\in V|\underset{b\in B}{min}(d(v,b))<d(r,v)\}$, and proved the following results.

\begin{theorem}[\cite{DMT10cd}]\label{th:impTAStrongBFS}
Even under the central daemon, there exists no $(t,A_B^*,1)$-TA strongly stabilizing protocol for maximum metric spanning tree construction with respect to $\mathcal{BFS}$ where $A_B^*\varsubsetneq S_B^*$ and $t$ is a finite integer.
\end{theorem}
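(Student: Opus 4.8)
The plan is to establish a \emph{lower bound} on the containment area: I will show that \emph{every} node of $S_B^*$ can be forced to modify its O-variables infinitely often, so no topology-aware strongly stabilizing protocol may leave any node of $S_B^*$ outside its containment area. This immediately rules out every proper subset $A_B^*\varsubsetneq S_B^*$. First I would assume, for contradiction, that a protocol $A$ is $(t,A_B^*,1)$-TA strongly stabilizing for a finite $t$ with $A_B^*\varsubsetneq S_B^*$. Fixing an instance on which the inclusion is strict, I pick $v\in S_B^*\setminus A_B^*$; this $v$ is $A_B^*$-correct. Since $f=1$, write $B=\{b\}$, and by definition of $S_B^*$ we have $d(v,b)<d(r,v)$. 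The goal is then to contradict property~2 of any $(t,k,A_B^*,1)$-TA time contained configuration (the eventual freezing of the O-variables of $A_B^*$-correct nodes) by building an execution in which $level_v$ changes infinitely often.

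The heart of the argument is an oscillation driven by two Byzantine behaviors of $b$. In \emph{phase~A}, $b$ permanently advertises $prnt_b=\bot$, $level_b=mr=0$ and $dist_b=0$, \emph{i.e.} it impersonates a root; in \emph{phase~B}, $b$ permanently advertises a value $level_b=L$ larger than the diameter, so that any path through $b$ is $\prec$-worse than every genuine rooted path. The key step is a propagation claim: every node $w$ on a shortest $b$--$v$ path belongs to $S_B^*$. This follows from the triangle inequality, since for such a $w$ one has $d(r,w)\geq d(r,v)-d(v,w)=d(r,v)-d(b,v)+d(b,w)>d(b,w)$ because $d(r,v)>d(b,v)$. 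Consequently, in phase~A the fake $\mathcal{BFS}$ levels offered by $b$ are $\prec$-preferred all along this path and propagate down to $v$, so any stable legitimate configuration of phase~A assigns $level_v=d(b,v)$ through an $\mathcal{M}$-path rooted at the Byzantine node $b$; symmetrically, in phase~B the offer of $b$ is useless and the stable legitimate configuration restores the real tree with $level_v=\mu(v,r)=d(r,v)$. Both configurations are $A_B^*$-legitimate: in phase~A every node of $S_B^*$ (hence every $A_B^*$-correct node) satisfies $spec$ via the path rooted at $b$ while nodes outside $S_B^*$ keep their true level, and in phase~B every node satisfies $spec$ via the real tree rooted at $r$. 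The $dist$ component is harmless, since along a $\mathcal{BFS}$ path consecutive levels differ and every $legal\_dist$ equals $0$.

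Next I would assemble the execution. Starting from a legitimate and $A_B^*$-stable phase~B configuration, switch $b$ to phase~A and hold it fixed; since $A$ must converge to the maximum metric forest determined by the frozen sources (the behaviour a strongly stabilizing, hence self-stabilizing, protocol exhibits once the Byzantine state stops changing), the $A_B^*$-correct nodes reach the stable phase~A configuration, forcing $level_v$ from $d(r,v)$ down to $d(b,v)$. Then hold $b$ fixed until stabilization, switch back to phase~B, let the system re-stabilize (raising $level_v$ back to $d(r,v)$), and repeat. Because $d(b,v)<d(r,v)$, each switch strictly changes the O-variable $level_v$; each window between two consecutive switches is bracketed by legitimate and $A_B^*$-stable configurations and contains an O-variable change of the $A_B^*$-correct node $v$, hence is an $A_B^*$-TA-disruption. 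Iterating the oscillation more than $t$ times produces more than $t$ disruptions, and in the limit infinitely many changes of $level_v$, contradicting both the disruption bound and property~2. Therefore $S_B^*\subseteq A_B^*$, contradicting $A_B^*\varsubsetneq S_B^*$.

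\textbf{Main obstacle.} The delicate points are the two convergence/stability claims rather than the metric bookkeeping. First, I must argue rigorously that in phase~A the fake levels reach \emph{exactly} $v$ and yield a \emph{stable} legitimate configuration; the $S_B^*$-closure of shortest $b$--$v$ paths established above is precisely what makes this go through. Second, and more delicately, I must guarantee that each phase genuinely reaches a legitimate \emph{and} stable configuration, so that the windows are bona fide disruptions of an $A_B^*$-correct node; this relies on the convergence of $A$ whenever the Byzantine state is frozen, together with the strongly-fair, $k$-bounded daemon letting every enabled correct process eventually act. Finally, I would take care to realize the entire construction under the \emph{central} daemon, as the statement demands, by serializing the converging moves one process at a time.
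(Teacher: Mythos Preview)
This theorem is quoted from \cite{DMT10cd} and is not proved in the present paper; however, the paper \emph{does} prove its generalization (Theorem~\ref{th:impTAstrong}), whose $\mathcal{BFS}$ instance is exactly the statement you attack, so your proposal should be compared with that proof.

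Your overall strategy (have $b$ alternate between ``fake root'' and ``harmless'' and exhibit infinitely many disruptions of some $A_B^*$-correct node) is the same idea the paper uses. The substantive difference is that the paper does \emph{not} work on an arbitrary instance: it \emph{constructs} a symmetric six-node system $r$--$u$--$v$--$b$, $r$--$u'$--$v'$--$b$ in which $S_B^*=\{v,v'\}$ and the witness $v$ (or $v'$) is \emph{adjacent} to $b$. In phase~A the paper further makes $b$ \emph{mimic} $r$ step by step and uses the symmetry of the execution to force the unique $A_B^*$-legitimate configuration in which $level_v=d(v,b)$. In phase~B the paper lets $b$ behave as a \emph{correct} process so that plain self-stabilization (fault-free convergence) yields $level_v=d(v,r)$.

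Your phase~A argument has a genuine gap. You fix an arbitrary instance with $v\in S_B^*\setminus A_B^*$ and argue that the fake level $0$ of $b$ ``propagates along a shortest $b$--$v$ path'' because every node on that path lies in $S_B^*$. But lying in $S_B^*$ is not enough: nothing prevents those intermediate nodes from lying in $A_B^*$. In an $A_B^*$-legitimate configuration their states may be arbitrary, so the neighbour of $v$ on the $b$-side can carry a large $level$, making the $r$-side neighbour the locally optimal parent; then $spec(v)$ holds with $level_v=d(v,r)$, and switching to phase~B produces no change of $level_v$ at all. Concretely, if $A_B^*=S_B^*\setminus\{v\}$ with $v$ the $S_B^*$-node farthest from $b$, every interior node of your shortest $b$--$v$ path is in $A_B^*$ and your propagation claim fails. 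The paper sidesteps this precisely by building an instance where $v$ is adjacent to $b$ (no intermediate nodes) and by the mimic/symmetry argument that pins down the configuration reached in phase~A.

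Your phase~B variant (freeze $b$ at a very large level and invoke the TA-strong guarantee) is different from the paper's choice (let $b$ execute the protocol and invoke fault-free self-stabilization), but it is not where the problem lies; the phase~A step is. If you want to salvage your route, you should construct the instance rather than take it as given: at minimum choose $v$ adjacent to $b$ with $d(v,r)\ge 2$, so that $b$'s advertised level is directly visible to $v$ and condition~3 of an $\mathcal{M}$-path forces $level_v=1$ in phase~A regardless of $A_B^*$.
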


\begin{theorem}[\cite{DMT10cd}]
The protocol of \cite{HC92j} is a $(t,S_B^*,n-1)$-TA strongly stabilizing protocol for maximum metric spanning tree construction with respect to $\mathcal{BFS}$ where $t$ is a finite integer.
\end{theorem}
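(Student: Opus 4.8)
The plan is to prove the two ingredients of TA-strong stabilization separately: first, that the protocol reaches in finitely many rounds a configuration that is $S_B^*$-legitimate and $S_B^*$-stable and from which the O-variables of every $S_B^*$-correct node eventually freeze; and second, that only finitely many $S_B^*$-TA-disruptions occur, which yields a finite $t$. I would begin by recording the structural meaning of the containment area for $\mathcal{BFS}$: since here the metric value of a node equals its distance to the chosen root, $v$ is $S_B^*$-correct exactly when $d(r,v)\leq\min_{b\in B}d(b,v)$, the \emph{interior} case being strict inequality and the \emph{boundary} case being equality. The first lemma I would establish is a closure property: every shortest path from $r$ to a $S_B^*$-correct node consists entirely of $S_B^*$-correct nodes. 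This follows from the triangle inequality, since for any $u$ on such a path and any $b\in B$ one has $d(r,u)+d(u,v)=d(r,v)\leq d(b,v)\leq d(b,u)+d(u,v)$, whence $d(r,u)\leq d(b,u)$. This guarantees that the correct information emanating from the (never Byzantine) root reaches each $S_B^*$-correct node along a path made only of $S_B^*$-correct nodes.

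Next I would show that the level of every $S_B^*$-correct node converges in finitely many rounds to $d(r,v)$ and then stays fixed forever, regardless of Byzantine activity. For the lower bound, I would argue that once the correct part of the system is locally consistent, a node with $level_v=j$ is the top of a descending chain of correct nodes whose levels drop by one at each step and which ends either at the root or at a correct node adjacent to a Byzantine process; in both cases some process of $B\cup\{r\}$ lies within distance $j$ of $v$, and $S_B^*$-correctness forbids any Byzantine process from being strictly closer than $d(r,v)$ (the root being at distance exactly $d(r,v)$), so $level_v\geq d(r,v)$. For the matching upper bound I would induct on $d(r,v)$: the root is permanently fixed at $mr=0$, and assuming every $S_B^*$-correct node at distance less than $k$ has frozen at its correct level, the all-correct shortest path furnished by the closure lemma offers a node at distance $k$ the value $k$, which by the lower bound is optimal. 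Hence all $S_B^*$-correct levels converge and are thereafter permanent; since $\mathcal{BFS}$ levels strictly increase along any parent chain, the $dist$ O-variable is identically $0$ and likewise permanent.

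It then remains to analyse the parent pointer and to bound the disruptions, which I expect to be the main obstacle. For an interior node ($d(r,v)<\min_{b\in B}d(b,v)$) I would show that every neighbour able to supply the level $d(r,v)-1$ is itself $S_B^*$-correct and hence has a frozen level, so the candidate-parent set is eventually fixed and the pointer freezes with no disruption. The delicate case is a boundary node ($d(r,v)=\min_{b\in B}d(b,v)$): a Byzantine process can sustain a fake path delivering exactly the level $d(r,v)$, so the parent may legitimately point either toward $r$ or, through a valid fake $\mathcal{M}$-path, toward a Byzantine process, and both choices satisfy $spec(v)$ because condition~5 of the $\mathcal{M}$-path definition reduces here to $level_v=\mu(v,b)=d(b,v)$. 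Such a node can thus change its parent whenever a Byzantine-influenced neighbour enters or leaves level $d(r,v)-1$, and the heart of the proof is to bound the number of these changes. I would combine level permanence with the update discipline of the protocol of \cite{HC92j}: once a boundary node adopts a parent that supplies its minimal level and remains valid, the node does not abandon it, and by the closure lemma such a permanently valid root-side parent always exists and freezes in finite time; invoking strong fairness and $k$-boundedness to force the correct level information to propagate, each boundary node is driven to switch only finitely often before locking onto a permanently valid parent, and summing over the finitely many boundary nodes bounds both the per-process disruption count and the total number of $S_B^*$-TA-disruptions, giving a finite $t$. The step deserving the most care is precisely this charging argument, namely ruling out an infinite alternation in which Byzantine processes repeatedly raise and collapse a fake path so as to drive a boundary node off and back onto its root-side parent forever.
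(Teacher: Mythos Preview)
The theorem you are attempting to prove is not proved in this paper at all. It appears in Section~2.4 (``Previous results'') as a citation of a result established in~\cite{DMT10cd}; the paper merely states it without argument, alongside the other background theorems imported from prior work. There is therefore no proof in the paper against which your proposal can be compared.

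As to the content of your sketch: the closure lemma and the level-convergence argument are reasonable and track the kind of reasoning one expects for the $\mathcal{BFS}$ metric. However, the final and decisive step --- bounding the number of parent changes at boundary nodes --- is not actually carried out. You correctly identify it as the crux (``the step deserving the most care'') but then only gesture at ``the update discipline of the protocol of~\cite{HC92j}'' combined with strong fairness and $k$-boundedness. That is precisely the place where a real argument is needed: you must exhibit a concrete potential or counting argument, tied to the specific rules of the Huang--Chen protocol, that forbids an infinite alternation of a boundary node between a root-side parent and a Byzantine-side parent. Without that, your proposal is a plausible outline rather than a proof, and the gap is exactly the one you yourself flag in your last sentence.
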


The main motivation of this work is to fill the gap between results about TA strong and strong stabilization in the general case (that is, for any maximizable metric). Mainly, we define the best possible containment area for TA strong stabilization, we propose a protocol that provides this containment area and we characterize the set of metrics that allow strong stabilization.
 
\section{Impossibility Results}\label{sec:impossibility}

In this section, we provide our impossibility results about containment radius (respectively area) of any strongly stabilizing (respectively TA strongly stabilizing) protocol for the maximum metric tree construction.

\subsection{Strong Stabilization}

We introduce here some new definitions to characterize some important properties of maximizable metrics that are used in the following.

\begin{definition}[Strictly decreasing metric]
A metric $\mathcal{M}=(M,W,mr,met,\prec)$ is \emph{strictly decreasing} if, for any metric value $m\in M$, the following property holds: either $\forall w\in W,met(m,w)\prec m$ or $\forall w\in W,met(m,w)=m$.
\end{definition}

\begin{definition}[Fixed point]
A metric value $m$ is a \emph{fixed point} of a metric $\mathcal{M}=(M,W,mr,met,\prec)$ if $m\in M$ and if for any value $w\in W$, we have: $met(m,w)=m$.
\end{definition}

Then, we define a specific class of maximizable metrics and we prove that it is impossible to construct a maximum metric tree in a strongly-stabilizing way if we do not consider such a metric.

\begin{definition}[Strongly maximizable metric]
A maximizable metric $\mathcal{M}=(M,W,mr,met,\prec)$ is strongly maximizable if and only if $|M|=1$ or if the following properties holds: 
\begin{itemize}
\item $|M|\geq 2$,
\item $\mathcal{M}$ is strictly decreasing, and 
\item $\mathcal{M}$ has one and only one fixed point.
\end{itemize}
\end{definition}

Note that $\mathcal{NC}$ is a strongly maximizable metric (since $|M_4|=1$) whereas $\mathcal{BFS}$ or $\mathcal{SP}$ are not (since the first one has no fixed point, the second is not strictly decreasing). If we consider the metric $\mathcal{MET}$ defined below, we can show that $\mathcal{MET}$ is a strongly maximizable metric such that $|M|\geq 2$.

\[\begin{array}{rcl}
\mathcal{MET}&=&(M_5,W_5,met_5,mr_5,\prec_5)\\
\text{where}& & M_5=\{0,1,2,3\}\\
&& W_5=\{1\}\\
&& met_5(m,w)=max\{0,m-w\}\\
&& mr_5=3\\
&& \prec_5 \text{ is the classical }<\text{ relation}
\end{array}\]

Now, we can state our first impossibility result.

\begin{theorem}\label{th:necessarConditionStrong}
Given a maximizable metric $\mathcal{M}=(M,W,mr,met,\prec)$, even under the central daemon, there exists no $(t,c,1)$-strongly stabilizing protocol for maximum metric spanning tree construction with respect to $\mathcal{M}$ for any finite integer $t$ if:
\[\left\{\begin{array}{l}
\mathcal{M} \mbox{ is not a strongly maximizable metric, or}\\
c<|M|-2
\end{array}\right.\]
\end{theorem}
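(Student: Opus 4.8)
The plan is to establish the two cases of the theorem by constructing, for any candidate protocol, an execution that violates strong stabilization. The overall strategy is the classical indistinguishability/mimicking argument used in this line of work: place a single Byzantine process so that it can impersonate a legitimate configuration, and then exhibit two configurations that a correct process cannot distinguish, forcing either infinitely many disruptions or a containment radius that is too small.

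\textbf{Case 1: $\mathcal{M}$ is not strongly maximizable.} Since $\mathcal{M}$ is maximizable it is bounded and monotonic (Theorem on characterization), so the failure of strong maximizability means $|M|\geq 2$ together with either (a) $\mathcal{M}$ is not strictly decreasing, or (b) $\mathcal{M}$ has zero fixed points, or more than one fixed point. First I would observe that boundedness forces $met(m,w)\preceq m$ always, so a fixed point is exactly a value stable under every weight; I would argue that a bounded metric with $|M|\geq 2$ always has at least one fixed point (the $\prec$-minimum of $M$, since $met$ cannot decrease it further), so the only remaining sub-cases are ``not strictly decreasing'' and ``at least two fixed points.'' In each sub-case the goal is to exhibit a small system (a cycle or a path with an attached Byzantine node) and two weight assignments / two states for a Byzantine process $b$ that yield the \emph{same} local views at some correct process $v$, but that correspond to genuinely different correct metric values. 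Concretely, if the metric is not strictly decreasing at some $m$, there are weights $w_1,w_2$ with $met(m,w_1)=m$ but $met(m,w_2)\prec m$; a Byzantine node can oscillate between advertising configurations consistent with each, and because these are locally indistinguishable to $v$, the protocol must let $v$ change its O-variables infinitely often to track the oscillation, contradicting finiteness of $t$ (bounded number of disruptions). The two-fixed-point case is handled symmetrically: two distinct stable advertised values force an infinite alternation of $v$'s output.

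\textbf{Case 2: $c<|M|-2$.} Here the plan is a distance/telescoping argument. Because $\mathcal{M}$ is strictly decreasing with a single fixed point and $|M|=M'\geq 2$ distinct values, I can build a path rooted at $r$ along which the metric strictly decreases through a chain of metric values $mr=m_0\succ m_1\succ\cdots$, and place a single Byzantine process far enough out that it can advertise a value which, through the chain of $met$ applications, competes with the genuine value coming from $r$. The key is that distinguishing the ``true subtree rooted at $r$'' from the ``fake subtree rooted at $b$'' requires propagating information across a number of hops proportional to the number of distinct metric values strictly between $mr$ and the fixed point, which is on the order of $|M|-2$. I would then exhibit two executions, identical in the view of every process within distance $c$ of $b$ when $c<|M|-2$, one where $b$ behaves and one where it injects a false high metric value; a correct process at distance $\leq c$ cannot tell them apart and will be disrupted infinitely often, again defeating finite $t$.

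\textbf{Main obstacle.} The hard part will be Case 2: constructing a single gadget (system topology plus weight assignment) that simultaneously realizes a strictly decreasing chain of exactly the right length \emph{and} forces the $c<|M|-2$ bound tightly, while making the two executions provably indistinguishable to every process within radius $c$. I expect the delicate bookkeeping to be in counting hops versus the number of intermediate metric values and in verifying that the mimicking Byzantine process can always supply a locally legal $\mathcal{M}$-path view (satisfying the five conditions of the $\mathcal{M}$-path definition, including the $dist$ bookkeeping in condition~4) so that no correct process near $b$ has any guard that would let it detect the fault. Establishing this indistinguishability rigorously, rather than the high-level oscillation idea, is where the real work lies.
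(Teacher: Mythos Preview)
Your overall direction---construct a system with a single Byzantine process that forces infinitely many disruptions---is the right one, and your case split matches the paper's. But the central mechanism you invoke is off, and this is a genuine gap rather than a matter of presentation.

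You frame the argument as \emph{indistinguishability}: the correct process $v$ cannot tell two Byzantine behaviors apart, so it must ``track the oscillation.'' But local indistinguishability alone does not force $v$ to change its output; a protocol could in principle have $v$ settle on one value and ignore the Byzantine neighbor forever. What actually forces the change is the \emph{self-stabilization guarantee} of the protocol. The paper's pump works in two phases: (i) the Byzantine process $b$ mimics the root $r$ exactly (same state, same actions), so by symmetry and by convergence of $\mathcal{P}$ to $spec$ the system reaches a symmetric configuration in which the midpoint process chooses $b$'s side; (ii) then $b$ switches to behaving as a correct process, and now the system is effectively fault-free, so by self-stabilization it must converge to the unique legitimate tree rooted at $r$, forcing the midpoint to flip. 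Step (ii) is the crux, and it does not appear in your plan. Without it you cannot conclude that $v$ is obliged to change its O-variables each cycle.

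Two smaller issues. First, your claim that a bounded metric with $|M|\geq 2$ always has a fixed point (the $\prec$-minimum) fails when $M$ is infinite; $\mathcal{SP}$ is bounded but has no fixed point. The paper handles the ``no fixed point'' sub-case by observing that then $|M|$ is infinite, so $c<|M|-2$ holds automatically and one reduces to the $c<|M|-2$ construction. Second, for that $c<|M|-2$ construction you will need a concrete gadget: the paper uses a path of $2c+4$ vertices with $r$ and $b$ at the ends and symmetric edge weights chosen from a strictly decreasing chain $mr=m_0\succ m_1\succ\cdots\succ m_{c+2}$ (which exists precisely because $|M|>c+2$), so that the unique $c$-correct midpoint is equidistant from $r$ and $b$ and is forced to flip in each pump cycle. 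Your ``telescoping'' sketch is in the right spirit but you should build this path explicitly and verify that the midpoint is $c$-correct.
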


\begin{proof}
We prove this result by contradiction. We assume that $\mathcal{M}=(M,W,mr,met,\prec)$ is a maximizable metric such that there exist a finite integer $t$ and a protocol $\mathcal{P}$ that is a $(t,c,1)$-strongly stabilizing protocol for maximum metric spanning tree construction with respect to $\mathcal{M}$. We distinguish the following cases (note that they are exhaustive):
\begin{description}
\item[Case 1:] $\mathcal{M}$ is a strongly maximizing metric and $c<|M|-2$.

As $c\geq 0$, we know that $|M|\geq 2$ and by definition of a strongly stabilizing metric, $\mathcal{M}$ is strictly decreasing and has one and only one fixed point.

\begin{figure}[t]
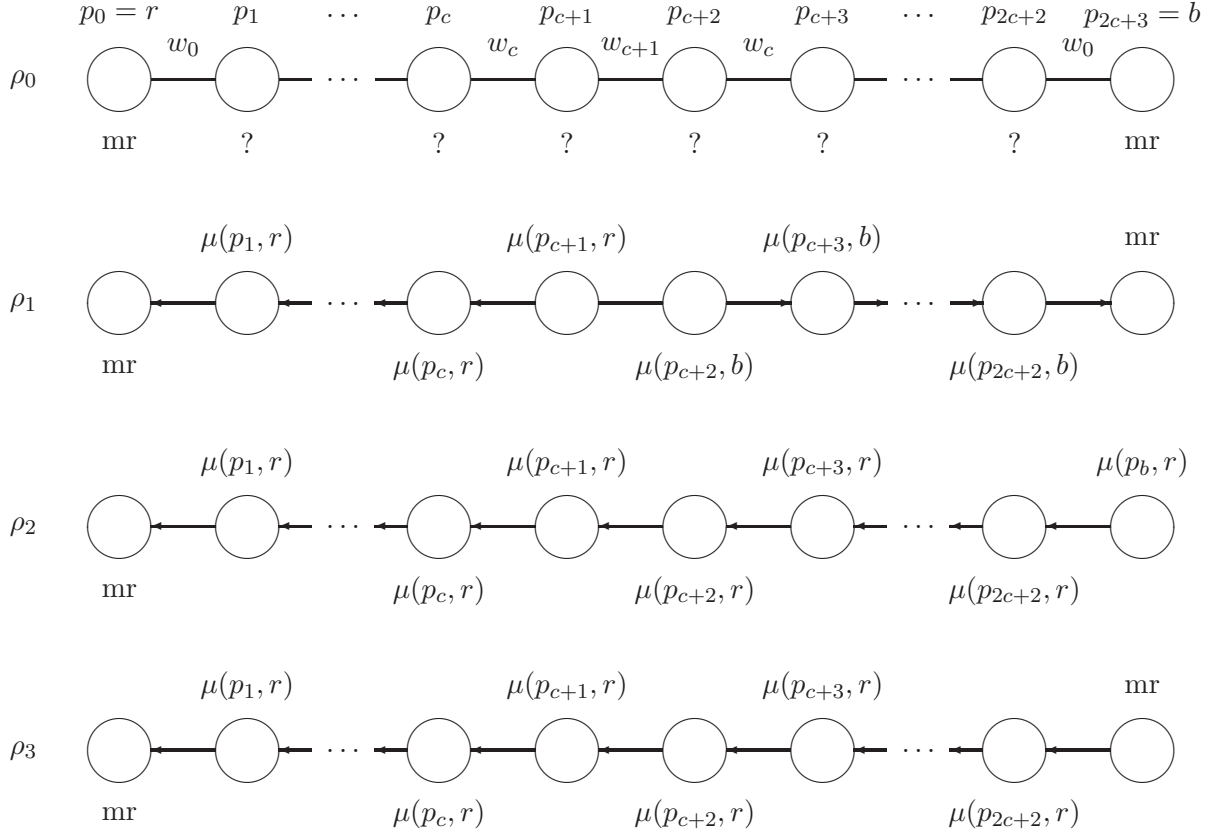

\noindent \begin{centering} \include{possStrongCase1}
  \par\end{centering}
 \caption{Configurations used in proof of Theorem \ref{th:necessarConditionStrong}, case 1.}
\label{fig:possStrongCase1}
\end{figure}

By assumption on $\mathcal{M}$, we know that there exist $c+3$ distinct metric values $m_0=mr,m_1,\ldots,$ $m_{c+2}$ in $M$ and $w_0,w_1,\ldots,w_{c+1}$ in $W$ such that: $\forall i\in\{1,\ldots,c+2\},m_i=met(m_{i-1},w_{i-1})\prec m_{i-1}$.

Let $S=(V,E,\mathcal{W})$ be the following weighted system $V=\{p_0=r,p_1,\ldots,p_{2c+2},p_{2c+3}=b\}$, $E=\{\{p_i,p_{i+1}\},i\in\{0,\ldots,2c+2\}\}$ and $\forall i\in\{0,c+1\},w_{p_i,p_{i+1}}=w_{p_{2c+3-i},p_{2c+2-i}}=w_i$. Note that the choice $w_{p_{c+1},p_{c+2}}=w_{c+1}$ ensures us the following property when $level_r=level_b=mr$: $\mu(p_{c+1},b)\prec\mu(p_{c+1},r)$ (and by symmetry, $\mu(p_{c+2},r)\prec\mu(p_{c+2},b)$). Process $p_0$ is the real root and process $b$ is a Byzantine one. Note that the construction of $\mathcal{W}$ ensures the following properties when $level_r=level_b=mr$: $\forall i\in\{1,\ldots,c+1\},\mu(p_i,r)=\mu(p_{2c+3-i},b)$, $\mu(p_i,b)\prec\mu(p_i,r)$ and $\mu(p_{2c+3-i},r)\prec\mu(p_{2c+3-i},b)$.

Assume that the initial configuration $\rho_0$ of $S$ satisfies: $prnt_r=prnt_b=\bot$, $level_r=level_b=mr$, and other variables of $b$ (in particular $dist$) are identical to those of $r$ (see Figure \ref{fig:possStrongCase1}, variables of other processes may be arbitrary). Assume now that $b$ takes exactly the same actions as $r$ (if any) immediately after $r$. Then, by symmetry of the execution and by convergence of $\mathcal{P}$ to $spec$, we can deduce that the system reaches in a finite time a configuration $\rho_1$ (see Figure \ref{fig:possStrongCase1}) in which: $\forall i\in\{1,\ldots,c+1\}, prnt_{p_i}=p_{i-1}$, $level_{p_i}=\mu(p_i,r)=m_i$, $dist_{p_i}=legal\_dist_{prnt_{p_i}}$ and $\forall i\in\{c+2,\ldots,2c+2\},prnt_{p_i}=p_{i+1}$, $level_{p_i}=\mu(p_{i},b)=m_{2c+3-i}$, and $dist_{p_i}=legal\_dist_{prnt_{p_i}}$ (because this configuration is the only one in which all correct process $v$ satisfies $spec(v)$ when $prnt_r=prnt_b=\bot$ and $level_r=level_b=mr$ by construction of $\mathcal{W}$). Note that $\rho_1$ is $c$-legitimate and $c$-stable.

Assume now that the Byzantine process acts as a correct process and executes correctly its algorithm. Then, by convergence of $\mathcal{P}$ in fault-free systems (remember that a strongly-stabilizing algorithm is a special case of self-stabilizing algorithm), we can deduce that the system reach in a finite time a configuration $\rho_2$ (see Figure \ref{fig:possStrongCase1}) in which: $\forall i\in\{1,\ldots,2c+3\},prnt_{p_i}=p_{i-1}$, $level_{p_i}=\mu(p_i,r)$, and $dist_{p_i}=legal\_dist_{prnt_{p_i}}$ (because this configuration is the only one in which all process $v$ satisfies $spec(v)$). Note that the portion of execution between $\rho_1$ and $\rho_2$ contains at least one $c$-perturbation ($p_{c+2}$ is a $c$-correct process and modifies at least once its O-variables) and that $\rho_2$ is $c$-legitimate and $c$-stable.

Assume now that the Byzantine process $b$ takes the following state: $prnt_{b}=\bot$ and $level_b=mr$. This step brings the system into configuration $\rho_3$ (see Figure \ref{fig:possStrongCase1}). From this configuration, we can repeat the execution we constructed from $\rho_0$. By the same token, we obtain an execution of $\mathcal{P}$ which contains $c$-legitimate and $c$-stable configurations (see $\rho_1$) and an infinite number of $c$-perturbation which contradicts the $(t,c,1)$-strong stabilization of $\mathcal{P}$.

\item[Case 2:] $\mathcal{M}$ is not strictly decreasing.

\begin{figure}[t]
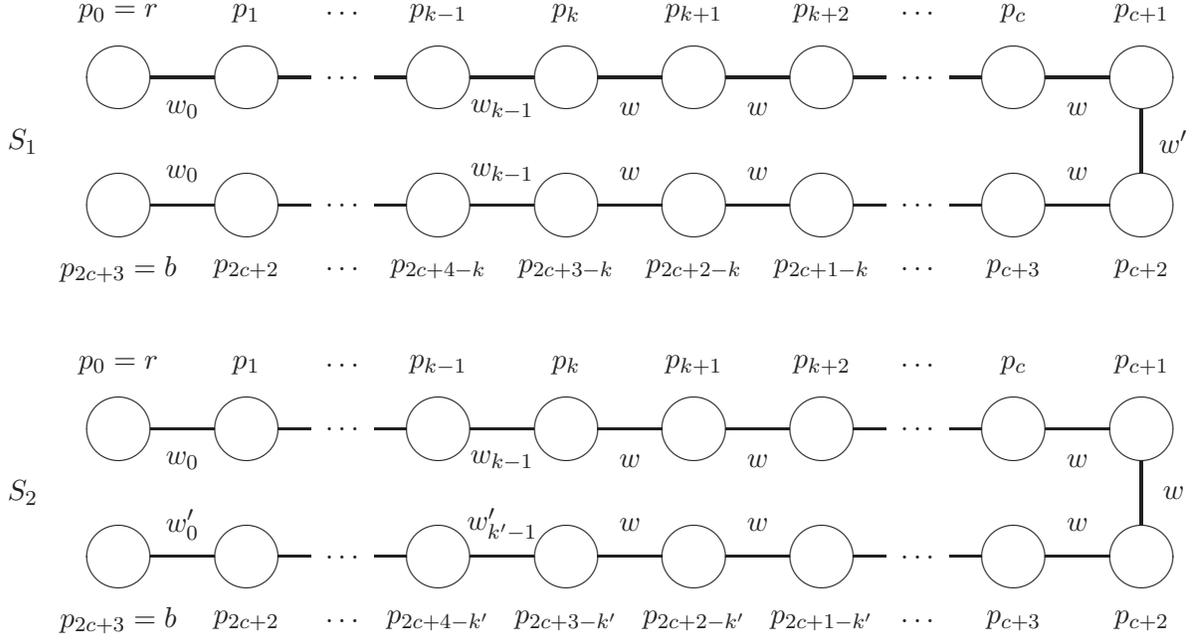

\noindent \begin{centering} \include{possStrongCase23}
  \par\end{centering}
 \caption{Configurations used in proof of Theorem \ref{th:necessarConditionStrong}, cases 2 and 3.}
\label{fig:possStrongCase23}
\end{figure}

By definition, we know that $\mathcal{M}$ is not a strongly maximizable metric. Hence, we have $|M|\geq 2$. Then, the definition of a strictly decreasing metric implies that there exists a metric value $m\in M$ such that: $\exists w\in W,$ $met(m,w)=m$ and $\exists w'\in W,m'=met(m,w')\prec m$ (and thus $m$ is not a fixed point of $\mathcal{M}$). By the utility condition on $M$, we know that there exists a sequence of metric values $m_0=mr,m_1,\ldots,m_l=m$ in $M$ and $w_0,w_1,\ldots,w_{l-1}$ in $W$ such that $\forall i\in\{1,\ldots,l\},m_i=met(m_{i-1},w_{i-1})$. Denote by $k$ the length of the shortest such sequence. Note that this implies that $\forall i\in\{1,\ldots,k\},m_i\prec m_{i-1}$ (otherwise we can remove $m_i$ from the sequence and this is contradictory with the construction of $k$). We distinguish the following cases:
\begin{description}
\item[Case 2.1:] $k\geq c+2$.\\
We can use the same token as case 1 above by using $w'$ instead of $w_{c+1}$ in the case where $k=c+2$ (since we know that $met(m,w')\prec m$).
\item[Case 2.2:] $k< c+2$.\\
Let $S_1=(V,E,\mathcal{W})$ be the following weighted system $V=\{p_0=r,p_1,\ldots,p_{2c+2},p_{2c+3}=b\}$, $E=\{\{p_i,p_{i+1}\},i\in\{0,\ldots,2c+2\}\}$, $\forall i\in\{0,\ldots,k-1\},w_{p_i,p_{i+1}}=w_{p_{2c+3-i},p_{2c+2-i}}=w_i$, $\forall i\in\{k,\ldots,c\},w_{p_i,p_{i+1}}=w_{p_{2c+3-i},p_{2c+2-i}}=w$ and $w_{p_{c+1},p_{c+2}}=w'$ (see Figure \ref{fig:possStrongCase23}). Note that this choice ensures us the following property when $level_r=level_b=mr$: $\mu(p_{c+1},b)\prec\mu(p_{c+1},r)$ (and by symmetry, $\mu(p_{c+2},r)\prec\mu(p_{c+2},b)$). Process $p_0$ is the real root and process $b$ is a Byzantine one. Note that the construction of $\mathcal{W}$ ensures the following properties when $level_r=level_b=mr$: $\forall i\in\{1,\ldots,c+1\},\mu(p_i,r)=\mu(p_{2c+3-i},b)$, $\mu(p_i,b)\prec\mu(p_i,r)$ and $\mu(p_{2c+3-i},r)\prec\mu(p_{2c+3-i},b)$.

This construction allows us to follow the same proof as in case 1 above.
\end{description}

\item[Case 3:] $\mathcal{M}$ has no or more than two fixed point, and is strictly decreasing.

If $\mathcal{M}$ has no fixed point and is strictly decreasing, then $|M|$ is not finite and then, we can apply the result of case 1 above since $c$ is a finite integer.

If $\mathcal{M}$ has two or more fixed points and is strictly decreasing, denote by $\Upsilon$ and $\Upsilon'$ two fixed points of $\mathcal{M}$. Without loss of generality, assume that $\Upsilon\prec\Upsilon'$. By the utility condition on $M$, we know that there exists sequences of metric values $m_0=mr,m_1,\ldots,m_l=\Upsilon$ and $m'_0=mr,m'_1,\ldots,m'_{l'}=\Upsilon'$ in $M$ and $w_0,w_1,\ldots,w_{l-1}$ and $w'_0,w'_1,\ldots,w'_{l'-1}$ in $W$ such that $\forall i\in\{1,\ldots,l\},m_i=met(m_{i-1},w_{i-1})$ and $\forall i\in\{1,\ldots,l'\},m'_i=met(m'_{i-1},w'_{i-1})$. Denote by $k$ and $k'$ the length of shortest such sequences. Note that this implies that $\forall i\in\{1,\ldots,k\},m_i\prec m_{i-1}$ and $\forall i\in\{1,\ldots,k'\},m'_i\prec m'_{i-1}$ (otherwise we can remove $m_i$ or $m'_i$ from the corresponding sequence). We distinguish the following cases:
\begin{description}
\item[Case 3.1:] $k>c+2$ or $k'>c+2$.\\
Without loss of generality, assume that $k>c+2$ (the second case is similar). We can use the same token as case 1 above.
\item[Case 3.2:] $k\leq c+2$ and $k'\leq c+2$.\\
Let $w$ be an arbitrary value of $W$. Let $S_2=(V,E,\mathcal{W})$ be the following weighted system $V=\{p_0=r,p_1,\ldots,p_{2c+2},p_{2c+3}=b\}$, $E=\{\{p_i,p_{i+1}\},i\in\{0,\ldots,2c+2\}\}$, $\forall i\in\{0,k-1\},w_{p_i,p_{i+1}}=w_i$, $\forall i\in\{0,k'-1\},w_{p_{2c+3-i},p_{2c+2-i}}=w'_i$ and $\forall i\in\{k,2c+2-k'\}, w_{p_{i},p_{i+1}}=w$ (see Figure \ref{fig:possStrongCase23}). Note that this choice ensures us the following property when $level_r=level_b=mr$: $\mu(p_{c+1},r)=\Upsilon\prec\Upsilon' =\mu(p_{c+1},b)$ and $\mu(p_{c+2},r)=\Upsilon\prec\Upsilon'=\mu(p_{c+2},b)$. Process $p_0$ is the real root and process $b$ is a Byzantine one. 

This construction allows us to follow a similar proof as in case 1 above (note that any process $u$ which satisfies $\mu(u,r)\prec\Upsilon'$ will be disturb infinitely often, in particular at least $p_{c+1}$ and $p_{c+2}$ which contradicts the $(t,c,1)$-strong stabilization of $\mathcal{P}$).
\end{description}
\end{description}
In any case, we show that there exists a system which contradicts the $(t,c,1)$-strong stabilization of $\mathcal{P}$ that ends the proof. 
\end{proof}

\subsection{Topology Aware Strong Stabilization}

First, we generalize the set $S_B^*$ previously defined for the $\mathcal{BFS}$ metric in \cite{DMT10cd} to any maximizable metric $\mathcal{M}=(M,W,mr,met,\prec)$.

\[S_{B}^*=\left\{v\in V\setminus B\left|\mu(v,r)\prec\underset{b\in B}{max_\prec}\{\mu(v,b)\}\right.\right\}\]

Intuitively, $S_B^*$ gathers the set of corrects processes that are strictly closer (according to $\mathcal{M}$) to a Byzantine process than the root. Figures from \ref{fig:ExSP} to \ref{fig:ExReliability} provide some examples of containment areas with respect to several maximizable metrics and compare it to $S_B$, the optimal containment area for TA strict stabilization. 

Note that we assume for the sake of clarity that $V\setminus S_B^*$ induces a connected subsystem. If it is not the case, then $S_B^*$ is extended to include all processes belonging to connected subsystems of $V\setminus S_B^*$ that not include $r$.

\begin{figure}[t]
\noindent \begin{centering} \ifx\JPicScale\undefined\def\JPicScale{0.75}\fi
\unitlength \JPicScale mm
\begin{picture}(165,95)(0,0)
\linethickness{0.3mm}
\put(50,85){\circle{10}}

\linethickness{0.3mm}
\put(30,65){\circle{10}}

\linethickness{0.3mm}
\put(70,65){\circle{10}}

\linethickness{0.3mm}
\put(30,45){\circle{10}}

\linethickness{0.3mm}
\put(70,45){\circle{10}}

\linethickness{0.3mm}
\put(50,25){\circle{10}}

\linethickness{0.3mm}
\multiput(30,70)(0.12,0.12){125}{\line(1,0){0.12}}
\linethickness{0.3mm}
\multiput(55,85)(0.12,-0.12){125}{\line(1,0){0.12}}
\linethickness{0.3mm}
\put(30,50){\line(0,1){10}}
\linethickness{0.3mm}
\multiput(30,40)(0.12,-0.12){125}{\line(1,0){0.12}}
\linethickness{0.3mm}
\multiput(55,25)(0.12,0.12){125}{\line(1,0){0.12}}
\linethickness{0.3mm}
\put(70,50){\line(0,1){10}}
\linethickness{0.3mm}
\put(35,65){\line(1,0){30}}
\linethickness{0.3mm}
\put(35,45){\line(1,0){30}}
\linethickness{0.3mm}
\multiput(35,65)(0.18,-0.12){167}{\line(1,0){0.18}}
\linethickness{0.3mm}
\put(140,85){\circle{10}}

\linethickness{0.3mm}
\put(120,65){\circle{10}}

\linethickness{0.3mm}
\put(160,65){\circle{10}}

\linethickness{0.3mm}
\put(120,45){\circle{10}}

\linethickness{0.3mm}
\put(160,45){\circle{10}}

\linethickness{0.3mm}
\put(140,25){\circle{10}}

\linethickness{0.3mm}
\multiput(120,70)(0.12,0.12){125}{\line(1,0){0.12}}
\linethickness{0.3mm}
\multiput(145,85)(0.12,-0.12){125}{\line(1,0){0.12}}
\linethickness{0.3mm}
\put(120,50){\line(0,1){10}}
\linethickness{0.3mm}
\multiput(120,40)(0.12,-0.12){125}{\line(1,0){0.12}}
\linethickness{0.3mm}
\multiput(145,25)(0.12,0.12){125}{\line(1,0){0.12}}
\linethickness{0.3mm}
\put(160,50){\line(0,1){10}}
\linethickness{0.3mm}
\put(125,65){\line(1,0){30}}
\linethickness{0.3mm}
\put(125,45){\line(1,0){30}}
\linethickness{0.3mm}
\multiput(125,65)(0.18,-0.12){167}{\line(1,0){0.18}}
\put(50,85){\makebox(0,0)[cc]{r}}

\put(140,85){\makebox(0,0)[cc]{r}}

\put(140,25){\makebox(0,0)[cc]{b}}

\put(50,25){\makebox(0,0)[cc]{b}}

\textcolor{blue}{
\put(115,25){\makebox(0,0)[cc]{$S_B^*$}}
}

\textcolor{red}{
\put(165,75){\makebox(0,0)[cc]{$S_B$}}
\put(20,25){\makebox(0,0)[cc]{$S_B=S_B^*$}}
}

\put(50,95){\makebox(0,0)[cc]{mr=0}}

\put(140,95){\makebox(0,0)[cc]{mr=0}}

\put(50,15){\makebox(0,0)[cc]{$level_b=0$}}

\put(140,15){\makebox(0,0)[cc]{$level_b=0$}}

\put(35,80){\makebox(0,0)[cc]{7}}

\put(65,80){\makebox(0,0)[cc]{6}}

\put(50,70){\makebox(0,0)[cc]{5}}

\put(75,55){\makebox(0,0)[cc]{4}}

\put(55,55){\makebox(0,0)[cc]{10}}

\put(50,40){\makebox(0,0)[cc]{8}}

\put(25,55){\makebox(0,0)[cc]{6}}

\put(65,30){\makebox(0,0)[cc]{32}}

\put(35,30){\makebox(0,0)[cc]{16}}

\textcolor{red}{
\linethickness{0.3mm}
\qbezier(25,30)(25.43,30.23)(30.48,32.17)
\qbezier(30.48,32.17)(35.53,34.11)(40,35)
\qbezier(40,35)(42.53,35.36)(44.98,35.21)
\qbezier(44.98,35.21)(47.43,35.05)(50,35)
\qbezier(50,35)(52.57,35.05)(55.02,35.21)
\qbezier(55.02,35.21)(57.47,35.36)(60,35)
\qbezier(60,35)(64.47,34.11)(69.52,32.17)
\qbezier(69.52,32.17)(74.57,30.23)(75,30)
}

\put(125,80){\makebox(0,0)[cc]{0}}

\put(155,80){\makebox(0,0)[cc]{0}}

\put(140,70){\makebox(0,0)[cc]{0}}

\put(165,55){\makebox(0,0)[cc]{0}}

\put(115,55){\makebox(0,0)[cc]{0}}

\put(155,30){\makebox(0,0)[cc]{0}}

\put(125,30){\makebox(0,0)[cc]{0}}

\put(140,40){\makebox(0,0)[cc]{0}}

\put(145,55){\makebox(0,0)[cc]{0}}

\textcolor{blue}{
\linethickness{0.3mm}
\qbezier(115,29.45)(115.43,29.68)(120.48,31.62)
\qbezier(120.48,31.62)(125.53,33.56)(130,34.45)
\qbezier(130,34.45)(132.53,34.81)(134.98,34.65)
\qbezier(134.98,34.65)(137.43,34.5)(140,34.45)
\qbezier(140,34.45)(142.57,34.5)(145.02,34.65)
\qbezier(145.02,34.65)(147.47,34.81)(150,34.45)
\qbezier(150,34.45)(154.47,33.56)(159.52,31.62)
\qbezier(159.52,31.62)(164.57,29.68)(165,29.45)
}

\textcolor{red}{
\linethickness{0.3mm}
\qbezier(115,80)(115.43,79.77)(120.48,77.83)
\qbezier(120.48,77.83)(125.53,75.89)(130,75)
\qbezier(130,75)(135.08,74.17)(140,74.17)
\qbezier(140,74.17)(144.92,74.17)(150,75)
\qbezier(150,75)(153.99,75.69)(157.63,77.04)
\qbezier(157.63,77.04)(161.26,78.4)(165,80)
}
\end{picture}
  \par\end{centering}
 \caption{Examples of containment areas for $\mathcal{SP}$.}
\label{fig:ExSP}
\end{figure}
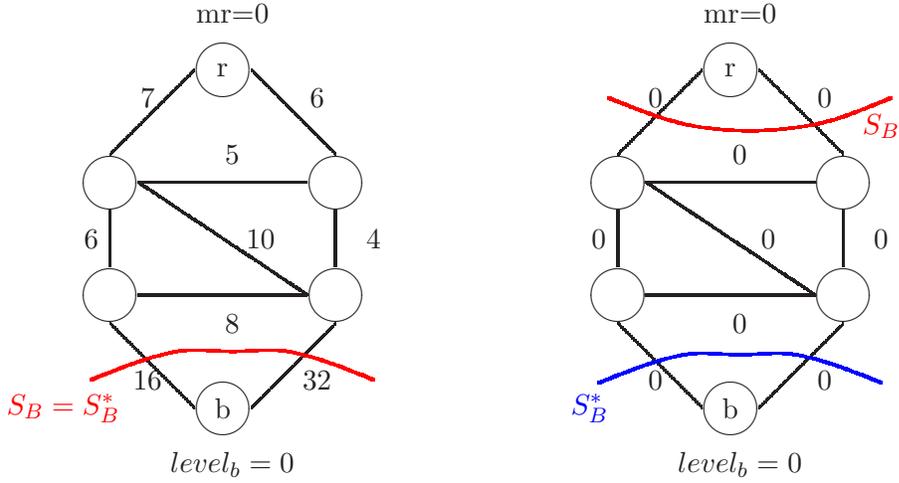

\begin{figure}[t]
\noindent \begin{centering} \ifx\JPicScale\undefined\def\JPicScale{0.75}\fi
\unitlength \JPicScale mm
\begin{picture}(173,95)(0,0)
\linethickness{0.3mm}
\put(50,85){\circle{10}}

\linethickness{0.3mm}
\put(30,65){\circle{10}}

\linethickness{0.3mm}
\put(70,65){\circle{10}}

\linethickness{0.3mm}
\put(30,45){\circle{10}}

\linethickness{0.3mm}
\put(70,45){\circle{10}}

\linethickness{0.3mm}
\put(50,25){\circle{10}}

\linethickness{0.3mm}
\multiput(30,70)(0.12,0.12){125}{\line(1,0){0.12}}
\linethickness{0.3mm}
\multiput(55,85)(0.12,-0.12){125}{\line(1,0){0.12}}
\linethickness{0.3mm}
\put(30,50){\line(0,1){10}}
\linethickness{0.3mm}
\multiput(30,40)(0.12,-0.12){125}{\line(1,0){0.12}}
\linethickness{0.3mm}
\multiput(55,25)(0.12,0.12){125}{\line(1,0){0.12}}
\linethickness{0.3mm}
\put(70,50){\line(0,1){10}}
\linethickness{0.3mm}
\put(35,65){\line(1,0){30}}
\linethickness{0.3mm}
\put(35,45){\line(1,0){30}}
\linethickness{0.3mm}
\multiput(35,65)(0.18,-0.12){167}{\line(1,0){0.18}}
\linethickness{0.3mm}
\put(140,85){\circle{10}}

\linethickness{0.3mm}
\put(120,65){\circle{10}}

\linethickness{0.3mm}
\put(160,65){\circle{10}}

\linethickness{0.3mm}
\put(120,45){\circle{10}}

\linethickness{0.3mm}
\put(160,45){\circle{10}}

\linethickness{0.3mm}
\put(140,25){\circle{10}}

\linethickness{0.3mm}
\multiput(120,70)(0.12,0.12){125}{\line(1,0){0.12}}
\linethickness{0.3mm}
\multiput(145,85)(0.12,-0.12){125}{\line(1,0){0.12}}
\linethickness{0.3mm}
\put(120,50){\line(0,1){10}}
\linethickness{0.3mm}
\multiput(120,40)(0.12,-0.12){125}{\line(1,0){0.12}}
\linethickness{0.3mm}
\multiput(145,25)(0.12,0.12){125}{\line(1,0){0.12}}
\linethickness{0.3mm}
\put(160,50){\line(0,1){10}}
\linethickness{0.3mm}
\put(125,65){\line(1,0){30}}
\linethickness{0.3mm}
\put(125,45){\line(1,0){30}}
\linethickness{0.3mm}
\multiput(125,65)(0.18,-0.12){167}{\line(1,0){0.18}}
\put(50,85){\makebox(0,0)[cc]{r}}

\put(140,85){\makebox(0,0)[cc]{r}}

\put(140,25){\makebox(0,0)[cc]{b}}

\put(50,25){\makebox(0,0)[cc]{b}}

\put(50,95){\makebox(0,0)[cc]{mr=10}}

\put(140,95){\makebox(0,0)[cc]{mr=10}}

\put(35,80){\makebox(0,0)[cc]{7}}

\put(65,80){\makebox(0,0)[cc]{6}}

\put(50,70){\makebox(0,0)[cc]{5}}

\put(75,52){\makebox(0,0)[cc]{4}}

\put(55,55){\makebox(0,0)[cc]{10}}

\put(25,55){\makebox(0,0)[cc]{6}}

\put(50,40){\makebox(0,0)[cc]{8}}

\put(65,30){\makebox(0,0)[cc]{32}}

\put(30,30){\makebox(0,0)[cc]{16}}

\put(140,15){\makebox(0,0)[cc]{$level_b=10$}}

\put(50,15){\makebox(0,0)[cc]{$level_b=10$}}

\put(120,30){\makebox(0,0)[cc]{11}}

\put(155,30){\makebox(0,0)[cc]{12}}

\put(155,80){\makebox(0,0)[cc]{10}}

\put(120,80){\makebox(0,0)[cc]{7}}

\put(165,55){\makebox(0,0)[cc]{13}}

\put(140,70){\makebox(0,0)[cc]{6}}

\put(145,55){\makebox(0,0)[cc]{5}}

\put(115,55){\makebox(0,0)[cc]{3}}

\put(140,40){\makebox(0,0)[cc]{1}}

\textcolor{blue}{
\linethickness{0.3mm}
\qbezier(105,48.29)(105.59,48.67)(112.37,51.07)
\qbezier(112.37,51.07)(119.14,53.48)(125,53.29)
\qbezier(125,53.29)(127.81,52.88)(130.18,51.35)
\qbezier(130.18,51.35)(132.54,49.83)(135,48.29)
\qbezier(135,48.29)(138.94,45.85)(142.39,43.06)
\qbezier(142.39,43.06)(145.84,40.27)(150,38.29)
\qbezier(150,38.29)(154.85,36.17)(159.79,35.03)
\qbezier(159.79,35.03)(164.73,33.9)(170,33.29)
\put(105,44.14){\makebox(0,0)[cc]{$S_B^*$}}
}

\textcolor{red}{
\linethickness{0.3mm}
\qbezier(105,53.86)(105.43,54.17)(110.46,56.36)
\qbezier(110.46,56.36)(115.5,58.54)(120,58.86)
\qbezier(120,58.86)(124.01,58.43)(127.58,55.89)
\qbezier(127.58,55.89)(131.14,53.36)(135,53.86)
\qbezier(135,53.86)(139.98,55.48)(142.9,60.37)
\qbezier(142.9,60.37)(145.83,65.27)(150,68.86)
\qbezier(150,68.86)(152.33,70.57)(154.77,71.87)
\qbezier(154.77,71.87)(157.21,73.18)(160,73.86)
\qbezier(160,73.86)(162.53,74.5)(165.07,74.64)
\qbezier(165.07,74.64)(167.61,74.79)(170,73.86)
\put(173,68.71){\makebox(0,0)[cc]{$S_B$}}
}

\textcolor{red}{
\linethickness{0.3mm}
\qbezier(20,75)(20.66,75.25)(27.63,75.98)
\qbezier(27.63,75.98)(34.6,76.72)(40,75)
\qbezier(40,75)(41.65,74.26)(42.8,72.89)
\qbezier(42.8,72.89)(43.96,71.52)(45,70)
\qbezier(45,70)(46.51,67.55)(47.23,64.72)
\qbezier(47.23,64.72)(47.94,61.9)(50,60)
\qbezier(50,60)(54.2,56.72)(59.43,55.56)
\qbezier(59.43,55.56)(64.65,54.39)(70,55)
\qbezier(70,55)(72.83,55.38)(75.26,56.81)
\qbezier(75.26,56.81)(77.68,58.25)(80,60)
\put(10,70){\makebox(0,0)[cc]{$S_B=S_B^*$}}
}
\end{picture}
  \par\end{centering}
 \caption{Examples of containment areas for $\mathcal{F}$.}
\label{fig:ExFlow}
\end{figure}

\begin{figure}[t]
\noindent \begin{centering} \ifx\JPicScale\undefined\def\JPicScale{0.75}\fi
\unitlength \JPicScale mm
\begin{picture}(175,95)(0,0)
\linethickness{0.3mm}
\put(50,85){\circle{10}}

\linethickness{0.3mm}
\put(30,65){\circle{10}}

\linethickness{0.3mm}
\put(70,65){\circle{10}}

\linethickness{0.3mm}
\put(30,45){\circle{10}}

\linethickness{0.3mm}
\put(70,45){\circle{10}}

\linethickness{0.3mm}
\put(50,25){\circle{10}}

\linethickness{0.3mm}
\multiput(30,70)(0.12,0.12){125}{\line(1,0){0.12}}
\linethickness{0.3mm}
\multiput(55,85)(0.12,-0.12){125}{\line(1,0){0.12}}
\linethickness{0.3mm}
\put(30,50){\line(0,1){10}}
\linethickness{0.3mm}
\multiput(30,40)(0.12,-0.12){125}{\line(1,0){0.12}}
\linethickness{0.3mm}
\multiput(55,25)(0.12,0.12){125}{\line(1,0){0.12}}
\linethickness{0.3mm}
\put(70,50){\line(0,1){10}}
\linethickness{0.3mm}
\put(35,65){\line(1,0){30}}
\linethickness{0.3mm}
\put(35,45){\line(1,0){30}}
\linethickness{0.3mm}
\multiput(35,65)(0.18,-0.12){167}{\line(1,0){0.18}}
\linethickness{0.3mm}
\put(140,85){\circle{10}}

\linethickness{0.3mm}
\put(120,65){\circle{10}}

\linethickness{0.3mm}
\put(160,65){\circle{10}}

\linethickness{0.3mm}
\put(120,45){\circle{10}}

\linethickness{0.3mm}
\put(160,45){\circle{10}}

\linethickness{0.3mm}
\put(140,25){\circle{10}}

\linethickness{0.3mm}
\multiput(120,70)(0.12,0.12){125}{\line(1,0){0.12}}
\linethickness{0.3mm}
\multiput(145,85)(0.12,-0.12){125}{\line(1,0){0.12}}
\linethickness{0.3mm}
\put(120,50){\line(0,1){10}}
\linethickness{0.3mm}
\multiput(120,40)(0.12,-0.12){125}{\line(1,0){0.12}}
\linethickness{0.3mm}
\multiput(145,25)(0.12,0.12){125}{\line(1,0){0.12}}
\linethickness{0.3mm}
\put(160,50){\line(0,1){10}}
\linethickness{0.3mm}
\put(125,65){\line(1,0){30}}
\linethickness{0.3mm}
\put(125,45){\line(1,0){30}}
\linethickness{0.3mm}
\multiput(125,65)(0.18,-0.12){167}{\line(1,0){0.18}}
\put(50,85){\makebox(0,0)[cc]{r}}

\put(140,85){\makebox(0,0)[cc]{r}}

\put(140,25){\makebox(0,0)[cc]{b}}

\put(50,25){\makebox(0,0)[cc]{b}}

\textcolor{blue}{
\put(105,40){\makebox(0,0)[cc]{$S_B^*$}}
}

\textcolor{red}{
\put(175,75){\makebox(0,0)[cc]{$S_B$}}
\put(15,40){\makebox(0,0)[cc]{$S_B=S_B^*$}}
}

\put(50,95){\makebox(0,0)[cc]{mr=1}}

\put(50,15){\makebox(0,0)[cc]{$level_b=1$}}

\put(140,95){\makebox(0,0)[cc]{mr=1}}

\put(140,15){\makebox(0,0)[cc]{$level_b=1$}}

\put(70,80){\makebox(0,0)[cc]{0,75}}

\put(30,80){\makebox(0,0)[cc]{0,75}}

\put(70,30){\makebox(0,0)[cc]{0,75}}

\put(30,30){\makebox(0,0)[cc]{0,75}}

\put(50,40){\makebox(0,0)[cc]{1}}

\put(50,70){\makebox(0,0)[cc]{1}}

\put(50,60){\makebox(0,0)[cc]{0,8}}

\put(25,55){\makebox(0,0)[cc]{0,4}}

\put(75,55){\makebox(0,0)[cc]{0,3}}

\textcolor{red}{
\linethickness{0.3mm}
\qbezier(15,45)(15.38,45.46)(20.28,49.33)
\qbezier(20.28,49.33)(25.17,53.2)(30,55)
\qbezier(30,55)(33.69,55.98)(37.41,55.56)
\qbezier(37.41,55.56)(41.13,55.14)(45,55)
\qbezier(45,55)(51.45,55.23)(57.65,55.93)
\qbezier(57.65,55.93)(63.85,56.63)(70,55)
\qbezier(70,55)(74.83,53.2)(79.72,49.33)
\qbezier(79.72,49.33)(84.62,45.46)(85,45)
}

\put(120,80){\makebox(0,0)[cc]{0,25}}

\put(140,70){\makebox(0,0)[cc]{0,25}}

\put(160,80){\makebox(0,0)[cc]{0,75}}

\put(165,55){\makebox(0,0)[cc]{1}}

\put(145,55){\makebox(0,0)[cc]{0,5}}

\put(140,40){\makebox(0,0)[cc]{1}}

\put(110,55){\makebox(0,0)[cc]{0,25}}

\put(160,30){\makebox(0,0)[cc]{0,75}}

\put(120,30){\makebox(0,0)[cc]{0,5}}

\textcolor{blue}{
\linethickness{0.3mm}
\qbezier(105,45)(105.38,45.46)(110.28,49.33)
\qbezier(110.28,49.33)(115.17,53.2)(120,55)
\qbezier(120,55)(129.9,57.99)(140,57.99)
\qbezier(140,57.99)(150.1,57.99)(160,55)
\qbezier(160,55)(164.83,53.2)(169.72,49.33)
\qbezier(169.72,49.33)(174.62,45.46)(175,45)
}

\textcolor{red}{
\linethickness{0.3mm}
\qbezier(105,50)(105.87,50.46)(115.97,54.35)
\qbezier(115.97,54.35)(126.07,58.23)(135,60)
\qbezier(135,60)(137.59,60.19)(140.19,59.54)
\qbezier(140.19,59.54)(142.8,58.88)(145,60)
\qbezier(145,60)(147.18,61.68)(147.69,64.73)
\qbezier(147.69,64.73)(148.21,67.78)(150,70)
\qbezier(150,70)(152.05,72.01)(154.61,73.19)
\qbezier(154.61,73.19)(157.17,74.36)(160,75)
\qbezier(160,75)(162.5,75.52)(164.96,75.3)
\qbezier(164.96,75.3)(167.42,75.07)(170,75)
}
\end{picture}
  \par\end{centering}
 \caption{Examples of containment areas for $\mathcal{R}$.}
\label{fig:ExReliability}
\end{figure}

Now, we can state our generalization of Theorem \ref{th:impTAStrongBFS}.

\begin{theorem}\label{th:impTAstrong}
Given a maximizable metric $\mathcal{M}=(M,W,mr,met,\prec)$, even under the central daemon, there exists no $(t,A_B^*,1)$-TA-strongly stabilizing protocol for maximum metric spanning tree construction with respect to $\mathcal{M}$ where $A_B^*\varsubsetneq S_B^*$ and $t$ is a given finite integer.
\end{theorem}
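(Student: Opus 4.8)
The plan is to argue by contradiction and reuse the oscillation idea already exploited in the proof of Theorem~\ref{th:necessarConditionStrong}, the only new ingredient being that the perturbed process is now chosen on the boundary of $S_B^*$ rather than at a fixed distance $c$. Assume that some protocol $\mathcal{P}$ is $(t,A_B^*,1)$-TA strongly stabilizing for $\mathcal{M}$ with $A_B^*\varsubsetneq S_B^*$ and $t$ finite. Since the inclusion is strict, there is a weighted system $S$ with a single Byzantine process $b$ and a process $v\in S_B^*\setminus A_B^*$; by the very definition of $S_B^*$ this witness satisfies $\mu(v,r)\prec\mu(v,b)$. Note that $v$ is $A_B^*$-correct, so the time-contained guarantee must eventually freeze its $O$-variables, whereas this strict inequality is precisely what will let $b$ force $level_v$ to change back and forth forever.

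First I would pin down the witness system explicitly, reusing the symmetric construction of Theorem~\ref{th:necessarConditionStrong} (a path carrying $r$ and $b$ at its two ends, as in Figures~\ref{fig:possStrongCase1} and~\ref{fig:possStrongCase23}); a non-empty $S_B^*$ forces $|M|\geq 2$, since otherwise $\mu(v,r)=\mu(v,b)=mr$ and the statement is vacuous. The utility condition together with boundedness and monotonicity of $\mathcal{M}$ supplies a chain $mr=m_0\succ m_1\succ\cdots$ and edge weights realizing a genuine strict gap $\mu(v,r)\prec\mu(v,b)$. The purpose of fixing the topology is to guarantee that, in any legitimate configuration in which $b$ impersonates a root, local maximality of the parent choice (condition~3 of an $\mathcal{M}$-path) forces $level_v=\mu(v,b)$, while in any legitimate configuration with $r$ as the sole root it forces $level_v=\mu(v,r)$.

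Then I would build the infinite execution in repeating phases. Starting from a time-contained configuration $\rho^\ast$ (reached by TA-strong stabilization from an arbitrary initial configuration), let $b$ hold the impersonating state $prnt_b=\bot$, $level_b=mr$; since $\mathcal{P}$ is in particular self-stabilizing and $spec$ permits trees rooted at Byzantine processes, the system converges to an $A_B^*$-legitimate and $A_B^*$-stable configuration $\rho_1$ in which every process of $S_B^*$ (in particular $v$) points toward $b$ with $level_v=\mu(v,b)$ and every $S_B^*$-correct process points toward $r$. Next let $b$ execute $\mathcal{P}$ as a correct node; self-stabilization drives the system to the unique maximum metric tree rooted at $r$, an $A_B^*$-legitimate and $A_B^*$-stable configuration $\rho_2$ with $level_v=\mu(v,r)\prec\mu(v,b)$, so $v$ has changed an $O$-variable. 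Finally let $b$ revert to the impersonating state, returning the system toward a $\rho_1$-type configuration and forcing $v$ to change back.

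Each pass between consecutive $\rho_1$- and $\rho_2$-type configurations is therefore an $A_B^*$-TA-disruption whose witness is the $A_B^*$-correct process $v$, and the phases may be repeated indefinitely, producing infinitely many $A_B^*$-TA-disruptions after $\rho^\ast$ (equivalently, $level_v$ never stabilizes). This contradicts both the finite bound $t$ and clause~(2) of the definition of a time-contained configuration, which ends the proof. The main obstacle is to make both endpoints of every cycle simultaneously $A_B^*$-legitimate and $A_B^*$-stable so that each cycle genuinely counts as a disruption: this hinges on the fact that $spec$'s tolerance of Byzantine-rooted trees makes $\rho_1$ legitimate, that the self-stabilizing convergence really reaches $\rho_1$ and $\rho_2$ under the two regimes of $b$, and that the strict gap $\mu(v,r)\prec\mu(v,b)$ — the defining feature of $S_B^*$, and what distinguishes it from the $\preceq$ used in $S_B$ — forces a genuine change of $level_v$ at every turn.
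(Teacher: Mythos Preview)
Your oscillation strategy is exactly the paper's: let $b$ alternately impersonate a root (so that the unique legitimate configuration forces $level_v=\mu(v,b)$) and then behave correctly (forcing $level_v=\mu(v,r)\prec\mu(v,b)$), yielding an $A_B^*$-TA-disruption at each cycle since $v\notin A_B^*$. The paper's execution differs only in detail: instead of the long chain of Theorem~\ref{th:necessarConditionStrong} it builds a minimal six-node system $\{r,u,u',v,v',b\}$ with two symmetric paths $r\text{--}u\text{--}v\text{--}b$ and $r\text{--}u'\text{--}v'\text{--}b$ and weights $w,w'$ chosen so that $S_B^*=\{v,v'\}$ exactly (strict containment then immediately gives $v\notin A_B^*$ or $v'\notin A_B^*$), and it disposes of the degenerate sub-cases ($|M|=1$, or $m=met(mr,w)$ a fixed point) by an explicit preliminary case split showing $S_B^*=\emptyset$ there, whereas you fold this into the remark that a non-empty $S_B^*$ already forces enough structure on $\mathcal{M}$.
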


\begin{proof}
Let $\mathcal{M}=(M,W,mr,met,\prec)$ be a maximizable metric and $\mathcal{P}$ be a $(t,A_B^*,1)$-TA-strongly stabilizing protocol for maximum metric spanning tree construction protocol with respect to $\mathcal{M}$ where $A_B^*\varsubsetneq S_B^*$ and $t$ is a finite integer. We must distinguish the following cases:

\begin{description}
\item[Case 1:] $|M|=1$.\\
Denote by $m$ the metric value such that $M=\{m\}$. For any system and for any process $v$, we have $\mu(v,r)=\underset{b\in B}{min_\prec}\{\mu(v,b)\}=m$. Consequently, $S_B^*=\emptyset$ for any system. Then, it is absurd to have $A_B^*\varsubsetneq S_B^*$.
\item[Case 2:] $|M|\geq 2$.\\
By definition of a bounded metric, we can deduce that there exists $m\in M$ and $w\in W$ such that $m=met(mr,w)\prec mr$. Then, we must distinguish the following cases:
\begin{description}
\item[Case 2.1:] $m$ is a fixed point of $\mathcal{M}$.\\
Let $S$ be a system such that any edge incident to the root or a Byzantine process has a weight equals to $w$. Then, we can deduce that we have: $m=\underset{b\in B}{max_\prec}\{\mu(r,b)\}\prec \mu(r,r)=mr$ and for any correct process $v\neq r$, $\mu(v,r)=\underset{b\in B}{max_\prec}\{\mu(v,b)\}=m$. Hence, $S_B^*=\emptyset$ for any such system. Then, it is absurd to have $A_B^*\varsubsetneq S_B^*$.
\item[Case 2.2:] $m$ is not a fixed point of $\mathcal{M}$.\\
This implies that there exists $w'\in W$ such that: $met(m,w')\prec m$ (remember that $\mathcal{M}$ is bounded). Consider the following system: $V=\{r,u,u',v,v',b\}$, $E=\{\{r,u\},\{r,u'\},$ $\{u,v\},\{u',v'\},\{v,b\},\{v',b\}\}$, $w_{r,u}=w_{r,u'}=w_{v,b}=w_{v',b}=w$, and $w_{u,v}=w_{u',v'}=w'$ ($b$ is a Byzantine process). We can see that $S_B^*=\{v,v'\}$. Since $A_B^*\varsubsetneq S_B$, we have: $v\notin A_B^*$ or $v'\notin A_B^*$. Consider now the following configuration $\rho_0$: $prnt_r=prnt_b=\bot$, $level_r=level_b=mr$, $dist_r=dist_b=0$ and $prnt$, $level$, and $dist$  variables of other processes are arbitrary (see Figure \ref{fig:impTAstrong}, other variables may have arbitrary values but other variables of $b$ are identical to those of $r$).

Assume now that $b$ takes exactly the same actions as $r$ (if any) immediately after $r$ (note that $r\notin A_B^*$ and hence $prnt_r=\bot$, $level_r=mr$, and $dist_r=0$ still hold by closure and then $prnt_b=\bot$, $level_b=mr$, and $dist_r=0$ still hold too). Then, by symmetry of the execution and by convergence of $\mathcal{P}$ to $spec$, we can deduce that the system reaches in a finite time a configuration $\rho_1$ (see Figure \ref{fig:impTAstrong}) in which: $prnt_r=prnt_b=\bot$, $prnt_u=prnt_{u'}=r$, $prnt_v=prnt_{v'}=b$, $level_r=level_b=mr$, $level_u=level_{u'}=level_v=level_{v'}=m$, and $\forall v\in V,dist_v=legal\_dist_{prnt_v}$ (because this configuration is the only one in which all correct process $v$ satisfies $spec(v)$ when $prnt_r=prnt_b=\bot$ and $level_r=level_b=mr$ since $met(m,w')\prec m$). Note that $\rho_1$ is $A_B^*$-legitimate for $spec$ and $A_B^*$-stable (whatever $A_B^*$ is).

Assume now that $b$ behaves as a correct processor with respect to $\mathcal{P}$. Then, by convergence of $\mathcal{P}$ in a fault-free system starting from $\rho_1$ which is not legitimate (remember that a TA-strongly stabilizing algorithm is a special case of self-stabilizing algorithm), we can deduce that the system reach in a finite time a configuration $\rho_2$ (see Figure \ref{fig:impTAstrong}) in which: $prnt_r=\bot$, $prnt_u=prnt_{u'}=r$, $prnt_v=u$, $prnt_{v'}=u'$, $prnt_b=v$ (or $prnt_b=v'$), $level_r=mr$, $level_u=level_{u'}=m$ $level_v=level_{v'}=met(m,w')=m'$, $level_b=met(m',w)=m''$, and $\forall v\in V,dist_v=legal\_dist_{prnt_v}$. Note that processes $v$ and $v'$ modify their O-variables in the portion of execution between $\rho_1$ and $\rho_2$ and that $\rho_2$ is $A_B^*$-legitimate for $spec$ and $A_B^*$-stable (whatever $A_B^*$ is). Consequently, this portion of execution contains at least one $A_B^*$-TA-disruption (whatever $A_B^*$ is).

Assume now that the Byzantine process $b$ takes the following state: $prnt_b=\bot$ and $level_b=mr$. This step brings the system into configuration $\rho_3$ (see Figure \ref{fig:impTAstrong}). From this configuration, we can repeat the execution we constructed from $\rho_0$. By the same token, we obtain an execution of $\mathcal{P}$ which contains $c$-legitimate and $c$-stable configurations (see $\rho_1$) and an infinite number of $A_B^*$-TA-disruption (whatever $A_B^*$ is) which contradicts the $(t,A_B^*,1)$-TA-strong stabilization of $\mathcal{P}$.
\end{description}
\end{description}
\end{proof}

\begin{figure}[t]
\noindent \begin{centering} \include{impTAstrong}
  \par\end{centering}
 \caption{Configurations used in proof of Theorem \ref{th:impTAstrong}.}
\label{fig:impTAstrong}
\end{figure}

\section{Topology-Aware Strongly Stabilizing Protocol}\label{sec:protocol}

The goal of this section is to provide a $(t,S_B^*,n-1)$-TA strongly stabilizing protocol in order to match the lower bound on containment area provided by the Theorem \ref{th:impTAstrong}. If we focus on the protocol provided by \cite{DMT10ca} (which is $(S_B,n-1)$-TA strictly stabilizing), we can prove that this protocol does not satisfy our constraints since we have the following result.

\begin{theorem}\label{th:not2TAStrong}
Given a maximizable metric $\mathcal{M}=(M,W,mr,met,\prec)$, the protocol of \cite{DMT10ca} is not a $(t,S_B^*,2)$-TA strongly stabilizing protocol for maximum metric spanning tree construction with respect to $\mathcal{M}$ where $t$ is a given finite integer.
\end{theorem}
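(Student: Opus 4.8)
The plan is to exploit the fact that the containment area $S_B$ of the protocol of \cite{DMT10ca} is in general strictly larger than $S_B^*$: the difference $S_B\setminus S_B^*$ consists exactly of the correct processes $v$ for which $\mu(v,r)=\max_\prec\{\mu(v,b):b\in B\}$ (equality rather than the strict inequality defining $S_B^*$). Such a $v$ is $S_B^*$-correct, so a $(t,S_B^*,2)$-TA strongly stabilizing protocol would be obliged to stop disturbing it after finitely many steps. I will instead exhibit a two-Byzantine system in which the protocol of \cite{DMT10ca} forces such a $v$ to change its O-variables infinitely often, producing infinitely many $S_B^*$-TA-disruptions and contradicting strong stabilization for every finite $t$.

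First I would fix the weighted system, in the style of the constructions used for Theorems \ref{th:necessarConditionStrong} and \ref{th:impTAstrong}. Using the boundedness of $\mathcal{M}$ I pick $m=met(mr,w)\preceq mr$ and, distinguishing the cases $|M|=1$ and $|M|\geq 2$ (and, within the latter, whether $m$ is a fixed point), I place two Byzantine processes $b_1,b_2$ and a designated correct process $v$ so that the edge weights yield $\mu(v,r)=\mu(v,b_1)=\mu(v,b_2)$ whenever $level_r=level_{b_1}=level_{b_2}=mr$. This equality guarantees $v\in S_B\setminus S_B^*$, that is, $v$ is $S_B^*$-correct while lying on a metric plateau shared with both Byzantine processes. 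When $|M|=1$ one has $S_B^*=\emptyset$, so every correct non-root process is $S_B^*$-correct and the whole system is such a plateau; otherwise the plateau is the region in which the level has ceased to strictly decrease.

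The heart of the argument is the disruption execution. Starting from an $S_B^*$-legitimate and $S_B^*$-stable configuration in which $v$ points along the real tree, I let $b_1$ and $b_2$ alternately advertise the shared plateau value together with a $dist$ value chosen to be momentarily the smallest admissible one seen by $v$. Since $v$ sits at equal level, the distance rule of \cite{DMT10ca} (on a plateau it imposes $dist_v=dist_{prnt_v}+1$ and keeps this value consistent with the neighbor selected at that level) forces $v$ to realign its O-variables toward whichever Byzantine currently exhibits the admissible distance. The two Byzantine processes then take turns resetting and refreshing their advertised plateau so that, after each realignment, the system returns to an $S_B^*$-legitimate and $S_B^*$-stable configuration, thereby closing one $S_B^*$-TA-disruption; repeating the alternation indefinitely yields infinitely many disruptions localized at the single $S_B^*$-correct process $v$, which refutes the $(t,S_B^*,2)$-TA strong stabilization of the protocol.

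The main obstacle will be to justify that two Byzantine processes are genuinely needed for this perpetual disturbance under this specific protocol. The $dist$ variable is designed precisely to bound the admissible length of a plateau and thus to eventually reject a \emph{single} fake root once the computed plateau distance would grow too large; I must therefore argue that $b_1$ and $b_2$ can cooperate to refresh the plateau from its two ends so that $v$ never observes an overlong plateau and hence never locks onto the real root $r$. Carrying this out requires a careful step-by-step description of the distance update and of the tie-breaking applied at equal level, together with a verification that each alternation both changes an O-variable of $v$ and returns the entire system to an $S_B^*$-stable configuration, so that each portion of the execution genuinely qualifies as an $S_B^*$-TA-disruption in the sense of the definition.
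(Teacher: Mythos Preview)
Your overall target is right: you want an $S_B^*$-correct process $v$ with $\mu(v,r)=\max_{b\in B}\mu(v,b)$ and an execution in which $v$ changes its O-variables infinitely often. But two points separate your plan from a working argument, and from the paper's proof.

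First, the quantification. The theorem only asserts that the protocol of \cite{DMT10ca} is not $(t,S_B^*,2)$-TA strongly stabilizing; one counterexample suffices. The paper simply fixes the shortest-path metric $\mathcal{SP}$, one concrete weighted graph, and one execution. Your case analysis over all maximizable metrics (and the $|M|=1$ sub-case) is unnecessary work and, in the $|M|=1$ case, actually hopeless: there the protocol of \cite{DMT10ca} degenerates to the $\mathcal{NC}$ case where strong stabilization \emph{does} hold, so you will not find a counterexample there.

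Second, and more seriously, your disruption mechanism is not the right one. You propose that $b_1$ and $b_2$ both sit on $v$'s plateau and alternately advertise a \emph{small} $dist$, expecting $v$ to realign toward whichever Byzantine currently looks closest. But in the protocol of \cite{DMT10ca} a process at its optimal level with a consistent parent is \emph{not} enabled merely because some neighbour shows a smaller $dist$; once $v$ points along a genuine $\mathcal{M}$-path to $r$, nothing in your scenario dislodges it. The paper's idea is the opposite: one Byzantine, $b_1$, advertises $level=mr$ together with a \emph{large} $dist$ (here $dist_{b_1}=7$ with $D=10$), so that the correct processes on $v$'s route to $r$ legitimately adopt $dist=D-1$; the protocol's rule ``never choose a parent with $dist\geq D-1$'' then bars $v$ from the root side permanently. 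With the root side blocked, $v$ is forced to hang under $b_2$, and now $b_2$ perturbs its \emph{level} (not its $dist$) back and forth, which propagates to $v$ as an O-variable change each time. The two Byzantines thus play asymmetric roles: $b_1$ freezes the root path via the $dist$ bound, $b_2$ supplies the oscillation. Your proposal identifies the obstacle (``$v$ must never lock onto $r$'') but the mechanism you sketch---two Byzantines refreshing the plateau with small distances---does not remove it.
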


\begin{proof}
To prove this result, it is sufficient to construct an execution of the protocol of \cite{DMT10ca} for a given metric $\mathcal{M}$ which contains an infinite number of $S_B^*$-TA disruptions with two Byzantine processes.

Consider the shortest path metric $\mathcal{SP}$ defined above and the weighted system defined by Figure \ref{fig:CounterExample01} ($r$ denotes the root and $b_1$ and $b_2$ are two Byzantine processes). We recall that the protocol of \cite{DMT10ca} uses an upper bound $D$ on the length of any path of the tree and that the protocol is built in such a way that a process cannot choose as parent a neighbor with a $dist$ variable greater or equals to $D-1$. Here, we assume that $D=10$.

If we consider the initial configuration $\rho_1$ defined by Figure \ref{fig:CounterExample02}, we can state that processes $p_2$ and $p_3$ cannot modify their state as long as $b_1$ remains in its state. Moreover, $r$ and $p_1$ are never enabled by the protocol. In this way, it is possible to construct the following portion of execution $e_1$: $b_2$ modifies its level variable to 1. Then, $p_5$ and $p_4$ update their level variable to obtain configuration $\rho_2$ of Figure \ref{fig:CounterExample02}. Note that $e_1$ contains a $S_B^*$-TA disruption since $p_4$ modified one of its O-variables (namely, level) and $p_4\notin S_B^*$. From $\rho_2$, it is possible to construct the following portion of execution $e_2$: $b_2$ modifies its level variable to 0. Then, $p_5$ and $p_4$ update their level variable to obtain configuration $\rho_1$.

\begin{figure}
\noindent \begin{centering} \ifx\JPicScale\undefined\def\JPicScale{0.9}\fi
\unitlength \JPicScale mm
\begin{picture}(120,47.5)(0,0)
\linethickness{0.3mm}
\put(5,24.6){\circle{10}}

\linethickness{0.3mm}
\put(25,24.6){\circle{10}}

\linethickness{0.3mm}
\put(45,24.6){\circle{10}}

\linethickness{0.3mm}
\put(75,34.6){\circle{10}}

\linethickness{0.3mm}
\put(75,14.6){\circle{10}}

\linethickness{0.3mm}
\put(95,34.6){\circle{10}}

\linethickness{0.3mm}
\put(95,14.6){\circle{10}}

\linethickness{0.3mm}
\put(115,14.6){\circle{10}}

\put(95,34.6){\makebox(0,0)[cc]{$b_1$}}

\put(115,14.6){\makebox(0,0)[cc]{$b_2$}}

\put(5,24.6){\makebox(0,0)[cc]{$r$}}

\linethickness{0.3mm}
\put(10,24.6){\line(1,0){10}}
\put(10,24.6){\vector(-1,0){0.12}}
\linethickness{0.3mm}
\put(30,24.6){\line(1,0){10}}
\linethickness{0.3mm}
\multiput(50,24.6)(0.24,0.12){83}{\line(1,0){0.24}}
\put(70,34.6){\vector(2,1){0.12}}
\linethickness{0.3mm}
\multiput(50,24.6)(0.24,-0.12){83}{\line(1,0){0.24}}
\linethickness{0.3mm}
\put(80,34.6){\line(1,0){10}}
\put(90,34.6){\vector(1,0){0.12}}
\linethickness{0.3mm}
\put(80,14.6){\line(1,0){10}}
\put(90,14.6){\vector(1,0){0.12}}
\linethickness{0.3mm}
\put(100,14.6){\line(1,0){10}}
\put(110,14.6){\vector(1,0){0.12}}
\put(25,24.6){\makebox(0,0)[cc]{$p_1$}}

\put(45,24.6){\makebox(0,0)[cc]{$p_2$}}

\put(75,34.6){\makebox(0,0)[cc]{$p_3$}}

\put(75,14.6){\makebox(0,0)[cc]{$p_4$}}

\put(95,14.6){\makebox(0,0)[cc]{$p_5$}}

\textcolor{blue}{
\linethickness{0.3mm}
\qbezier(77.5,42.1)(79.23,41.62)(80.4,40.17)
\qbezier(80.4,40.17)(81.57,38.72)(82.5,37.1)
\qbezier(82.5,37.1)(83.46,35.33)(83.75,33.35)
\qbezier(83.75,33.35)(84.04,31.37)(85,29.6)
\qbezier(85,29.6)(85.95,28.04)(87.19,26.77)
\qbezier(87.19,26.77)(88.42,25.5)(90,24.6)
\qbezier(90,24.6)(92.42,23.53)(95.23,23.62)
\qbezier(95.23,23.62)(98.04,23.72)(100,22.1)
\qbezier(100,22.1)(101.92,20.14)(102.72,17.35)
\qbezier(102.72,17.35)(103.53,14.56)(102.5,12.1)
\qbezier(102.5,12.1)(101.19,9.46)(98.28,7.99)
\qbezier(98.28,7.99)(95.38,6.52)(92.5,7.1)
\qbezier(92.5,7.1)(89.63,8.07)(88.36,11.32)
\qbezier(88.36,11.32)(87.09,14.57)(85,17.1)
\qbezier(85,17.1)(82.66,19.28)(80,20.85)
\qbezier(80,20.85)(77.34,22.42)(75,24.6)
\qbezier(75,24.6)(72.57,26.83)(70.25,29.16)
\qbezier(70.25,29.16)(67.93,31.49)(67.5,34.6)
\qbezier(67.5,34.6)(67.41,36.04)(68.14,37.34)
\qbezier(68.14,37.34)(68.88,38.64)(70,39.6)
\qbezier(70,39.6)(71.57,40.98)(73.54,41.75)
\qbezier(73.54,41.75)(75.51,42.51)(77.5,42.1)
}

\textcolor{red}{
\linethickness{0.3mm}
\qbezier(72.5,47.1)(69.24,47.35)(66.15,46.59)
\qbezier(66.15,46.59)(63.06,45.83)(60,44.6)
\qbezier(60,44.6)(56.5,43.25)(53.41,41.41)
\qbezier(53.41,41.41)(50.32,39.57)(47.5,37.1)
\qbezier(47.5,37.1)(44.16,34.44)(40.97,31.51)
\qbezier(40.97,31.51)(37.79,28.58)(37.5,24.6)
\qbezier(37.5,24.6)(37.62,21.47)(39.99,19.05)
\qbezier(39.99,19.05)(42.36,16.64)(45,14.6)
\qbezier(45,14.6)(47.22,12.77)(49.73,11.59)
\qbezier(49.73,11.59)(52.23,10.41)(55,9.6)
\qbezier(55,9.6)(59.98,8.27)(64.95,8.17)
\qbezier(64.95,8.17)(69.92,8.08)(75,7.1)
\qbezier(75,7.1)(78.92,6.04)(82.46,4.31)
\qbezier(82.46,4.31)(86,2.58)(90,2.1)
\qbezier(90,2.1)(94.01,1.65)(98.07,1.85)
\qbezier(98.07,1.85)(102.13,2.06)(105,4.6)
\qbezier(105,4.6)(107.27,6.95)(107.73,10.37)
\qbezier(107.73,10.37)(108.19,13.79)(107.5,17.1)
\qbezier(107.5,17.1)(107.03,19.39)(105.72,21.32)
\qbezier(105.72,21.32)(104.41,23.25)(102.5,24.6)
\qbezier(102.5,24.6)(100.24,25.88)(97.45,25.79)
\qbezier(97.45,25.79)(94.66,25.69)(92.5,27.1)
\qbezier(92.5,27.1)(89.49,30.03)(88.66,34.48)
\qbezier(88.66,34.48)(87.83,38.94)(85,42.1)
\qbezier(85,42.1)(82.42,44.42)(79.2,45.61)
\qbezier(79.2,45.61)(75.99,46.8)(72.5,47.1)
}

\put(15,27.1){\makebox(0,0)[cc]{1}}

\put(35,27.1){\makebox(0,0)[cc]{1}}

\put(60,32.1){\makebox(0,0)[cc]{1}}

\put(60,17.1){\makebox(0,0)[cc]{0}}

\put(85,37.1){\makebox(0,0)[cc]{1}}

\put(85,12.1){\makebox(0,0)[cc]{1}}

\put(105,12.1){\makebox(0,0)[cc]{1}}

\textcolor{red}{
\put(65,24.6){\makebox(0,0)[cc]{$S_B$}}
}

\textcolor{blue}{
\put(82.5,24.6){\makebox(0,0)[cc]{$S_B^*$}}
}
\end{picture}
  \par\end{centering}
 \caption{System used in proof of Theorem \ref{th:not2TAStrong}.}
\label{fig:CounterExample01}
\end{figure}
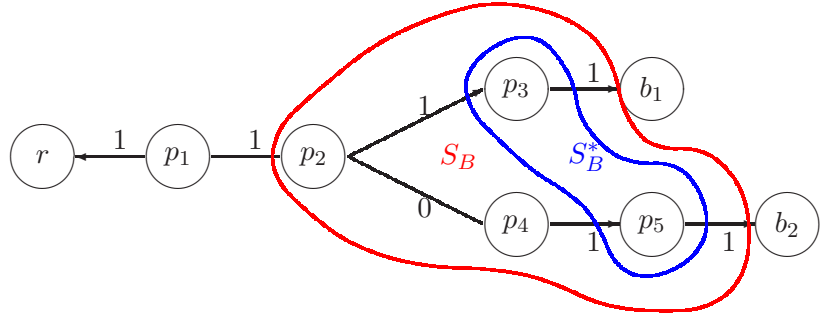

\begin{figure}
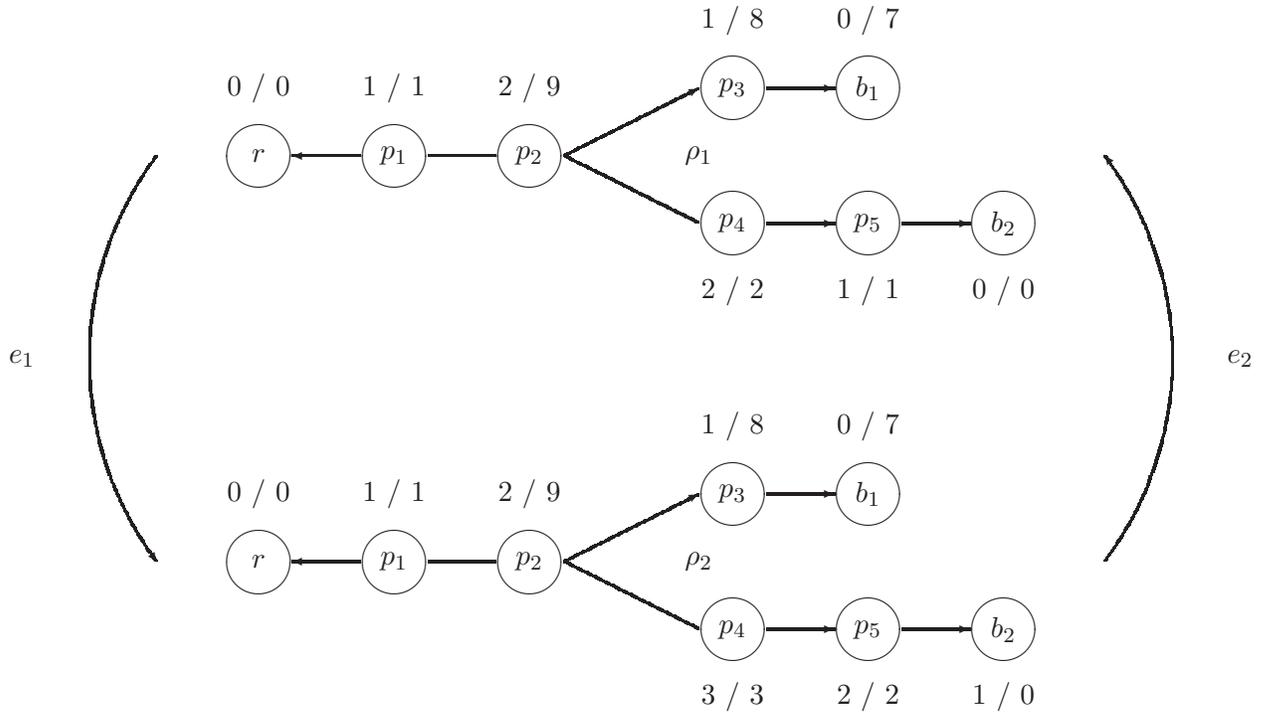

\noindent \begin{centering} \include{CounterExample02}
  \par\end{centering}
 \caption{Configurations used in proof of Theorem \ref{th:not2TAStrong} (for each process $v$, we use the notation $level_v$ / $dist_v$).}
\label{fig:CounterExample02}
\end{figure}

Consequently, it is possible to construct an infinite execution $e_1e_2e_1e_2\ldots$ starting from $\rho_1$ that contains an infinite number of $S_B^*$-TA disruptions with two Byzantine processes. This finishes the proof.

\end{proof}

\subsection{Presentation of the Protocol}

In contrast of Theorem \ref{th:not2TAStrong}, we provide in this paper a new protocol which is $(t,S_B^*,n-1)$-TA strongly stabilizing for maximum metric spanning tree construction. Our protocol needs a supplementary assumption on the system. We introduce the following definition.

\begin{definition}[Set of used metric values]
Given an assigned metric $\mathcal{AM}=(M,W,met,mr,\prec,wf)$ over a system $S$, the \emph{set of used metric values} of $\mathcal{AM}$ is defined as $M(S)=\{m\in M|\exists v\in V,(\mu(v,r)=m)\vee(\exists b\in B, \mu(v,b)=m)\}$.
\end{definition}

We assume that we always have $|M(S)|\geq 2$ (the necessity of this assumption is explained below). Nevertheless, note that the contrary case ($|M(S)|=1$) is possible if and only if the assigned metric is equivalent to $\mathcal{NC}$. As the protocol of \cite{DMT11j} performs $(t,0,n-1)$- strong stabilization with a finite $t$ for this metric, we can achieves the $(t,S_B^*,n-1)$-TA strong stabilization when $|M(S)|=1$ (since this implies that $S_B^*=\emptyset$). In this way, this assumption does not weaken the possibility result.

Although the protocol of \cite{DMT10ca} is not TA strongly stabilizing (see Theorem \ref{th:not2TAStrong}), our protocol borrows fundamental strategy from it. In this protocol, any process try to maximize its $level$ in the tree by choosing as its parent the neighbor that provide the best metric value. The key idea of this protocol is to use the distance variable (upper bounded by a given constant $D$) to detect and break cycles of process which has the same maximum metric. To achieve the TA strict stabilization, the protocol ensures a fair selection along the set of its neighbor with a round-robin order.

The possibility of infinite number of disruptions of the protocol of \cite{DMT10ca} mainly comes from the following fact: a Byzantine process can independently lie about its $level$ and its $dist$ variable. For example, a Byzantine process can provide a $level$ equals to $mr$ and a $dist$ arbitrarily large. In this way, it may lead a correct process of $S_B\setminus S_B^*$ to have a $dist$ variable equals to $D-1$ such that no other correct process can choose it as its parent (this rule is necessary to break cycle) but it cannot modify its state (this rule is only enabled when $dist$ is equals to $D$). Then, this process may always prevent some of its neighbors to join a $\mathcal{M}$-path connected to the root and hence allow another Byzantine process to perform an infinite number of disruptions.

It is why we modified the management of the $dist$ variable (note that others variables are managed exactly in the same way as in the protocol of \cite{DMT10ca}). In order to contain the effect of Byzantine process on $dist$ variables, each process that has a $level$ different from the one of its parent in the tree sets its $dist$ variable to $0$. In this way, a Byzantine process modifying its $dist$ variable can only affect correct process that have the same $level$. Consequently, in the case where $|M(S)|\geq 2$, we are ensured that correct processes of $S_B\setminus S_B^*$ cannot keep a $dist$ variable equals or greater than $D-1$ infinitely. Hence, a correct process of $S_B\setminus S_B^*$ cannot be disturbed infinitely often without joining a $\mathcal{M}$-path connected to the root.

We can see that the assumption $|M(S)|\geq 2$ is essential to perform the topology-aware strong stabilization. Indeed, in the case where $|M(S)|=1$, Byzantine processes can play exactly the scenario described above (in this case, our protocol is equivalent to the one of \cite{DMT10ca}).

The second modification we bring to the protocol of \cite{DMT10ca} follows. When a process has an inconsistent $dist$ variable with its parent, we allow it only to increase its $dist$ variable. If the process needs to decrease its $dist$ variable (when it has a strictly greater distance than its parent), then the process must change its parent. This rule allows us to bound the maximal number of steps of any process between two modifications of its parent (a Byzantine process cannot lead a correct one to infinitely often increase and decrease its distance without modifying its pointer).

Our protocol is formally described in Algorithm \ref{algo:max}.

\begin{algorithm}
\caption{$\mathcal{SSMAX}$, TA strongly stabilizing protocol for maximum metric tree construction.}\label{algo:max}
\scriptsize
\begin{description}
\item{Data:}~\\
$N_v$: totally ordered set of neighbors of $v$.\\
$D$: upper bound of the number of processes in a simple path.
\item{Variables:}~\\
$prnt_v\in\begin{cases}\{\bot\} \text{ if } v=r\\ N_v \text{ if } v\neq r\end{cases}$: pointer on the parent of $v$ in the tree.\\
$level_v\in\{m\in M|m\preceq mr\}$: metric of the node.\\
$dist_v\in\{0,\ldots,D\}$: hop counter.
\item{Functions:}~\\
For any subset $A\subseteq N_v$, $choose_v(A)$ returns the first element of $A$ which is bigger than $prnt_v$ (in a round-robin fashion).\\
$current\_dist_v()=\left\{\begin{array}{l}
0 \mbox{ if } level_{prnt_v}\neq level_v\\
min(dist_{prnt_v}+1,D) \mbox{ if } level_{prnt_v}=level_v
\end{array}\right.$
\item{Rules:}~\\
$\boldsymbol{(R_r)}::(v=r)\wedge((level_v\neq mr)\vee(dist_v\neq 0))\longrightarrow level_v:=mr;~dist_v:=0$

$\boldsymbol{(R_1)}::(v\neq r)\wedge(prnt_v\in N_v)\wedge((dist_v<current\_dist_v())\vee(level_v\neq met(level_{prnt_v},w_{v,prnt_v})))$\\ $~~~~~~~~~~~~~\longrightarrow level_v:=met(level_{prnt_v},w_{v,prnt_v});~dist_v:=current\_dist_v()$

$\boldsymbol{(R_2)}::(v\neq r)\wedge((dist_v=D)\vee(dist_v>current\_dist_v()))\wedge(\exists u \in N_v, dist_u<D-1)$\\$~~~~~~~~~~~~~\longrightarrow prnt_v:=choose_v(\{u\in N_v|dist_v<D-1\});~level_v:=met(level_{prnt_v},w_{v,prnt_v});~dist_v:=current\_dist_v()$

$\boldsymbol{(R_3)}::(v\neq r)\wedge(\exists u\in N_v,(dist_u<D-1)\wedge(level_v\prec met(level_u,w_{u,v})))$\\
$~~~~~~~~~~~\longrightarrow prnt_v:=choose_v\Bigg(\Bigg\{u\in N_v\Big|(level_u<D-1)\wedge(met(level_u,w_{u,v})=\underset{q\in N_v/level_q<D-1}{max_\prec}\{met(level_q,w_{q,v})\})\Bigg\}\Bigg);$\\
$~~~~~~~~~~~~~~~level_v:=met(level_{prnt_v},w_{prnt_v,v});~dist_v:=current\_dist_v()$
\end{description}
\end{algorithm}
\normalsize

\subsection{Proof of the $(S_B,n-1)$-TA Strict Stabilization for $spec$}

This proof is similar to the one of \cite{DMT10ca} but we must modify it to take in account modifications of the protocol. In \cite{DMT10ca}, we proved the following useful property about maximizable metrics.

\begin{lemma}\label{lem:goodProperty}
For any process $v\in V$, we have:
\[\forall u\in N_v,met\left(\underset{p\in B\cup\{r\}}{max_\prec}\{\mu(u,p)\},w_{u,v}\right)\preceq\underset{p\in B\cup\{r\}}{max_\prec}\{\mu(v,p)\}\]
\end{lemma}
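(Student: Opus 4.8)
The plan is to reduce the inequality to a single source of metric values and then prove a per-source bound by path surgery, relying essentially on the boundedness of $\mathcal{M}$. First I would let $p^{*}$ be a process in $B\cup\{r\}$ realizing $\underset{p\in B\cup\{r\}}{max_\prec}\{\mu(u,p)\}$, so that the left-hand side equals $met(\mu(u,p^{*}),w_{u,v})$. Since $\mu(v,p^{*})\preceq\underset{p\in B\cup\{r\}}{max_\prec}\{\mu(v,p)\}$ is immediate, it suffices to establish the key claim that for every $p\in B\cup\{r\}$ one has $met(\mu(u,p),w_{u,v})\preceq\mu(v,p)$, and then instantiate it at $p=p^{*}$.

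To prove the claim, I would fix a maximum metric path $P$ from $u$ to $p$ (its existence is guaranteed because $\mathcal{M}$ is maximizable), whose metric is by definition $\mu(u,p)$, and distinguish two cases according to whether $v$ lies on $P$. If $v\notin P$, then prepending the edge $\{u,v\}$ to $P$ produces a simple rooted path from $v$ (with $p$ as root) of metric exactly $met(\mu(u,p),w_{u,v})$; as $\mu(v,p)$ is the maximum metric over all rooted paths from $v$, the claim follows directly. If $v\in P$, extension is impossible, and instead I would consider the sub-path $P''$ of $P$ from $v$ to $p$, whose metric $m''$ satisfies $m''\preceq\mu(v,p)$ because $P''$ is itself a rooted path from $v$. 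Boundedness applied along the remaining edges of $P$ between $v$ and $u$ gives $\mu(u,p)\preceq m''$ (the prefix metric of a rooted path is non-increasing as one moves away from the root), and one more application of boundedness to the edge $\{u,v\}$ yields $met(\mu(u,p),w_{u,v})\preceq\mu(u,p)\preceq m''\preceq\mu(v,p)$.

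The main obstacle is precisely this second case: when the maximum metric path from $u$ already passes through $v$, one cannot simply extend it into a rooted path from $v$, so the chain of inequalities must instead be closed using the monotone (non-increasing) behaviour of the prefix metric along a rooted path, which is exactly what boundedness provides. Once both cases are dispatched, combining the claim at $p=p^{*}$ with the trivial domination of $\mu(v,p^{*})$ by the maximum over $p\in B\cup\{r\}$ completes the argument. It is worth noting that only boundedness is used for the inequalities themselves; monotonicity enters solely to guarantee, via the characterization of maximizable metrics, that the maximum metric paths invoked above actually exist.
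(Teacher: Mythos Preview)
The paper does not actually supply its own proof of this lemma; it merely records that the result was established in~\cite{DMT10ca}. So there is no in-paper argument to compare against, and I can only comment on the soundness of your approach.

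Your argument is correct. The reduction to a single source $p^{*}$ and the per-source claim $met(\mu(u,p),w_{u,v})\preceq\mu(v,p)$ are the right way to organize the proof, and the two-case path surgery (extend $P$ by the edge $\{u,v\}$ when $v\notin P$; truncate $P$ at $v$ and chain boundedness along the remaining segment when $v\in P$) handles both situations cleanly. The edge case $v=p$ is absorbed by Case~2, since $p$ is an endpoint of $P$.

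One small inaccuracy in your closing remark: monotonicity is not needed anywhere, not even for the existence of maximum metric paths. In a finite connected system there are only finitely many simple rooted paths from any node, so a maximum with respect to the total order $\prec$ always exists regardless of any structural property of the metric. Maximizability (boundedness plus monotonicity) is what guarantees that these per-node maxima can be \emph{assembled into a single spanning tree}, which is irrelevant to the present statement. Your proof therefore relies on boundedness alone.
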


Given a configuration $\rho\in C$ and a metric value $m\in M$, let us define the following predicate: 
\[IM_m(\rho)\equiv \forall v\in V,level_v\preceq max_\prec\left\{m,\underset{u\in B\cup\{r\}}{max_\prec}\{\mu(v,u)\}\right\}\]

\begin{lemma}\label{lem:Imclosed}
For any metric value $m\in M$, the predicate $IM_m$ is closed by actions of $\mathcal{SSMAX}$.
\end{lemma}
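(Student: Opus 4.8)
The plan is to prove closure as a purely \emph{single-step}, \emph{per-process} property, so no induction along the execution is needed. Write $\beta(v):=\max_\prec\{m,\max_\prec\{\mu(v,p):p\in B\cup\{r\}\}\}$ and observe that $\beta(v)$ is determined by the (fixed) assigned metric alone, hence is independent of the configuration; the predicate $IM_m(\rho)$ is exactly the conjunction over $v\in V$ of $level_v\preceq\beta(v)$. Thus, given a transition $\rho\stackrel{R}{\mapsto}\rho'$ with $IM_m(\rho)$, it suffices to verify $level_v\preceq\beta(v)$ in $\rho'$ for every $v$. First I would dispose of the easy cases. If $v$ does not change its $level$ (in particular if $v\notin R$, or if $v$ is Byzantine and keeps its value), the bound holds by hypothesis. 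If $v$ is Byzantine, then $\mu(v,v)=mr$ forces $\beta(v)=mr$, and since the declared domain of $level$ is $\{m\in M\mid m\preceq mr\}$ we get $level_v\preceq mr=\beta(v)$ regardless of the arbitrary value chosen. If $v=r$ fires $(R_r)$, it sets $level_r:=mr$, and again $\mu(r,r)=mr$ gives $\beta(r)=mr$, so the bound holds with equality.

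The heart of the argument is a correct non-root process $v$ that updates its level through $(R_1)$, $(R_2)$ or $(R_3)$. In each of these rules the new value is $level_v:=met(level_u,w_{u,v})$ for some neighbor $u\in N_v$ (the current parent in $(R_1)$, a freshly chosen one in $(R_2)$ and $(R_3)$), and crucially $level_u$ is the value read at $\rho$. Applying the hypothesis $IM_m(\rho)$ to $u$ gives $level_u\preceq\beta(u)$, and monotonicity of $\mathcal{M}$ upgrades this to $met(level_u,w_{u,v})\preceq met(\beta(u),w_{u,v})$. It then remains to show $met(\beta(u),w_{u,v})\preceq\beta(v)$, which I would establish by a case split on which term realizes $\beta(u)$. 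If $\beta(u)=m$, boundedness yields $met(m,w_{u,v})\preceq m\preceq\beta(v)$, the last step because $\beta(v)\succeq m$ by definition. If instead $\beta(u)=\max_\prec\{\mu(u,p):p\in B\cup\{r\}\}$, then Lemma \ref{lem:goodProperty} gives precisely $met(\beta(u),w_{u,v})\preceq\max_\prec\{\mu(v,p):p\in B\cup\{r\}\}\preceq\beta(v)$. Chaining these by transitivity of $\preceq$ closes the case, and note that this computation handles a Byzantine neighbor $u$ uniformly, since there $\beta(u)=mr$ is covered by the second branch via $\mu(u,u)=mr$.

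The main subtlety to handle with care is tracking the equality branches hidden inside $\preceq$: boundedness and monotonicity are each phrased as a strict inequality \emph{or} an equality, so every inequality above must be read as $\preceq$ rather than $\prec$, and the final conclusion follows only because $\preceq$ is a transitive relation on the total order. A second point deserving explicit mention is the reading semantics fixed in the model: the right-hand sides of the rules evaluate the neighbors' variables at $\rho$, so even when $u$ and $v$ act simultaneously in $R$, the value $level_u$ entering $v$'s new level is the one at $\rho$, which is exactly the value bounded by $IM_m(\rho)$; this is why a one-step check suffices and why no information about $\rho'$-values of neighbors is required. Beyond these two points the proof is a direct combination of boundedness, monotonicity, and Lemma \ref{lem:goodProperty}, applied rule by rule.
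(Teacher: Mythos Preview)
Your proof is correct and follows essentially the same approach as the paper's own proof: both handle the root, Byzantine, and correct non-root cases separately, and for the last case both combine monotonicity of $\mathcal{M}$ (to push $met$ through the bound $level_u\preceq\beta(u)$), boundedness (for the $m$ term), and Lemma~\ref{lem:goodProperty} (for the $\max_\prec\{\mu(u,p)\}$ term). The only cosmetic difference is that the paper writes the final step as a single chain using $met(\max_\prec\{a,b\},w)\preceq\max_\prec\{met(a,w),met(b,w)\}$, whereas you do an explicit case split on which term realizes $\beta(u)$; these are equivalent.
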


\begin{proof}
Let $m$ be a metric value ($m\in M$). Let $\rho\in C$ be a configuration such that $IM_m(\rho)=true$ and $\rho'\in C$ be a configuration such that $\rho \stackrel{R}{\mapsto} \rho'$ is a step of $\mathcal{SSMAX}$.

If the root process $r\in R$ (respectively a Byzantine process $b\in R$), then we have $level_r=mr$ (respectively $level_b\preceq mr$) in $\rho'$ by construction of $\boldsymbol{(R_r)}$ (respectively by definition of $level_b$). Hence, $level_r\preceq max_\prec\left\{m, \underset{u\in B\cup\{r\}}{max_\prec}\{\mu(r,u)\}\right\}=mr$ (respectively $level_b\preceq max_\prec\left\{m,\underset{u\in B\cup\{r\}}{max_\prec}\{\mu(b,u)\}\right\}=mr$).

If a correct process $v\in R$ with $v\neq r$, then there exists a neighbor $p$ of $v$ such that $level_p\preceq max_\prec\left\{m,\underset{u\in B\cup\{r\}}{max_\prec}\{\mu(p,u)\}\right\}$ in $\rho$ (since $IM_m(\rho)=true$) and $prnt_v=p$ and $level_v=met(level_p,$ $w_{v,p})$ in $\rho'$ (since $v$ is activated during this step).

If we apply the Lemma \ref{lem:goodProperty} to $met$ and to neighbor $p$, we obtain the following property:
\[met\left(\underset{u\in B\cup\{r\}}{max_\prec}\{\mu(p,u)\},w_{v,p}\right)\preceq\underset{u\in B\cup\{r\}}{max_\prec}\{\mu(v,u)\}\]  

Consequently, we obtain that, in $\rho'$:
\[\begin{array}{rcll}
level_v & = & met(level_p,w_{v,p})&\\
& \preceq & met\left(max_\prec\left\{m,\underset{u\in B\cup\{r\}}{max_\prec}\{\mu(p,u)\}\right\},w_{v,p}\right)&\text{ by monotonicity of }\mathcal{M}\\
& \preceq & max_\prec\left\{met(m,w_{v,p}),met\left(\underset{u\in B\cup\{r\}}{max_\prec}\{\mu(p,u)\},w_{v,p}\right) \right\}&\\
& \preceq & max_\prec\left\{m,\underset{u\in B\cup\{r\}}{max_\prec}\{\mu(v,u)\}\right\}&\text{ since } met(m,w_{v,p})\preceq m
\end{array}\]

We can deduce that $IM_m(\rho')=true$, that concludes the proof.
\end{proof}

Given an assigned metric to a system $G$, we can observe that the set of metrics value $M$ is finite and that we can label elements of $M$ by $m_0=mr,m_1,\ldots,m_k$ in a way such that $\forall i\in\{0,\ldots,k-1\},m_{i+1}\prec m_i$.

We introduce the following notations:
\[\begin{array}{rrcl}
\forall m_i\in M, & P_{m_i} & = & \big\{v\in V\setminus S_B\big| \mu(v,r)=m_i\big\}\\
\forall m_i\in M, & V_{m_i} & = & \underset{j=0}{\overset{i}{\bigcup}}P_{m_j}\\
\forall m_i\in M, & I_{m_i} & = & \big\{v\in V\big|\underset{u\in B\cup\{r\}}{max_\prec}\{\mu(v,u)\}\prec m_i \big\}\\
\forall m_i\in M, & \mathcal{LC}_{m_i} & = & \big\{\rho\in\mathcal{C}\big|(\forall v\in V_{m_i}, spec(v))\wedge(IM_{m_i}(\rho))\big\}\\
 & \mathcal{LC} & = & \mathcal{LC}_{m_k}
\end{array}\]

\begin{lemma}\label{lem:LCmiclosed}
For any $m_i\in M$, the set $\mathcal{LC}_{m_i}$ is closed by actions of $\mathcal{SSMAX}$.
\end{lemma}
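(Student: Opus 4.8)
The plan is to establish closure one conjunct at a time. By definition a configuration lies in $\mathcal{LC}_{m_i}$ iff it satisfies $IM_{m_i}$ and $spec(v)$ for every $v\in V_{m_i}$. Given a step $\rho\stackrel{R}{\mapsto}\rho'$ with $\rho\in\mathcal{LC}_{m_i}$, the first conjunct comes for free: Lemma \ref{lem:Imclosed} already gives $IM_{m_i}(\rho')$. Everything therefore reduces to showing that each $v\in V_{m_i}$ still satisfies $spec(v)$ in $\rho'$. My strategy is to prove the stronger statement that every such $v$ is \emph{disabled} in $\rho$, so that it cannot belong to $R$ and keeps all of its O-variables, and then to check that a witnessing $\mathcal{M}$-path survives the step.

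First I would rule out each guard for a process $v\in V_{m_i}$ satisfying $spec(v)$. For $v=r$ the rule $(R_r)$ is disabled because $spec(r)$ forces $level_r=mr$ and $dist_r=0$. For $v\neq r$, the $\mathcal{M}$-path witnessing $spec(v)$ makes $level_v=met(level_{prnt_v},w_{v,prnt_v})$ and $dist_v=current\_dist_v()$, and since $D$ upper bounds the number of processes on a simple path we have $dist_v<D$ along any plateau; this disables $(R_1)$ and $(R_2)$. The only delicate guard is $(R_3)$: I must show that no neighbor $u$ offers a strictly better value, i.e. $met(level_u,w_{u,v})\preceq level_v$. Here I would combine $IM_{m_i}(\rho)$, Lemma \ref{lem:goodProperty} and boundedness to obtain, for every neighbor $u$,
\[
met(level_u,w_{u,v})\;\preceq\;\max_\prec\Big\{m_i,\ \max_{p\in B\cup\{r\}}\mu(v,p)\Big\}\;=\;\mu(v,r)\;=\;level_v,
\]
where the equality uses $v\notin S_B$ (so $\mu(v,r)$ dominates every $\mu(v,b)$, $b\in B$) together with $\mu(v,r)\succeq m_i$. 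This disables $(R_3)$, so $v$ does not move.

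Once every node of $V_{m_i}$ is disabled, none of them alters its $prnt$, $level$ or $dist$ during the step, so I can reuse the very path that witnessed $spec(v)$ in $\rho$. I would first argue that this path lies entirely inside $V_{m_i}$: the parent $p$ of a node $w\notin S_B$ on its maximum-metric path again lies outside $S_B$ (otherwise a Byzantine $b$ with $\mu(p,b)\succeq\mu(p,r)$ would, by monotonicity of $met$, yield $\mu(w,b)\succeq\mu(w,r)$, contradicting $w\notin S_B$), and its metric is $\succeq m_i$; iterating back shows the path starts at $r$ and all its nodes belong to $V_{m_i}$. Hence conditions (1), (2), (4), (5) of the $\mathcal{M}$-path are preserved verbatim, since all their nodes are stable. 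The remaining condition (3), local maximality, involves neighbors possibly outside $V_{m_i}$ that may have moved; but $IM_{m_i}(\rho')$ bounds their levels exactly as in the displayed inequality, so no neighbor of a path node exceeds the value realized by the chosen parent, and condition (3) still holds in $\rho'$.

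The main obstacle is concentrated in the treatment of $(R_3)$ and of condition (3): in both I must dominate the contribution of neighbors I do not control, including nodes of $S_B$ and Byzantine processes, and the only leverage available is the global invariant $IM_{m_i}$ (propagated by Lemma \ref{lem:Imclosed}) combined with the monotonicity inequality of Lemma \ref{lem:goodProperty} and the defining property $v\notin S_B$ that the root beats every Byzantine source. A secondary technical point is the bookkeeping on $dist$: I must invoke that $D$ bounds simple-path length to guarantee $dist_v<D$ on plateaus, which is exactly what prevents $(R_2)$ from firing and where the modified management of the $dist$ variable plays its role.
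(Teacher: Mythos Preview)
Your proof is correct and follows essentially the same route as the paper. The only organisational difference is that the paper argues over an entire execution---assuming a first $v\in V_{m_i}$ to be activated and deriving a contradiction from exactly the inequality you display---whereas you phrase it as a one-step closure and then verify that the witnessing $\mathcal{M}$-path survives; the crux in both cases is combining $IM_{m_i}$-closure, Lemma~\ref{lem:goodProperty}, and $v\notin S_B$ to bound $met(level_u,w_{u,v})$ by $level_v=\mu(v,r)$. One minor redundancy: to disable $(R_3)$ in $\rho$ itself you do not need the $IM_{m_i}$ bound, since condition~(3) of the $\mathcal{M}$-path already gives $level_v=\max_{u\in N_v}met(level_u,w_{u,v})$; the $IM_{m_i}$ argument is only needed where you actually use it, namely to show condition~(3) persists in $\rho'$ after neighbors outside $V_{m_i}$ may have moved.
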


\begin{proof}
Let $m_i$ be a metric value from $M$ and $\rho$ be a configuration of $\mathcal{LC}_{m_i}$. By construction, any process $v\in V_{m_i}$ satisfies $spec(v)$ in $\rho$. 

In particular, the root process satisfies: $prnt_r=\bot$, $level_r=mr$, and $dist_r=0$. By construction of $\mathcal{SSMAX}$, $r$ is not enabled and then never modifies its O-variables (since the guard of the rule of $r$ does not involve the state of its neighbors). 

In the same way, any process $v\in V_{m_i}$ satisfies: $prnt_v\in N_v$, $level_v=met(level_{prnt_v},$ $w_{prnt_v,v})$, $dist_v=legal\_dist_{prnt_v}$, and $level_v=\underset{u\in N_v}{max_\prec}\{met(level_u,w_{u,v})\}$. Note that, as $v\in V_{m_i}$ and $spec(v)$ holds in $\rho$, we have: $level_v=\mu(v,r)=\underset{p\in B\cup\{r\}}{max_\prec}\{\mu(v,p)\}$ and $dist_v\leq D-1$ by construction of $D$. Hence, process $v$ is not enabled in $\rho$.

Assume that there exists a process $v\in V_{m_i}$ that takes a step $\rho' \stackrel{R}{\mapsto} \rho''$ in an execution starting from $\rho$ (without loss of generality, assume that $v$ is the first process of $v\in V_{m_i}$ that takes a step in this execution). Then, we know that $v\neq r$. This activation implies that a neighbor $u\notin V_{m_i}$ (since $v$ is the first process of $V_{m_i}$ to take a step) of $v$ modified its $level_u$ variable to a metric value $m\in M$ such that $level_v\prec met(m,w_{u,v})$ in $\rho'$ (note that O-variables of $v$ and $prnt_v$ remain consistent since $v$ is the first process to take a step in this execution).

Hence, we have $level_v=\underset{p\in B\cup\{r\}}{max_\prec}\{\mu(v,p)\}=\mu(v,r)$ (since $spec(v)$ holds), $level_v\prec met(m,w_{u,v})$ (since $u$ causes an action of $v$), and $m_i\preceq level_v$ (since $v\in V_{m_i}$ and $level_v=\mu(v,r)$). Moreover, the closure of $IM_{m_i}$ (established in Lemma \ref{lem:Imclosed}) ensures us that $m=level_u\preceq max_\prec\left\{m_i,\underset{p\in B\cup\{r\}}{max_\prec}\{\mu(u,p)\}\right\}$. Let us study the two following cases:
\begin{description}
\item[Case 1:] $max_\prec\left\{m_i,\underset{p\in B\cup\{r\}}{max_\prec}\{\mu(u,p)\}\right\}=m_i$.\\
We have then $m\preceq m_i$. As the boundedness of $\mathcal{M}$ ensures that $met(m,w_{u,v})\preceq m$, we can conclude that $level_v\prec met(m,w_{u,v})\preceq m\preceq m_i\preceq level_v$, that is absurd.
\item[Case 2:] $max_\prec\left\{m_i,\underset{p\in B\cup\{r\}}{max_\prec}\{\mu(u,p)\}\right\}=\underset{p\in B\cup\{r\}}{max_\prec}\{\mu(u,p)\}$.\\
We have then $m\preceq \underset{p\in B\cup\{r\}}{max_\prec}\{\mu(u,p)\}$. By monotonicity of $\mathcal{M}$, we can deduce that $met(m,w_{u,v})\preceq met(\underset{p\in B\cup\{r\}}{max_\prec}\{\mu(u,p)\},w_{u,v})$. Consequently, we obtain that $\underset{p\in B\cup\{r\}}{max_\prec}\{\mu(v,p)\}\prec met(\underset{p\in B\cup\{r\}}{max_\prec}\{\mu(u,p)\},w_{u,v})$. This is contradictory with the result of Lemma \ref{lem:goodProperty}.
\end{description}

In conclusion, any process $v\in V_{m_i}$ takes no step in any execution starting from $\rho$ and then always satisfies $spec(v)$. Then, the closure of $IM_B$ (established in Lemma \ref{lem:Imclosed}) concludes the proof.
\end{proof}

\begin{lemma}\label{lem:SBTAcontainedMax}
Any configuration of $\mathcal{LC}$ is $(S_B,n-1)$-TA contained for $spec$.
\end{lemma}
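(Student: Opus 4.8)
The plan is to obtain this lemma almost entirely from the closure analysis already carried out for Lemma \ref{lem:LCmiclosed}, noting that its proof in fact establishes more than bare closure. Recall that $\mathcal{LC}=\mathcal{LC}_{m_k}$, where $m_0=mr,m_1,\ldots,m_k$ enumerate all of $M$. The first step is to unpack Definition \ref{def:SfTAcontained} for the target parameters: to show that a given $\rho\in\mathcal{LC}$ is $(S_B,n-1)$-TA contained, I must prove that along every execution $e=\rho_0=\rho,\rho_1,\ldots$ involving an arbitrary set $B$ of Byzantine processes with $r\notin B$, (i) every configuration is $S_B$-legitimate, and (ii) no $S_B$-correct process ever changes an O-variable.

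Next I would match the set $V_{m_k}$ to the processes that matter. Since $m_0,\ldots,m_k$ is an enumeration of \emph{all} metric values, every $S_B$-correct process $v$ (i.e.\ every correct process lying outside $S_B$) satisfies $\mu(v,r)=m_j$ for some $j\leq k$, hence $v\in P_{m_j}\subseteq V_{m_k}$. Thus every $S_B$-correct process belongs to $V_{m_k}$, which is the only inclusion I need.

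The core step is to extract from the proof of Lemma \ref{lem:LCmiclosed} the \emph{non-activation} statement it actually proves: starting from any configuration of $\mathcal{LC}_{m_i}$, no correct process of $V_{m_i}$ is ever enabled, and therefore none ever executes an action. (That argument proceeds precisely this way: the root and every $v\in V_{m_i}$ are shown to be disabled in $\rho$, and a first-activation argument — using the closure of $IM_{m_i}$ from Lemma \ref{lem:Imclosed} together with Lemma \ref{lem:goodProperty} — rules out any subsequent activation.) Applying this with $i=k$ to $\rho\in\mathcal{LC}=\mathcal{LC}_{m_k}$, every $S_B$-correct process takes no step in $e$, which immediately yields clause (ii). For clause (i), since each such process satisfies $spec$ in $\rho$ (by membership in $\mathcal{LC}_{m_k}$) and never moves, it satisfies $spec$ in every $\rho_t$, so each $\rho_t$ is $S_B$-legitimate. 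Combining the two clauses gives that $\rho$ is $(S_B,n-1)$-TA contained, as required; the bound $n-1$ is free since the argument holds for an arbitrary Byzantine set not containing $r$.

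I expect the only delicate point to be insisting that mere closure of $\mathcal{LC}_{m_k}$ does not suffice for clause (ii): preservation of $spec$ alone would still permit an $S_B$-correct process to switch its parent among equally good neighbors without violating its specification. What rescues the argument is the stronger non-activation conclusion embedded in the proof of Lemma \ref{lem:LCmiclosed}, which I must invoke explicitly rather than citing the closure statement by itself. Everything else is routine bookkeeping against Definition \ref{def:SfTAcontained}.
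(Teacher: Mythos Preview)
Your proposal is correct and follows the paper's approach: the paper's proof is the single sentence ``This is a direct application of Lemma~\ref{lem:LCmiclosed} to $\mathcal{LC}=\mathcal{LC}_{m_k}$,'' and you have simply unpacked what that application entails. Your observation that one must invoke the non-activation conclusion actually established in the proof of Lemma~\ref{lem:LCmiclosed} (rather than the bare closure statement) is a genuine clarification, since Definition~\ref{def:SfTAcontained} requires that O-variables never change, not merely that $spec$ is preserved.
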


\begin{proof}
This is a direct application of the Lemma \ref{lem:LCmiclosed} to $\mathcal{LC}=\mathcal{LC}_{m_k}$.
\end{proof}

\begin{lemma}\label{lem:CtoLCmr}
Starting from any configuration of $\mathcal{C}$, any execution of $\mathcal{SSMAX}$ reaches in a finite time a configuration of $\mathcal{LC}_{mr}$.
\end{lemma}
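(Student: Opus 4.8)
The plan is to establish separately the two conjuncts in the definition of $\mathcal{LC}_{mr}=\mathcal{LC}_{m_0}$: the invariant $IM_{mr}(\rho)$ and $spec(v)$ for every $v\in V_{m_0}=P_{m_0}$. The invariant is essentially free: $IM_{mr}(\rho)$ reduces to $\forall v\in V,\ level_v\preceq mr$, which holds in every configuration thanks to the declared domain $level_v\in\{m\in M\mid m\preceq mr\}$ of correct processes together with the standing assumption $level_b\preceq mr$ used for Byzantine processes in Lemma~\ref{lem:Imclosed}; closure is then granted by that lemma. The whole work is to force the maximum-metric correct nodes $V_{m_0}=\{v\in V\setminus S_B\mid\mu(v,r)=mr\}$ (which contains $r$) into $spec$.

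First I would stabilize the root. While $r$ fails $prnt_r=\bot$, $level_r=mr$, $dist_r=0$, rule $(R_r)$ is enabled, so by strong fairness $r$ performs it after finitely many steps; afterwards its guard, which involves only $r$'s own variables, is permanently false. Hence $spec(r)$ holds after finite time and forever, which provides the base of the induction below.

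Next I would treat the remaining nodes of $V_{m_0}$ by induction on the hop-distance of $v$ to $r$ along a fixed maximum metric tree $T$ of $S$. The structural fact I would exploit is that every node of a maximum metric path of an $mr$-node is itself an $mr$-node (boundedness forces the path metrics to be non-increasing with both ends equal to $mr$), so once the $T$-parent $u$ of $v$ is stabilized with $level_u=mr$ and a correct path to $r$, the neighbour $u$ offers $met(level_u,w_{u,v})=mr$ to $v$. By $IM_{mr}$ every neighbour offers at most $mr$, so $mr$ is the best value $v$ can ever attain, and rules $(R_1)$ and $(R_3)$ drive $v$ toward a parent realising it within finitely many of $v$'s own actions; Lemma~\ref{lem:LCmiclosed} then freezes $spec(v)$ once it is obtained.

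The hard part is to guarantee that the $mr$-path on which $v$ settles is genuinely rooted at $r$ rather than a spurious artefact, which is entirely a matter of the level-$mr$ nodes and of the $dist$ variable. Because $v\notin S_B$ gives $max_\prec\{\mu(v,b)\mid b\in B\}\prec mr$, an $\mathcal{M}$-path of $v$ carrying level $mr$ cannot be rooted at a Byzantine process (condition~5 of an $\mathcal{M}$-path would force $level_v=\mu(v,b)\prec mr$); thus $v$ holding $level_v=mr$ stably is equivalent to $v$ lying on a correct $r$-rooted $\mathcal{M}$-path. The residual danger is a correct $mr$-node blocked forever by a corrupted or Byzantine-influenced neighbour keeping $dist\geq D-1$, or a persistent cycle of spuriously $mr$-labelled nodes. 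Ruling these out is precisely where the two protocol modifications enter: resetting $dist$ to $0$ whenever a node's level differs from its parent's confines Byzantine interference on $dist$ to same-level nodes, and the fact that $(R_1)$ only raises $dist$ while $(R_2)$ forces a parent change whenever $dist$ must drop or reaches $D$ bounds the number of $dist$ updates a node performs between two parent changes. Combined with strong fairness, $k$-boundedness, and $D$ being an upper bound on simple-path length (so the true depth of $T$ stays below $D-1$), this should yield finite-time destruction of every spurious $mr$-configuration and finite-time attachment of each $v\in V_{m_0}$ to an $r$-rooted path. Turning this progress reasoning into an explicit finite step bound, rather than a mere liveness statement, is the main obstacle of the proof.
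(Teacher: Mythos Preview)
Your treatment of $IM_{mr}$ and of the root matches the paper. After that the two arguments diverge: you propose an induction on hop-distance along a fixed maximum-metric tree, while the paper argues by contradiction---it assumes some processes of $P_{mr}$ take infinitely many steps, observes that (since $P_{mr}$ is connected and contains the eventually-silent root) one such infinitely-active process $v$ must have a neighbour $u\in P_{mr}$ that takes only finitely many steps, and shows $v$ must eventually choose $u$ as parent and become permanently disabled.

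The genuine gap in your sketch is the mechanism you never name: the \emph{round-robin fairness} of $choose_v$. Your progress claim relies on $v$ eventually selecting the stabilised neighbour $u$ as parent, but nothing in strong fairness or $k$-boundedness of the daemon forces this---those properties schedule processes, not parent choices, and an unstabilised correct neighbour of $v$ in $P_{mr}$ can also present $dist<D-1$ infinitely often and be chosen instead. What closes the argument, and this is exactly what the paper exploits, is that $v$ can fire $(R_1)$ at most $D$ times between two parent changes, hence fires $(R_2)$ or $(R_3)$ infinitely often; once $u$ permanently has $level_u=mr$ and $dist_u<D-1$, it lies in the argument set of $choose_v$ at every such firing, and the round-robin order of $choose_v$ guarantees $u$ is eventually selected. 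Without invoking this property your induction step does not terminate. (Your appeal to Lemma~\ref{lem:LCmiclosed} to ``freeze $spec(v)$'' is also misplaced, since that lemma requires all of $V_{m_i}$ to already satisfy $spec$; but once $v$'s parent is the stable $u$, stability of $v$ follows directly from inspecting the guards, so this is minor by comparison.)
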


\begin{proof}
Let $\rho$ be an arbitrary configuration. Then, it is obvious that $IM_{mr}(\rho)$ is satisfied. By closure of $IM_{mr}$ (proved in Lemma \ref{lem:Imclosed}), we know that $IM_{mr}$ remains satisfied in any execution starting from $\rho$.

If $r$ does not satisfy $spec(r)$ in $\rho$, then $r$ is continuously enabled. Since the scheduling is strongly fair, $r$ is activated in a finite time and then $r$ satisfies $spec(r)$ in a finite time. Denote by $\rho'$ the first configuration in which $spec(r)$ holds. Note that $r$ takes no step in any execution starting from $\rho'$.

The boundedness of $\mathcal{M}$ implies that $P_{mr}$ induces a connected subsystem. If $P_{mr}=\{r\}$, then we proved that $\rho'\in\mathcal{LC}_{mr}$ and we have the result.

Otherwise, observe that, for any configuration of an execution starting from $\rho'$, if all processes of $P_{mr}$ are not enabled, then all processes $v$ of $P_{mr}$ satisfy $spec(v)$. Assume now that there exists an execution $e$ starting from $\rho'$ in which some processes of $P_{mr}$ take infinitely many steps. By construction, at least one of these processes (note it $v$) has a neighbor $u$ which takes only a finite number of steps in $e$ (recall that $P_{mr}$ induces a connected subsystem and that $r$ takes no step in $e$). After $u$ takes its last step of $e$, we can observe that $level_u=mr$ and $dist_u<D-1$ (otherwise, $u$ is activated in a finite time that contradicts its construction). 

As $v$ can execute consequently $\boldsymbol{(R_1)}$ only a finite number of times (since the incrementation of $dist_v$ is bounded by $D$), we can deduce that $v$ executes $\boldsymbol{(R_2)}$ or $\boldsymbol{(R_3)}$ infinitely often. In both cases, $u$ belongs to the set which is the parameter of function $choose$. By the fairness of this function, we can deduce that $prnt_v=u$ in a finite time in $e$. Then, the construction of $u$ implies that $v$ is never enabled in the sequel of $e$. This is contradictory with the construction of $e$.

Consequently, any execution starting from $\rho'$ reaches in a finite time a configuration such that all processes of $P_{mr}$ are not enabled. We can deduce that this configuration belongs to $\mathcal{LC}_{mr}$, that ends the proof.
\end{proof}

\begin{lemma}\label{lem:LCmitodistD}
For any $m_i\in M$ and for any configuration $\rho\in \mathcal{LC}_{m_i}$, any execution of $\mathcal{SSMAX}$ starting from $\rho$ reaches in a finite time a configuration such that:
\[\forall v\in I_{m_i},level_v=m_i\Rightarrow dist_v=D\]
\end{lemma}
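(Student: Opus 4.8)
The plan is to exploit the closure of $\mathcal{LC}_{m_i}$ established in Lemmas~\ref{lem:Imclosed} and~\ref{lem:LCmiclosed}: throughout any execution starting from $\rho$, the predicate $IM_{m_i}$ holds, so every $v\in I_{m_i}$ permanently satisfies $level_v\preceq m_i$, and the processes of $V_{m_i}$ stay frozen on their specification. I would call a process $v$ \emph{over-leveled} when $v\in I_{m_i}$ and $level_v=m_i$, and write $A$ for the (time-dependent) set of such processes; the goal is precisely to force $dist_v=D$ for every $v\in A$. The crucial structural fact to establish first is: if an over-leveled $v$ is consistent with its parent $p$ (i.e. $level_v=met(level_p,w_{v,p})$, so that $\boldsymbol{(R_1)}$ is not enabled through its level clause), then $p$ is itself over-leveled, with $p\in A$ and $level_p=m_i$. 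This follows from Lemma~\ref{lem:goodProperty}: were $p$ to carry its legitimate value $level_p=\max_\prec\{\mu(p,u):u\in B\cup\{r\}\}$, then $level_v=met(level_p,w_{v,p})\preceq\max_\prec\{\mu(v,u):u\in B\cup\{r\}\}\prec m_i$, contradicting $level_v=m_i$; hence $p$ is over-leveled, and since $IM_{m_i}$ forbids over-leveling whenever $\max_\prec\{\mu(p,u):u\in B\cup\{r\}\}\succeq m_i$, we must have $p\in I_{m_i}$, whence $m_i\preceq level_p\preceq m_i$. The same computation shows a process can \emph{enter} $A$ only by adopting a parent already in $A$, since adopting any legitimately-levelled neighbour yields, by Lemma~\ref{lem:goodProperty}, a level $\prec m_i$.

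With this in hand I would prove, by induction on $d\in\{0,\dots,D-1\}$, that every execution from $\rho$ eventually reaches a configuration after which \emph{no} over-leveled process has $dist_v=d$. Fix $d$ and assume (induction hypothesis, vacuous for $d=0$) that from some point on no over-leveled process has distance $\le d-1$. Consider an over-leveled $v$ with $dist_v=d<D$. If $v$ were consistent with its parent $p$, then $p\in A$ by the structural fact, so $dist_p\ge d$ by the hypothesis, hence $current\_dist_v()=\min(dist_p+1,D)\ge d+1>dist_v$; this makes the distance clause of $\boldsymbol{(R_1)}$ true, a contradiction. Thus any such $v$ is enabled, and it remains enabled until it acts (its own variables do not change meanwhile, and no change of $p$'s state can cancel the guard); by strong fairness it eventually executes an action.

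The delicate point, and the main obstacle, is to verify that this action removes $v$ from $\{\text{over-leveled},\ dist=d\}$ \emph{for good}, despite the parent-changing rules $\boldsymbol{(R_2)}$ and $\boldsymbol{(R_3)}$. Here I would invoke the design of $\mathcal{SSMAX}$: whatever rule $v$ applies, its new parent $q$ is either outside $A$ — then $q$ is legitimately levelled and Lemma~\ref{lem:goodProperty} gives $level_v=met(level_q,w_{v,q})\prec m_i$, so $v$ leaves $A$ — or $q\in A$, in which case $dist_v$ is set to $\min(dist_q+1,D)\ge d+1$ using $dist_q\ge d$ from the hypothesis. Either way $v$ is no longer over-leveled at distance $\le d$; and the same dichotomy shows that after the hypothesis time no process can \emph{newly} reach (over-leveled, $dist=d$), since that would demand an over-leveled parent of distance $d-1$. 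As there are finitely many processes and the set only shrinks, strong fairness empties it in finite time, closing the induction. Taking $d=D-1$ produces a configuration after which every over-leveled process satisfies $dist_v=D$, which is exactly the claimed implication (indeed in the stronger eventually-always form), and hence in particular reaches the desired configuration.
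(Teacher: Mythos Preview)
Your argument is essentially the paper's: both track the set $A=\{v\in I_{m_i}:level_v=m_i\}$ and show that the minimum of $dist_v$ over $A$ climbs to $D$ (the paper via the variant $f(\rho)=\min_{v\in A}dist_v$, you via induction on the level $d$), relying on the same structural fact that a level-consistent parent of an over-leveled process must itself lie in $A$. Two minor wording points: when the (new) parent $q\notin A$, the conclusion $met(level_q,w_{v,q})\prec m_i$ needs the closure of $IM_{m_i}$ together with monotonicity and Lemma~\ref{lem:goodProperty} (not Lemma~\ref{lem:goodProperty} alone, since $q\notin A$ does not mean $q$ is ``legitimately levelled''), and the phrase ``a contradiction'' in your enablement paragraph is spurious --- you have directly established that the guard of $\boldsymbol{(R_1)}$ holds in both the consistent and inconsistent cases.
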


\begin{proof}
Let $m_i$ be an arbitrary metric value of $M$ and $\rho_0$ be an arbitrary configuration of $\mathcal{LC}_{m_i}$. Let $e=\rho_0,\rho_1,\ldots$ be an execution starting from $\rho_0$.

Note that $\rho_0$ satisfies $IM_{m_i}$ by construction. Hence, we have $\forall v\in I_{m_i},level_v\preceq m_i$. The closure  of $IM_{m_i}$ (proved in Lemma \ref{lem:Imclosed}) ensures us that this property is satisfied in any configuration of $e$. 

If any process $v\in I_{m_i}$ satisfies $level_v\prec m_i$ in $\rho_0$, then the result is obvious. Otherwise, we define the following variant function. For any configuration $\rho_j$ of $e$, we denote by $A_j$ the set of processes $v$ of $I_{m_i}$ such that $level_v=m_i$ in $\rho_j$. Then, we define $f(\rho_j)=\underset{v\in A_j}{min}\{dist_v\}$. We will prove the result by showing that there exists an integer $k$ such that $f(\rho_k)=D$.

First, if a process $v$ joins $A_j$ (that is, $v\notin A_{j-1}$ but $v\in A_j$), then it takes a distance value greater or equals to $f(\rho_{j-1})+1$ by construction of the protocol. We can deduce that any process that joins $A_j$ does not decrease $f$. Moreover, the construction of the protocol implies that a process $v$ such that $v\in A_j$ and $v\in A_{j+1}$ can not decrease its distance value in the step $\rho_j\mapsto \rho_{j+1}$.

Then, consider for a given configuration $\rho_j$ a process $v\in A_j$ such that $dist_v=f(\rho_j)<D$. We claim that $v$ is enabled in $\rho_j$ and that the execution of the enabled rule either increases strictly $dist_v$ or removes $v$ from $A_{j+1}$. We distinguish the following cases:

\begin{description}
\item[Case 1:] $level_v=met(level_{prnt_v},w_{v,prnt_v})$\\
The fact that $v\in I_{m_i}$, the boundedness of $\mathcal{M}$ and the closure of $IM_{m_i}$ imply that $prnt_v\in A_j$ (and, hence that $level_{prnt_v}=m_i$). Then, by construction of $f(\rho_j)$, we know that $dist_{prnt_v}\geq f(\rho_j)=dist_v$. Hence, we have $dist_v< dist_{prnt_v}+1$ in $\rho_j$. Then, $v$ is enabled by $\boldsymbol{(R_1)}$ in $\rho_j$ and $dist_v$ increases of at least 1 during the step $\rho_j\mapsto \rho_{j+1}$ if this rule is executed.

\item[Case 2:] $level_v\neq met(level_{prnt_v},w_{v,prnt_v})$\\
Assume that $v$ is activated by $\boldsymbol{(R_2)}$ or $\boldsymbol{(R_3)}$ during the step $\rho_j\mapsto \rho_{j+1}$. If $v$ does not belong to $A_{j+1}$ (if $level_v\neq m_i$ in $\rho_{j+1}$), the claim is satisfied. In the contrary case ($v$ belongs to $A_{j+1}$), we know that $level_v=m_i$ in $\rho_{j+1}$. The boundedness of $\mathcal{M}$ and the closure of $IM_{m_i}$ imply that $level_{prnt_v}=m_i$ in $\rho_{j+1}$. We can conclude that $dist_v$ increases of at least 1 during the step $\rho_j\mapsto \rho_{j+1}$ since the new parent of $v$ has a distance greater than $f(\rho_j)$ by construction of $A_{j+1}$.

Otherwise, we know that the rule $\boldsymbol{(R_1)}$ is enabled for $v$ in $\rho_j$. If this rule is executed during the step $\rho_j\mapsto \rho_{j+1}$, one of the two following sub cases appears.
\begin{description}
\item[Case 2.1:] $met(level_{prnt_v},w_{v,prnt_v})\prec m_i$ in $\rho_j$.\\
Then, $v$ does not belong to $A_{j+1}$ by definition. 
\item[Case 2.2:] $met(level_{prnt_v},w_{v,prnt_v})=m_i$ in $\rho_j$.\\
Remind that the closure of $IM_{m_i}$ implies then that $level_{prnt_v}=m_i$. By construction of $f(\rho_j)$, we have $dist_{prnt_v}\geq f(\rho_j)$ in $\rho_j$. Then, we can see that $dist_v$ increases of at least 1 during the step $\rho_j\mapsto \rho_{j+1}$.
\end{description}
\end{description}

In all cases, $v$ is enabled (at least by $\boldsymbol{(R_1)}$) in $\rho_j$ and the execution of the enabled rule either increases strictly $dist_v$ or removes $v$ from $A_{j+1}$.

As $I_{m_i}$ is finite and the scheduling is strongly fair, we can deduce that $f$ increases in a finite time in any execution starting from $\rho_j$. By repeating the argument at most $D$ times, we can deduce that $e$ contains a configuration $\rho_k$ such that $f(\rho_k)=D$, that shows the result.
\end{proof}

\begin{lemma}\label{lem:distDtolevelmi}
For any $m_i\in M$ and for any configuration $\rho\in \mathcal{LC}_{m_i}$ such that
$\forall v\in I_{m_i},level_v=m_i\Rightarrow dist_v=D$, any execution of $\mathcal{SSMAX}$ starting from $\rho$ reaches in a finite time a configuration such that:
\[\forall v\in I_{m_i},level_v\prec m_i\]
\end{lemma}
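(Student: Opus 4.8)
The plan is to show that the \emph{bad set} $A(\rho)=\{v\in I_{m_i}\mid level_v=m_i\}$ becomes empty in finite time. Since $IM_{m_i}$ is closed (Lemma~\ref{lem:Imclosed}), we have $level_v\preceq m_i$ for every $v\in I_{m_i}$ all along the execution, so the predicate $A=\emptyset$ is \emph{exactly} the target conclusion $\forall v\in I_{m_i},level_v\prec m_i$. Throughout, Lemma~\ref{lem:LCmiclosed} keeps every process of $V_{m_i}$ frozen in $spec$, so only processes lying outside $V_{m_i}$ can ever move, and in particular the root and the high-metric core retain small, legitimate distances.

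First I would characterise how the value $m_i$ can survive on $I_{m_i}$. Combining the closure of $IM_{m_i}$ with Lemma~\ref{lem:goodProperty}, for any $v\in I_{m_i}$ and any neighbour $u$ one obtains $met(level_u,w_{u,v})\preceq m_i$, with equality only when $u\in I_{m_i}$ and $level_u=m_i$, i.e. $u\in A$. Three consequences follow: rule $\boldsymbol{(R_3)}$ is never enabled at a process of $A$ (no neighbour offers a value $\succ m_i$); a process of $A$ whose level is consistent with its parent necessarily has $prnt_v\in A$; and any process that joins $A$ does so by copying $m_i$ from a neighbour already in $A$. Since the hypothesis gives $dist_v=D$ for every $v\in A$ in $\rho$, a newcomer produced by $\boldsymbol{(R_1)}$ inherits $dist=\min(D+1,D)=D$, whereas $\boldsymbol{(R_3)}$ cannot create a newcomer (its candidate set keeps only neighbours at distance $<D-1$, and the members of $A$ sit at $D$) and $\boldsymbol{(R_2)}$ only removes processes from $A$. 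Hence the invariant $A\subseteq\{v\mid dist_v=D\}$ is preserved by every step.

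Progress is then obtained by orienting $A$ along parent pointers. Every $v\in A$ with $prnt_v\notin A$ is inconsistent, hence enabled by $\boldsymbol{(R_1)}$, and firing it sets $level_v=met(level_{prnt_v},w_{v,prnt_v})\prec m_i$, removing $v$ from $A$; moreover, as soon as one process leaves $A$ each of its $A$-children becomes inconsistent and is in turn forced out, so removals cascade. The only way $A$ can refuse to shrink is therefore to consist of internally consistent parent cycles, and a direct computation with $current\_dist$ shows that such a cycle must be uniformly at distance $D$. A cycle member is evicted precisely when it acquires a neighbour at distance $<D-1$: the new parent chosen by $\boldsymbol{(R_2)}$ then lies outside $A$ and supplies a value $\prec m_i$. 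Thus the statement reduces to proving that a consistent distance-$D$ cycle of $A$ cannot stay permanently surrounded by processes of distance $\geq D-1$.

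This last point is the main obstacle. I would settle it using the bound $D$ on the number of processes in a simple path together with the standing assumption $|M(S)|\geq2$, which guarantees a genuine hierarchy of values and hence $m_i\neq mr$ on $I_{m_i}$: no legitimately grounded process can carry level $m_i$ at distance $D-1$ or $D$, so the whole distance-$D$ region is ungrounded, while connectivity of $S$ to the distance-$0$ root forces the boundary of that region to touch the low-distance core formed by $V_{m_i}$. The idea is to propagate $\boldsymbol{(R_2)}$ inward from this boundary — each boundary process either has an inflated distance ($dist_v>current\_dist_v()$, because the reset of $current\_dist$ to $0$ across a level interface is in force) or already sits at $D$, hence is $\boldsymbol{(R_2)}$-enabled by the exhibited low-distance neighbour — so that the set $\{v\mid dist_v\geq D-1\}$ strictly erodes. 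A variant function counting the processes that are simultaneously at level $m_i$ and distance $D$, combined with strong fairness and the $k$-boundedness of the daemon, then shows that the cycle's neighbourhood drops below distance $D-1$ in finite time, after which the cycle is evicted. Because $A$ is finite and can only shrink once every grounding route is cut, finitely many such erosions empty $A$, which is the claim.
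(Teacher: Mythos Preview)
Your overall strategy coincides with the paper's: track the bad set $A=\{v\in I_{m_i}\mid level_v=m_i\}$ (the paper calls it $E_{\rho_j}$), keep it inside $\{dist_v=D\}$, and argue that members are eventually activated and evicted. Your rule-by-rule analysis of how processes enter or leave $A$ is correct and, in places, sharper than the paper's.

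There is, however, a genuine gap at the sentence ``the only way $A$ can refuse to shrink is therefore to consist of internally consistent parent cycles.'' Your own analysis shows that $A$ is \emph{not} monotone: a process $v\in I_{m_i}\setminus A$ with $prnt_v\in A$ and $met(m_i,w_{v,prnt_v})=m_i$ can join $A$ via $\boldsymbol{(R_1)}$ (and lands at $dist=D$, preserving your invariant). So the cascade of removals you describe may be interleaved with new arrivals, and nothing you wrote excludes $A$ oscillating forever without ever reducing to a pure cycle structure. The paper closes this hole with an explicit preliminary phase: it first lets every such pending joiner enter $E$, reaching a configuration $\rho_k$ where property $(P)$ holds---for every $v\in I_{m_i}\setminus E_{\rho_k}$, either $prnt_v\notin E_{\rho_k}$ or $met(level_{prnt_v},w_{v,prnt_v})\prec m_i$---and then checks that $(P)$ is preserved by every step. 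Once $(P)$ holds, no process can re-enter $E$ by any rule, so $E$ is monotone non-increasing from $\rho_k$ on. You need an analogous monotonicity argument before the reduction to cycles becomes sound.

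On the final eviction step you are more explicit than the paper, which simply asserts that ``the construction of the bound $D$ ensures that any $v\in I_{m_i}$ with $dist_v=D$ is activated in a finite time.'' But your erosion argument for $\{dist_v\geq D-1\}$ is not correct as stated: a boundary process $u$ with $dist_u=D-1$, $level_{prnt_u}=level_u$, and $dist_{prnt_u}=D-2$ has $current\_dist_u()=D-1=dist_u$, so neither disjunct of the $\boldsymbol{(R_2)}$ guard fires, and $u$ need not be enabled at all. A correct argument here has to use that a stable same-level parent chain of length $D$ would contain $D\geq n$ distinct processes, which together with the adjacent $A$-cycle exceeds $|V|$; this counting is what actually forces a break in the surrounding wall and is only implicit in both your sketch and the paper's one-line appeal to ``the construction of $D$.''
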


\begin{proof}
Let $m_i\in M$ be an arbitrary metric value and $\rho_0$ be a configuration of $\mathcal{LC}_{m_i}$ such that $\forall v\in I_{m_i},level_v=m_i\Rightarrow dist_v=D$. Let $e=\rho_0,\rho_1,\ldots$ be an arbitrary execution starting from $\rho_0$.

For any configuration $\rho_j$ of $e$, let us denote $E_{\rho_j}=\{v\in I_{m_i}|level_v=m_i\}$. By the closure of $IM_{m_i}$ (which holds by definition in $\rho_0$) established in Lemma \ref{lem:Imclosed}, we obtain the result if there exists a configuration $\rho_j$ of $e$ such that $E_{\rho_j}=\emptyset$.

If there exist some processes $v\in I_{m_i}\setminus E_{\rho_0}$ (and hence $level_v\prec m_i$) such that $prnt_v\in E_{\rho_0}$ and $met(level_{prnt_v},w_{v,prnt_v})=m_i$ in $\rho_0$, then we can observe that these processes are continuously enabled by $\boldsymbol{(R_1)}$. As the scheduling is strongly fair, $v$ activates this rule in a finite time and then, $level_v=m_i$ and $dist_v=D$. In other words, $v$ joins $E_{\rho_l}$ for a given integer $l$. We can conclude that there exists an integer $k$ such that the following property $\boldsymbol{(P)}$ holds: for any $v\in I_{m_i}\setminus E_{\rho_0}$, either $prnt_v\notin E_{\rho_k}$ or $met(level_{prnt_v},w_{v,prnt_v})\prec m_i$.

Then, we prove that, for any integer $j\geq k$, we have $E_{\rho_{j+1}}\subseteq E_{\rho_j}$. For the sake of contradiction, assume that there exists an integer $j\geq k$ and a process $v\in I_{m_i}$ such that $v\in E_{\rho_{j+1}}$ and $v\notin E_{\rho_j}$. Without loss of generality, assume that $j$ is the smallest integer which performs these properties. Let us study the following cases:
\begin{description}
\item[Case 1:] $v$ activates $\boldsymbol{(R_1)}$ during the step $\rho_j\mapsto \rho_{j+1}$.\\
Note that the property $\boldsymbol{(P)}$ still holds in $\rho_j$ by the construction of $j$. Hence, we know that $prnt_v\notin E_{\rho_j}$ in $\rho_j$. But in this case, we have: $level_{prnt_v}\prec m_i$. The boundedness of $\mathcal{M}$ implies that $level_v=met(level_{prnt_v},w_{v,prnt_v})\prec m_i$ in $\rho_{j+1}$ that contradicts the fact that $v\in E_{\rho_{j+1}}$.
\item[Case 2:] $v$ activates either $\boldsymbol{(R_2)}$ or $\boldsymbol{(R_3)}$ during the step $\rho_j\mapsto \rho_{j+1}$.\\
That implies $v$ chooses a new parent which has a distance smaller than $D-1$ in $\rho_j$. This implies that this new parent does not belongs to $E_{\rho_j}$. Then, we have $level_{prnt_v}\prec m_i$. The boundedness of $\mathcal{M}$ implies that $level_v=met(level_{prnt_v},w_{v,prnt_v})\prec m_i$ in $\rho_{j+1}$ that contradicts the fact that $v\in E_{\rho_{j+1}}$.
\end{description}
In the two cases, our claim is satisfied. In other words, there exists a point of the execution (namely $\rho_k$) afterwards the set $E$ cannot grow (this implies that, if a process leaves the set $E$, it is a definitive leaving).

Assume now that there exists a step $\rho_j\mapsto \rho_{j+1}$ (with $j\geq k$) such that a process $v\in E_{\rho_j}$ is activated. Observe that the closure of $IM_{m_i}$ implies that $v$ can not be activated by the rule $\boldsymbol{(R_3)}$. If $v$ activates $\boldsymbol{(R_1)}$ during this step, then $v$ modifies its level during this step (otherwise, we have a contradiction with the fact that $level_{prnt_v}=m_i\Rightarrow dist_v=D$). The closure of $IM_{m_i}$ implies that $v$ leaves the set $E$ during this step. If $v$ activates $\boldsymbol{(R_2)}$ during this step, then $v$ chooses a new parent which has a distance smaller than $D-1$ in $\rho_j$. This implies that this new parent does not belongs to $E_{\rho_j}$. Then, we have $level_{prnt_v}\prec m_i$. The boundedness of $\mathcal{M}$ implies that $level_v\prec m_i$ in $\rho_{j+1}$. In other words, if a process of $E_{\rho_j}$ is activated during the step $\rho_j\mapsto \rho_{j+1}$, then it satisfies $v\notin E_{\rho_{j+1}}$.

Finally, observe that the construction of the protocol and the construction of the bound $D$ ensures us that any process $v\in I_{m_i}$ such that $dist_v=D$ is activated in a finite time. In conclusion, we obtain that there exists an integer $j$ such that $E_{\rho_j}=\emptyset$, that implies the result.
\end{proof}

\begin{lemma}\label{lem:CmitoIMmi+1}
For any $m_i\in M$ and for any configuration $\rho\in \mathcal{LC}_{m_i}$, any execution of $\mathcal{SSMAX}$ starting from $\rho$ reaches in a finite time a configuration $\rho'$ such that $IM_{m_{i+1}}$ holds.
\end{lemma}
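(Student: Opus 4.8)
The plan is to obtain the target configuration by chaining Lemmas~\ref{lem:LCmitodistD} and~\ref{lem:distDtolevelmi}, and then to check that $IM_{m_{i+1}}$ is a purely \emph{algebraic} consequence of $IM_{m_i}$ together with the bound on the $level$ of processes in $I_{m_i}$ that those lemmas provide. First I would record the key reduction: the predicates $IM_{m_i}$ and $IM_{m_{i+1}}$ can differ only on processes of $I_{m_i}$. Indeed, for a process $v\notin I_{m_i}$ we have $\underset{u\in B\cup\{r\}}{max_\prec}\{\mu(v,u)\}\succeq m_i\succ m_{i+1}$, so $max_\prec\{m_{i+1},\underset{u\in B\cup\{r\}}{max_\prec}\{\mu(v,u)\}\}=max_\prec\{m_i,\underset{u\in B\cup\{r\}}{max_\prec}\{\mu(v,u)\}\}$ and the two bounds coincide; whereas for $v\in I_{m_i}$ we have $\underset{u\in B\cup\{r\}}{max_\prec}\{\mu(v,u)\}\prec m_i$, hence, using the labeling $m_0\succ m_1\succ\cdots\succ m_k$ of the (finite) set of used metric values, which forces any value strictly below $m_i$ to be $\preceq m_{i+1}$, the bound reduces to $m_{i+1}$. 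Thus $IM_{m_{i+1}}(\rho')$ holds if and only if $IM_{m_i}(\rho')$ holds and every $v\in I_{m_i}$ satisfies $level_v\preceq m_{i+1}$, that is, $level_v\prec m_i$.

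Second, I would produce such a $\rho'$. Starting from the given $\rho\in\mathcal{LC}_{m_i}$, Lemma~\ref{lem:LCmitodistD} guarantees that the execution reaches in finite time a configuration $\rho_1$ satisfying $\forall v\in I_{m_i},\ level_v=m_i\Rightarrow dist_v=D$. By the closure of $\mathcal{LC}_{m_i}$ (Lemma~\ref{lem:LCmiclosed}), we still have $\rho_1\in\mathcal{LC}_{m_i}$, so $\rho_1$ meets the hypotheses of Lemma~\ref{lem:distDtolevelmi}; applying it yields, again in finite time, a configuration $\rho'$ such that $\forall v\in I_{m_i},\ level_v\prec m_i$. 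A further appeal to Lemma~\ref{lem:LCmiclosed} gives $\rho'\in\mathcal{LC}_{m_i}$, hence $IM_{m_i}(\rho')$ holds by definition of $\mathcal{LC}_{m_i}$.

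Finally, combining $IM_{m_i}(\rho')$ with the fact that every $v\in I_{m_i}$ has $level_v\prec m_i$ in $\rho'$, and invoking the reduction of the first paragraph case by case on whether $v\in I_{m_i}$, I conclude that $IM_{m_{i+1}}(\rho')$ holds, which is the claim. I do not expect a serious obstacle here: the two substantive ingredients, namely the finite-time progress of $dist$ up to $D$ and the subsequent collapse of the $level$ value strictly below $m_i$, are exactly the content of Lemmas~\ref{lem:LCmitodistD} and~\ref{lem:distDtolevelmi}, and the only point requiring care is the discreteness argument ``$x\prec m_i\Rightarrow x\preceq m_{i+1}$'', which is where the explicit enumeration of the finite chain of used metric values is used. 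One should also note that the statement is to be read for $i<k$, since $m_{i+1}$ is undefined otherwise.
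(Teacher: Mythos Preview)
Your proposal is correct and follows exactly the route the paper intends: the paper's own proof is the single sentence ``This result is a direct consequence of Lemmas~\ref{lem:LCmitodistD} and~\ref{lem:distDtolevelmi},'' and you have simply spelled out the chaining of those two lemmas together with the closure of $\mathcal{LC}_{m_i}$ and the algebraic check that $IM_{m_i}$ plus $level_v\prec m_i$ on $I_{m_i}$ yields $IM_{m_{i+1}}$. Your added care about the discreteness step $x\prec m_i\Rightarrow x\preceq m_{i+1}$ and the implicit restriction $i<k$ are appropriate and make explicit what the paper leaves to the reader.
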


\begin{proof}
This result is a direct consequence of Lemmas \ref{lem:LCmitodistD} and \ref{lem:distDtolevelmi}.
\end{proof}

\begin{lemma}\label{lem:LCmitoLCmi+1}
For any $m_i\in M$ and for any configuration $\rho\in \mathcal{LC}_{m_i}$, any execution of $\mathcal{SSMAX}$ starting from $\rho$ reaches in a finite time a configuration of $\mathcal{LC}_{m_{i+1}}$.
\end{lemma}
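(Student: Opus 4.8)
The plan is to prove this lemma as the inductive step of the chain $\mathcal{LC}_{mr}\to\mathcal{LC}_{m_1}\to\cdots\to\mathcal{LC}_{m_k}$, mirroring the base case treated in Lemma~\ref{lem:CtoLCmr} but with the set $V_{m_i}$ now playing the role that $\{r\}$ played there, namely that of an already stabilized core. First I would apply Lemma~\ref{lem:CmitoIMmi+1} to reach, in finite time, a configuration $\rho'$ in which $IM_{m_{i+1}}$ holds. Since $m_{i+1}\prec m_i$ we have $IM_{m_{i+1}}\Rightarrow IM_{m_i}$, so by closure of $\mathcal{LC}_{m_i}$ (Lemma~\ref{lem:LCmiclosed}) the configuration $\rho'$ still belongs to $\mathcal{LC}_{m_i}$; in particular every process of $V_{m_i}$ satisfies $spec$ and, by the argument in the proof of Lemma~\ref{lem:LCmiclosed}, takes no further step in any execution from $\rho'$. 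By closure of $IM_{m_{i+1}}$ (Lemma~\ref{lem:Imclosed}) the bound $level_v\preceq \underset{}{max_\prec}\{m_{i+1},\underset{u\in B\cup\{r\}}{max_\prec}\{\mu(v,u)\}\}$ then holds permanently. It therefore suffices to show that every execution from $\rho'$ reaches a configuration in which all processes of $P_{m_{i+1}}$ satisfy $spec$ and are disabled: such a configuration is $S$-legitimate on $V_{m_{i+1}}=V_{m_i}\cup P_{m_{i+1}}$ and satisfies $IM_{m_{i+1}}$, hence lies in $\mathcal{LC}_{m_{i+1}}$.

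The next ingredient is a connectivity observation. By boundedness of $\mathcal{M}$, $met(m,w)\preceq m$, so along any maximum metric path from a node $v$ to $r$ the metric values are non-increasing as one moves away from $r$; hence every vertex of such a path has maximum metric $\succeq\mu(v,r)$, and the subsystem induced by $V_{m_{i+1}}$ is connected and contains $r$. Consequently each $v\in P_{m_{i+1}}$ admits a neighbor $u\in V_{m_{i+1}}$ on a maximum metric path of $v$, with $met(\mu(u,r),w_{u,v})=\mu(v,r)=m_{i+1}$. Because $v\notin S_B$ we have $\mu(v,r)\succ\underset{b\in B}{max_\prec}\{\mu(v,b)\}$, so $\underset{p\in B\cup\{r\}}{max_\prec}\{\mu(v,p)\}=m_{i+1}$; combining the permanent invariant $IM_{m_{i+1}}$ with Lemma~\ref{lem:goodProperty} shows that no neighbor of $v$ can stably offer a metric strictly greater than $m_{i+1}$. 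Thus the correct and maximal stable target level of every $v\in P_{m_{i+1}}$ is exactly $m_{i+1}$, attained by selecting such a $u$ as parent.

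The core of the proof is then the convergence argument, adapted from Lemma~\ref{lem:CtoLCmr}. I would suppose, for contradiction, that some process of $P_{m_{i+1}}$ takes infinitely many steps in an execution $e$ from $\rho'$. Since $V_{m_i}$ is frozen and $V_{m_{i+1}}$ is connected, one can select an infinitely active $v\in P_{m_{i+1}}$ possessing a neighbor $u\in V_{m_{i+1}}$ on a maximum metric path of $v$ that takes only finitely many steps in $e$ (ordering the processes of $P_{m_{i+1}}$ by their hop-distance to the frozen core $V_{m_i}$ inside the induced subsystem). After $u$'s last step, $level_u=\mu(u,r)$ offers exactly $m_{i+1}$ to $v$ and, as in Lemma~\ref{lem:CtoLCmr}, $dist_u<D-1$ (otherwise $u$ would itself be enabled in finite time, contradicting that it is stable). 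Then $u$ lies in the argument set of $choose_v$ infinitely often, and the round-robin fairness of $choose$ forces $prnt_v=u$ within finite time; once $v$ points to such a $u$ with $level_v=m_{i+1}$ (the maximal offered value, by the previous paragraph) and $dist_v=current\_dist_v()$, the guards of $\boldsymbol{(R_1)}$, $\boldsymbol{(R_2)}$ and $\boldsymbol{(R_3)}$ all evaluate to false and $v$ is permanently disabled, a contradiction. Hence every execution from $\rho'$ reaches a configuration in which no process of $P_{m_{i+1}}$ is enabled; each such process then satisfies $spec$, and together with the frozen $V_{m_i}$ and the permanent $IM_{m_{i+1}}$ this configuration lies in $\mathcal{LC}_{m_{i+1}}$.

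I expect the main obstacle to be the $dist$ bookkeeping for the same-level processes inside $P_{m_{i+1}}$, i.e.\ ruling out that a node is held at $dist=D-1$ (hence unselectable as a parent) forever — exactly the failure mode exhibited for the protocol of \cite{DMT10ca} in Theorem~\ref{th:not2TAStrong}. This is where the two modifications of $\mathcal{SSMAX}$ are essential: resetting $dist$ to $0$ across a level change (so that a $v\in P_{m_{i+1}}$ attaching to a strictly higher-level parent immediately gets $dist_v=0$, and distances inside the plateau become the genuine hop-distances to $V_{m_i}$, bounded below $D-1$ by the choice of $D$), and forbidding a node to decrease $dist$ without changing its parent (so that no correct process oscillates its distance infinitely often without converging). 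I would isolate the precise statement that stable in-plateau distances are bounded by $D-1$ as the one nontrivial estimate feeding the selectability claim $dist_u<D-1$ used above.
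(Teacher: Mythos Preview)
Your proposal is correct and follows essentially the same route as the paper: invoke Lemma~\ref{lem:CmitoIMmi+1} to obtain a configuration $\rho'$ satisfying $IM_{m_{i+1}}$, use closure of $\mathcal{LC}_{m_i}$ and $IM_{m_{i+1}}$ to freeze $V_{m_i}$ permanently, then run the same contradiction argument as in Lemma~\ref{lem:CtoLCmr} on an infinitely active $v\in P_{m_{i+1}}$ adjacent to a stabilized $u$ offering the optimal metric. Your added justifications (connectivity of $V_{m_{i+1}}$, why $m_{i+1}$ is the maximal offered value via $IM_{m_{i+1}}$ and Lemma~\ref{lem:goodProperty}) and your explicit flagging of the $dist_u<D-1$ claim as the one nontrivial estimate are improvements over the paper, which disposes of that point in a single parenthetical.
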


\begin{proof}
Let $m_i$ be a metric value of $M$ and $\rho$ be an arbitrary configuration of $\mathcal{LC}_{m_i}$. We know by Lemma \ref{lem:CmitoIMmi+1} that any execution starting from $\rho$ reaches in a finite time a configuration $\rho'$ such that $IM_{m_{i+1}}$ holds. By closure of $IM_{m_{i+1}}$ and of $\mathcal{LC}_{m_i}$ (established respectively in Lemma \ref{lem:Imclosed} and \ref{lem:LCmiclosed}), we know that any configuration of any execution starting from $\rho'$ belongs to $\mathcal{LC}_{m_i}$ and satisfies $IM_{m_{i+1}}$.

We know that $V_{m_i}\neq\emptyset$ since $r\in V_{m_i}$ for any $i\geq 0$. Remind that $V_{m_{i+1}}$ is connected by the boundedness of $\mathcal{M}$. Then, we know that there exists at least one process $p$ of $P_{m_{i+1}}$ which has a neighbor $q$ in $V_{m_{i}}$ such that $\mu(p,r)=met(\mu(q,r),w_{p,q})$. Moreover, Lemma \ref{lem:LCmiclosed} ensures us that any process of $V_{m_i}$ takes no step in any executions starting from $\rho'$.

Observe that, for any configuration of an execution starting from $\rho'$, if any process of $P_{m_{i+1}}$ is not enabled, then all processes $v$ of $P_{m_{i+1}}$ satisfy $spec(v)$. Assume now that there exists an execution $e$ starting from $\rho'$ in which some processes of $P_{m_{i+1}}$ take infinitely many steps. By construction, at least one of these processes (note it $v$) has a neighbor $u$ such that $\mu(v,r)=met(\mu(u,r),w_{v,u})$ which takes only a finite number of steps in $e$ (recall the construction of $p$). After $u$ takes its last step of $e$, we can observe that $level_u=\mu(u,r)$ and $dist_u<D-1$ (otherwise, $u$ is activated in a finite time that contradicts its construction). 

As $v$ can execute consequently $\boldsymbol{(R_1)}$ only a finite number of times (since the incrementation of $dist_v$ is bounded by $D$), we can deduce that $v$ executes $\boldsymbol{(R_2)}$ or $\boldsymbol{(R_3)}$ infinitely often. In both cases, $u$ belongs to the set which is the parameter of function $choose$ (remind that $IM_{m_{i+1}}$ is satisfied and that $u$ has the better possible metric among $v$'s neighbors). By the construction of this function, we can deduce that $prnt_v=u$ in a finite time in $e$. Then, the construction of $u$ implies that $v$ is never enabled in the sequel of $e$. This is contradictory with the construction of $e$.

Consequently, any execution starting from $\rho'$ reaches in a finite time a configuration such that all processes of $P_{m_{i+1}}$ are not enabled. We can deduce that this configuration belongs to $\mathcal{LC}_{m_{i+1}}$, that ends the proof.
\end{proof}

\begin{lemma}\label{lem:convergenceLCMax}
Starting from any configuration, any execution of $\mathcal{SSMAX}$ reaches a configuration of $\mathcal{LC}$ in a finite time.
\end{lemma}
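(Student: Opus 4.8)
The plan is to obtain this convergence result by a straightforward finite induction along the totally ordered—and, crucially, finite—set of metric values $M=\{m_0=mr,m_1,\ldots,m_k\}$, gluing together the two convergence lemmas already proved. First I would establish the base case by invoking Lemma \ref{lem:CtoLCmr}: starting from any configuration of $\mathcal{C}$, every execution of $\mathcal{SSMAX}$ reaches in finite time a configuration of $\mathcal{LC}_{mr}=\mathcal{LC}_{m_0}$. This handles the passage from an arbitrary (possibly completely corrupted) initial configuration to the top level of the hierarchy.

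Next I would use Lemma \ref{lem:LCmitoLCmi+1} as the inductive step. For each index $i\in\{0,\ldots,k-1\}$, once an execution has reached a configuration of $\mathcal{LC}_{m_i}$, that lemma guarantees it reaches a configuration of $\mathcal{LC}_{m_{i+1}}$ in finite time. Applying this repeatedly yields a finite chain of ``level completions'' $\mathcal{LC}_{m_0}\rightsquigarrow\mathcal{LC}_{m_1}\rightsquigarrow\cdots\rightsquigarrow\mathcal{LC}_{m_k}$. Since $M$ is finite (as observed just before the introduction of the sets $P_{m_i},V_{m_i},I_{m_i},\mathcal{LC}_{m_i}$), this chain has exactly $k$ steps, so only finitely many applications of the step lemma are needed, and the total elapsed time—being a finite sum of finite times—is itself finite. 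After the $k$-th application the execution has reached a configuration of $\mathcal{LC}_{m_k}$, which is precisely $\mathcal{LC}$ by definition.

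The only point that requires a small amount of care is the clean composition of the per-level guarantees: each lemma is phrased as a statement about \emph{any} execution starting from a configuration in $\mathcal{LC}_{m_i}$, so I would apply it to the suffix of the execution beginning at the first configuration that lies in $\mathcal{LC}_{m_i}$, thereby producing a strictly later configuration in $\mathcal{LC}_{m_{i+1}}$ on the same execution. I do not expect a genuine obstacle here, since the finiteness of $M$ caps the number of inductive steps and the substantive work—closure of each $\mathcal{LC}_{m_i}$ (Lemma \ref{lem:LCmiclosed}) together with convergence from one level to the next—was already carried out in the preceding lemmas. The present statement is essentially their formal concatenation.
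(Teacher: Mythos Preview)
Your proposal is correct and mirrors the paper's proof almost exactly: the paper also invokes Lemma~\ref{lem:CtoLCmr} to reach $\mathcal{LC}_{m_0}$ and then applies Lemma~\ref{lem:LCmitoLCmi+1} at most $k$ times to reach $\mathcal{LC}_{m_k}=\mathcal{LC}$. Your added remark about applying each step to the suffix of the execution is a reasonable clarification, but the argument is otherwise the same concatenation of the two preceding lemmas.
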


\begin{proof}
Let $\rho$ be an arbitrary configuration. We know by Lemma \ref{lem:CtoLCmr} that any execution starting from $\rho$ reaches in a finite time a configuration of $\mathcal{LC}_{mr}=\mathcal{LC}_{m_0}$. Then, we can apply at most $k$ times the result of Lemma \ref{lem:LCmitoLCmi+1} to obtain that any execution starting from $\rho$ reaches in a finite time a configuration of $\mathcal{LC}_{m_k}=\mathcal{LC}$, that proves the result.
\end{proof}

\begin{theorem}\label{th:SSMAXstrict}
$\mathcal{SSMAX}$ is a $(S_B,n-1)$-TA-strictly stabilizing protocol for $spec$.
\end{theorem}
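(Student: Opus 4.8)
The plan is to derive the theorem directly from the two preceding results, since together they supply exactly the two ingredients demanded by Definition~\ref{def:SfTAStrictstabilizing}. Recall that a protocol is $(S_B,n-1)$-TA-strictly stabilizing for $spec$ precisely when, under at most $n-1$ Byzantine processes, every execution $e=\rho_0,\rho_1,\ldots$ contains some configuration $\rho_i$ that is $(S_B,n-1)$-TA-contained for $spec$ (Definition~\ref{def:SfTAcontained}). I would thus begin by fixing an arbitrary initial configuration and an arbitrary execution of $\mathcal{SSMAX}$ involving at most $n-1$ Byzantine processes.

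First I would invoke the convergence statement, Lemma~\ref{lem:convergenceLCMax}: starting from any configuration, the execution reaches in finite time a configuration $\rho_i\in\mathcal{LC}$. Then I would apply the containment statement, Lemma~\ref{lem:SBTAcontainedMax}, which asserts that every configuration of $\mathcal{LC}$ is $(S_B,n-1)$-TA-contained for $spec$; in particular $\rho_i$ is. Hence the execution contains a $(S_B,n-1)$-TA-contained configuration, which is exactly what Definition~\ref{def:SfTAStrictstabilizing} requires, and the theorem follows. In this sense the statement is an immediate corollary of the two lemmas, and the write-up should be only a few lines.

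The real difficulty is not in this final assembly but in the two lemmas it relies on, which have already been carried out in the excerpt. The hard part was the convergence argument: it proceeds by induction along the totally ordered list of used metric values $m_0=mr,m_1,\ldots,m_k$ (with $m_{i+1}\prec m_i$), first stabilizing $P_{mr}$ (Lemma~\ref{lem:CtoLCmr}) and then promoting $\mathcal{LC}_{m_i}$ to $\mathcal{LC}_{m_{i+1}}$ (Lemma~\ref{lem:LCmitoLCmi+1}), the latter itself resting on the variant-function argument of Lemmas~\ref{lem:LCmitodistD} and \ref{lem:distDtolevelmi} that forces the spurious plateaus at level $m_i$ to drain through the $dist$ counter bounded by $D$. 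Containment, in turn, is the closure of $\mathcal{LC}_{m_i}$ established in Lemma~\ref{lem:LCmiclosed}, which leans on the closure of the invariant $IM_{m_i}$ (Lemma~\ref{lem:Imclosed}) and on the metric property of Lemma~\ref{lem:goodProperty}. Given all of this machinery, no new obstacle arises at the level of Theorem~\ref{th:SSMAXstrict} itself.
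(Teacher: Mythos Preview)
Your proposal is correct and matches the paper's own proof, which likewise states that the theorem is a direct consequence of Lemmas~\ref{lem:SBTAcontainedMax} and~\ref{lem:convergenceLCMax}. Your additional recap of the supporting chain of lemmas is accurate and consistent with the paper's development.
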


\begin{proof}
This result is a direct consequence of Lemmas \ref{lem:SBTAcontainedMax} and \ref{lem:convergenceLCMax}.
\end{proof}

\subsection{Proof of the $(t,S_B^*,n-1)$-TA Strong Stabilization for $spec$}

Let be $E_B=S_B\setminus S_B^*$ (\emph{i.e.} $E_B$ is the set of process $v$ such that $\mu(v,r)=\underset{b\in B}{max}\{\mu(v,b)\}$). Note that the subsytem induced by $E_B$ may have several connected components. In the following, we use the following notations: $E_B=\{E_B^1,\ldots,E_B^\ell\}$ where each $E_B^i$ ($i\in\{0,\ldots,\ell\}$) is a subset of $E_B$ inducing a maximal connected component, $\delta(E_B^i)$ ($i\in\{0,\ldots,\ell\}$) is the diameter of the subsystem induced by $E_B^i$, and $\delta=\underset{i\in\{0,\ldots,\ell\}}{max}\{\delta(E_B^i)\}$. When $a$ and $b$ are two integers, we define the following function: $\Pi(a,b)=\frac{a^{b+1}-1}{a-1}$.

\begin{lemma}\label{lem:LCnbactions}
If $\rho$ is a configuration of $\mathcal{LC}$, then any process $v\in E_B$ is activated at most $\Pi(k,\delta)\Delta D$ times in any execution starting from $\rho$.
\end{lemma}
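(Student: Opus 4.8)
The plan is to bound the activations of each $v \in E_B$ by an induction over a notion of \emph{depth} inside its connected component, after first isolating a containment property that neutralises the most dangerous rule, namely the metric-improving rule $(R_3)$.

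First I would record what cannot move. Since $\rho \in \mathcal{LC} = \mathcal{LC}_{m_k}$, Lemma~\ref{lem:LCmiclosed} guarantees that every process of $V \setminus S_B = V_{m_k}$ satisfies $spec$ and is never activated in any execution starting from $\rho$; this frozen region is the anchor of the whole argument. Next I would prove the decisive containment fact for $E_B$: if $v \in E_B$ then $\max_{p \in B \cup \{r\}}\{\mu(v,p)\} = \mu(v,r)$, and the closure of $IM_{m_k}$ (Lemma~\ref{lem:Imclosed}) gives $level_u \preceq \max_{p \in B\cup\{r\}}\{\mu(u,p)\}$ for every neighbour $u$ of $v$, so by monotonicity together with Lemma~\ref{lem:goodProperty} no neighbour can ever present to $v$ a value $met(level_u, w_{u,v})$ strictly greater than $\mu(v,r)$. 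Consequently, as soon as $level_v = \mu(v,r)$, rule $(R_3)$ is permanently disabled at $v$ and $level_v$ can never rise again; the only remaining sources of activation are the level/distance repair $(R_1)$ and the parent change $(R_2)$, both governed by the $dist$ machinery.

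Then I would organise $E_B$ by depth. Fix a maximum metric tree rooted at $r$; for $v \in E_B$ let its \emph{reference supplier} be its tree-parent, and let the \emph{rank} of $v$ be the number of edges of the tree path from $v$ up to the first vertex lying in $V \setminus S_B$. Connectivity of $V \setminus S_B^* = (V \setminus S_B) \cup E_B$ ensures this path reaches the frozen region while staying in the good part, so the rank is well defined, the reference supplier of a rank-$(d{+}1)$ process has rank $d$, a rank-$0$ process has a frozen supplier, and every rank is at most the component diameter $\delta(E_B^i) \le \delta$. I would then prove by induction on the rank $d$ that $v$ is activated at most $N_d := \Pi(k,d)\,\Delta D$ times. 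For the base case $d = 0$ the frozen supplier offers $\mu(v,r)$ forever, and I would show $v$ reaches a permanently stable state within $\Delta D$ of its own actions: each parent choice is followed by at most $D$ applications of $(R_1)$ (the modified protocol only lets $dist_v$ increase and caps it at $D$, so a parent change via $(R_2)$ is forced before more than $D$ increments occur), and the round-robin $choose_v$ cycles through its at most $\Delta$ neighbours before it must land on the supplier, after which no guard is enabled. For the inductive step the supplier has rank $d-1$ and is activated at most $N_{d-1}$ times; the $k$-bounded daemon allows $v$ at most $k$ actions between two consecutive actions of its supplier, and after the supplier's last action $v$ stabilises within $\Delta D$ actions exactly as in the base case, yielding $N_d \le k\,N_{d-1} + \Delta D$. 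With $N_0 = \Delta D$ this recurrence solves to $N_d = \Delta D \sum_{j=0}^{d} k^j = \Pi(k,d)\,\Delta D$, and since $d \le \delta$ the claim follows.

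The hard part will be the per-epoch stabilisation claim used in both the base case and the inductive step: once the reference supplier holds a correct value, $v$ must not be agitated without bound by its remaining neighbours. This is where the three ingredients have to be combined with care — the $(R_3)$-disabling containment from the first step rules out endless parent changes chasing a better metric, the \emph{increase-only} treatment of $dist_v$ (decreases being allowed only through a parent change) bounds the $(R_1)$ churn between two parent changes by $D$, and the fairness of the round-robin $choose_v$ bounds the number of parent changes by $\Delta$. A technical point to discharge here is that a stable $E_B$ supplier must retain $dist < D-1$ so that it remains an admissible choice for $v$ in the guards of $(R_2)$ and $(R_3)$; this should follow from the construction of $D$ as a bound on simple-path length together with the $dist$-reset at metric boundaries, but it is precisely the kind of guard interaction that must be verified rather than assumed.
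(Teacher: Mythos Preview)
Your overall strategy coincides with the paper's: freeze $V\setminus S_B$ via Lemma~\ref{lem:LCmiclosed}, use $IM_{m_k}$ to show no neighbour can ever offer a process of $E_B$ a value strictly above $\mu(v,r)$, and then run an induction on a depth parameter inside $E_B$, combining the $k$-bounded daemon with a $\Delta D$ per-epoch bound (round-robin $choose$ for the $\Delta$ factor, increase-only $dist$ capped at $D$ for the other). The recurrence $N_d \le kN_{d-1}+\Delta D$ and its solution $\Pi(k,d)\Delta D$ are exactly what the paper obtains.

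The gap is in your choice of depth parameter. You take as reference supplier the parent in a fixed maximum metric tree and define the rank as the tree-path length to the first vertex of $V\setminus S_B$. Two things go wrong. First, the tree path need not stay in $E_B$: for a non-strictly-monotone $met$ (e.g.\ the flow metric) one can have $v\in E_B$ whose tree-parent $u$ satisfies $\mu(u,r)\prec\mu(u,b)$ while $met(\mu(u,r),w_{u,v})=met(\mu(u,b),w_{u,v})$, so $u\in S_B^*$; connectivity of $V\setminus S_B^*$ does not force the \emph{tree} path to remain there. Once a supplier lies in $S_B^*$ your inductive bound on its activations is unavailable. Second, even when the tree path does stay in $E_B^i$, its length is a simple-path length, not a shortest-path length, and can exceed the diameter $\delta(E_B^i)$; your claim ``every rank is at most $\delta(E_B^i)$'' is unjustified, and without it you do not recover the stated bound $\Pi(k,\delta)\Delta D$.

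The paper sidesteps both issues by defining depth as $d_{E_B^i}(p,v)$, the shortest-path distance \emph{inside} $E_B^i$ from a fixed anchor $p\in E_B^i$ chosen to have a neighbour $q\in V\setminus S_B$ with $\mu(p,r)=met(\mu(q,r),w_{p,q})$. This guarantees the supplier at each inductive step is in $E_B^i$ and that the depth is bounded by $\delta(E_B^i)\le\delta$ by definition of diameter. Your containment argument for $(R_3)$ and your $\Delta D$ accounting are fine and in fact more explicit than the paper's; only the depth parameter needs to be replaced.
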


\begin{proof}
Let $\rho$ be a configuration of $\mathcal{LC}$ and $e$ be an execution starting from $\rho$. Let $p$ be a process of $E_B^i$ ($i\in\{0,\ldots,\ell\}$) such that there exists a neighbor $q$ which satisfies $q\in V\setminus S_B$ and $\mu(p,r)=met(\mu(q,r),w_{p,q})$ (such a process exists by construction of $E_B^i$). We are going to prove by induction on $d$ the following property:

$\boldsymbol{(P_d)}$: if $v$ is a process of $E_B^i$ such that $d_{E_B^i}(p,v)=d$ (where $d_{E_B^i}$ denotes the distance in the subsystem induced by $E_B^i$), then $v$ executes at most $\Pi(k,d)\Delta D$ actions in $e$.

\begin{description}
\item[Initialization:] $d=0$.\\
This implies that $v=p$. Then, by construction, there exists a neighbor $q$ which satisfies $q\in V\setminus S_B$ and $\mu(p,r)=met(\mu(q,r),w_{p,q})$. As $\rho\in\mathcal{LC}$, Lemma \ref{lem:SBTAcontainedMax} ensures us that $level_q=\mu(q,r)$ and $dist_q<D-1$ in any configuration of $e$. Then, the boundedness of $\mathcal{M}$ implies that $q$ belongs to the set which is parameter to the macro $choose$ at any execution of rules $\boldsymbol{(R_2)}$ or $\boldsymbol{(R_3)}$ by $p$. Consequently, $p$ executes at most $\Delta$ times rules $\boldsymbol{(R_2)}$ and $\boldsymbol{(R_3)}$ in $e$ before choosing $q$ as its parent. Moreover, note that $p$ can execute rule $\boldsymbol{(R_1)}$ at most $D$ times between two consecutive executions of rules $\boldsymbol{(R_2)}$ and $\boldsymbol{(R_3)}$ (because $\boldsymbol{(R_1)}$ only increases $dist_p$ which is bounded by $D$). Consequently, $p$ executes at most $\Delta D$ actions before choosing $q$ as its parent.

By Lemma \ref{lem:SBTAcontainedMax}, we know that $q$ takes no action in $e$. Once $p$ chooses $q$ as its parent, its state is consistent with the one of $q$ (by construction of rules $\boldsymbol{(R_2)}$ and $\boldsymbol{(R_3)}$). Hence, $p$ is never enabled after choosing $q$ as its parent. Consequently, we obtain that $p$ takes at most $\Delta D$ actions in $e$, that proves $\boldsymbol{(P_0)}$.

\item[Induction:] $d>0$ and $\boldsymbol{(P_{d-1})}$ is true.\\
Let $v$ be a process of $E_B^i$ such that $d_{E_B^i}(p,v)=d$. By construction, there exists a neighbor $u$ of $v$ which belongs to $E_B^i$ such that $d_{E_B^i}(p,u)=d-1$. By $\boldsymbol{(P_{d-1})}$, we know that $u$ takes at most $\Pi(k,d-1)\Delta D$ actions in $e$. The $k$-boundedness of the daemon allows us to conclude that $v$ takes at most $k\times\Pi(k,d-1)\Delta D$ actions before the last action of $u$. Then, a similar reasoning to the one of the initialization part allows us to say that $v$ takes at most $\Delta D$ actions after the last action of $u$ (note that the fact that $|M(S)|\geq 2$, the construction of $D$ and the management of $dist$ variables imply that $dist_u<D-1$ after the last step of $u$). In conclusion, $v$ takes at most $k\times\Pi(k,d-1)\Delta D+\Delta D=\Pi(k,d)\Delta D$ actions in $e$, that proves $\boldsymbol{(P_d)}$.
\end{description}

As $\delta$ denotes the maximal diameter of connected components of the subsystem induced by $E_B$, then we know that $d_{E_B^i}(p,v)\leq \delta$ for any process $v$ in $E_B^i$. For any process $v$ of $E_B$, there exists $i\in\{0,\ldots,\ell\}$ such that $v\in E_B^i$. We can deduce that any process of $E_B$ takes at most $\Pi(k,\delta)\Delta D$ actions in $e$, that implies the result.
\end{proof}

\begin{lemma}\label{lem:activatedorspec}
If $\rho$ is a configuration of $\mathcal{LC}$ and $v$ is a process such that $v\in E_B$, then for any execution $e$ starting from $\rho$ either
\begin{enumerate}
\item there exists a configuration $\rho'$ of $e$ such that $spec(v)$ is always satisfied after $\rho'$, or
\item $v$ is activated in $e$.
\end{enumerate}
\end{lemma}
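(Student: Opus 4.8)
The plan is to prove the contrapositive of the dichotomy: if $v$ is \emph{not} activated in $e$ (so alternative~(2) fails), then alternative~(1) must hold. So fix an execution $e$ with $v$ never activated. First, since $v$ performs no action, its O-variables $prnt_v$, $level_v$ and $dist_v$ are constant along $e$. Second, because the daemon is strongly fair, $v$ cannot be enabled in infinitely many configurations without acting; hence there is a configuration $\rho'$ of $e$ after which $v$ is \emph{continuously disabled}. Since $\mathcal{LC}$ is closed under the actions of $\mathcal{SSMAX}$ (Lemma~\ref{lem:LCmiclosed}), every configuration of $e$ lies in $\mathcal{LC}$, so $IM_{m_k}$ holds throughout; as $m_k$ is the $\prec$-minimum of $M$ and $v\in E_B$ satisfies $\underset{p\in B\cup\{r\}}{max_\prec}\{\mu(v,p)\}=\mu(v,r)$, this yields $level_v\preceq\mu(v,r)$ in every configuration after $\rho'$.

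Next I would read off the local content of ``$v$ disabled''. Disabledness of $\boldsymbol{(R_1)}$ forces $level_v=met(level_{prnt_v},w_{v,prnt_v})$ and a distance consistent with the parent; disabledness of $\boldsymbol{(R_2)}$ and $\boldsymbol{(R_3)}$ forces that no neighbour usable as a parent offers a strictly larger metric than $level_v$ and that $dist_v$ is coherent with its parent. These are exactly items~2--4 of the definition of an $\mathcal{M}$-path, evaluated at the single node $v$. What remains, and what is the substance of the lemma, is to promote this purely local consistency into a full $\mathcal{M}$-path $(v_0,\ldots,v_k=v)$ ending at a legitimate root $v_0\in B\cup\{r\}$ with $level_v=\mu(v,v_0)=\mu(v,r)$.

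The crux is to establish $level_v=\mu(v,r)$ and that the chain of parent pointers out of $v$ is a genuine maximum metric path. I would argue by contradiction: suppose $level_v\prec\mu(v,r)$ held throughout the suffix after $\rho'$. Along a maximum metric path from $v$ realising $\mu(v,r)$, the sub-path lying in $V\setminus S_B$ is, by the $(S_B,n-1)$-TA containment of $\mathcal{LC}$ (Lemma~\ref{lem:SBTAcontainedMax}), frozen with its correct levels $\mu(\cdot,r)$ and distances strictly below $D-1$. Using monotonicity and boundedness of $\mathcal{M}$ together with Lemma~\ref{lem:goodProperty}, one shows that some neighbour of $v$ would then offer a metric strictly exceeding $level_v$ while being eligible as a parent, so that $\boldsymbol{(R_3)}$ (or $\boldsymbol{(R_1)}$, once the offered value propagates) would eventually be enabled at $v$---contradicting permanent disabledness. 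Having forced $level_v=\mu(v,r)$, the same stability of the $V\setminus S_B$ boundary and the closure of $\mathcal{LC}$ identify the parents of $v$ as a maximum metric path to a root in $B\cup\{r\}$, so $spec(v)$ holds in every configuration after $\rho'$, which is alternative~(1).

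I expect that last step to be the main obstacle. The delicate point is that permanent disabledness of $v$ does \emph{not} by itself freeze its ancestors: for metrics in which $met(\cdot,w)$ is not injective (for instance the flow metric $\mathcal{F}$, where $met(m,w)=min\{m,w\}$), the parent of $v$ may keep changing while the value it offers to $v$---and hence $v$'s enabledness---stays fixed. The argument must therefore track the metric values \emph{offered} to $v$ rather than the full states of its ancestors, and lean on the frozen optimal region $V\setminus S_B$ (Lemma~\ref{lem:SBTAcontainedMax}) and on Lemma~\ref{lem:goodProperty} to guarantee that any strictly better offer eventually reaches $v$ and enables it. This is precisely what excludes $v$ being contained forever at a suboptimal level, and it is where the hypothesis $v\in E_B$ (so that $\mu(v,r)$ is the best attainable metric, with no Byzantine process able to offer more) is essential.
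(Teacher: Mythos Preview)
Your reduction to ``$v$ is eventually permanently disabled'' is fine, and it is also what the paper does implicitly. The real gap lies in everything after that point.

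\textbf{Local disabledness does not yield $spec(v)$.} The predicate $spec(v)$ is \emph{global}: it demands an entire $\mathcal{M}$-path $(v_0,\ldots,v_k=v)$ with $v_0\in B\cup\{r\}$, $prnt_{v_0}=\bot$, $level_{v_0}=mr$, and consistency of \emph{every} link. Your step~2 (``read off items 2--4 at the single node $v$'') only pins down the edge $(prnt_v,v)$. Even granting $level_v=\mu(v,r)$, nothing you have written constrains $prnt_v$'s own parent, or the next link up, and you yourself note that those ancestors need not be frozen. Your closing sentence (``the same stability of $V\setminus S_B$ \ldots identify the parents of $v$ as a maximum metric path'') is precisely the claim to be proved, and does not follow from what precedes it: $prnt_v$ may lie in $S_B^*$, where nothing is stable.

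\textbf{The argument for $level_v=\mu(v,r)$ is also incomplete.} You assume a maximum metric path from $v$ whose portion in $V\setminus S_B$ is frozen with correct levels, and that this percolates down to a \emph{neighbour} of $v$. But a process $v\in E_B$ need not have any neighbour in $V\setminus S_B$; its neighbours may all lie in $E_B\cup S_B^*$, where levels are not fixed. So there is no direct ``eligible neighbour offering $\mu(v,r)$'' to trigger $\boldsymbol{(R_3)}$.

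\textbf{What the paper does instead.} The paper never tries to certify $spec(v)$ constructively. It argues purely by contradiction on the parent chain $P_v=(v,prnt_v,prnt_{prnt_v},\ldots)$: whenever $spec(v)$ is false, \emph{some} process on $P_v$ is enabled; by strong fairness some process on the chain is activated infinitely often; take the first such process $p$ along $P_v$ and let $p'$ be its predecessor (so $prnt_{p'}=p$ and $p'$ is \emph{not} activated infinitely often). Every activation of $p$ changes $level_p$ or $dist_p$, which enables $p'$, so $p'$ is enabled infinitely often, hence activated infinitely often---contradiction. This ``propagate enabledness one step towards $v$'' trick is the missing idea in your outline: it turns the global failure of $spec(v)$ into an eventual activation of $v$ without ever analysing what value $level_v$ should be.
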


\begin{proof}
Let $\rho$ be a configuration of $\mathcal{LC}$ and $v$ be a process such that $v\in E_B$. By contradiction, assume that there exists an execution starting from $\rho$ such that $(i)$ $spec(v)$ is infinitely often false in $e$ and $(ii)$ $v$ is never activated in $e$.

For any configuration $\rho$, let us denote by $P_v(\rho)=(v_0=v,v_1=prnt_v,v_2=prnt_{v_1},\ldots,v_k=prnt_{v_{k-1}},p_v=prnt_{v_k})$ the maximal sequence of processes following pointers $prnt$ (maximal means here that either $prnt_{p_v}=\bot$ or $p_v$ is the first process such that there $p_v=v_i$ for some $i\in\{0,\ldots,k\}$).

Let us study the following cases:
\begin{description}
\item[Case 1:] $prnt_v\in V\setminus S_B$ in $\rho$.\\
Since $\rho\in\mathcal{LC}$, $prnt_v$ satisfies $spec(prnt_v)$ in $\rho$ and in any execution starting from $\rho$ (by Lemma \ref{lem:SBTAcontainedMax}). Hence, $prnt_v$ is never activated in $e$. If $v$ does not satisfy $spec(v)$ in $\rho$, then we have $level_v\neq met(level_{prnt_v},w_{v,prnt_v})$ or $dist_v\neq 0$ in $\rho$. Then, $v$ is continuously enabled in $e$ and we have a contradiction between assumption $(ii)$ and the strong fairness of the scheduling. This implies that $v$ satisfies $spec(v)$ in $\rho$. The fact that $prnt_v$ is never activated in $e$ and that the state of $v$ is consistent with the one of $prnt_v$ ensures us that $v$ is never enabled in any execution starting from $\rho$. Hence, $spec(v)$ remains true in any execution starting from $\rho$. This contradicts the assumption $(i)$ on $e$.
\item[Case 2:] $prnt_v\notin V\setminus S_B$ in $\rho$.\\
By the assumption $(i)$ on $e$, we can deduce that there exists infinitely many configurations $\rho'$ such that a process of $P_v(\rho')$ is enabled (since $spec(v)$ is false only when the state of a process of $P_v(\rho')$ is not consistent with the one of its parent that made it enabled). By construction, the length of $P_v(\rho')$ is finite for any configuration $\rho'$ and there exists only a finite number of processes in the system. Consequently, there exists at least one process which is infinitely often enabled in $e$. Since the scheduler is strongly fair, we can conclude that there exists at least one process which is infinitely often activated in $e$.

Let $A_e$ be the set of processes which are infinitely often activated in $e$. Note that $v\notin A_e$ by assumption $(ii)$ on $e$. Let $e'=\rho'\ldots$ be the suffix of $e$ which contains only activations of processes of $A_e$. Let $p$ be the first process of $P_v(\rho')$ which belongs to $A_e$ ($p$ exists since at least one process of $P_v$ is enabled when $spec(v)$ is false). By construction, the prefix of $P_v(\rho'')$ from $v$ to $p$ in any configuration $\rho''$ of $e$ remains the same as the one of $P_v(\rho')$. Let $p'$ be the process such that $prnt_{p'}=p$ in $e'$ ($p'$ exists since $v\neq p$ implies that the prefix of $P_v(\rho')$ from $v$ to $p$ counts at least two processes). As $p$ is infinitely often activated and as any activation of $p$ modifies the value of $level_p$ or of $dist_p$ (at least one of these two variables takes at least two different values in $e'$), we can deduce that $p'$ is infinitely often enabled in $e'$ (since the value of $level_{p'}$ is constant by construction of $e'$ and $p$). Since the scheduler is strongly fair, $p'$ is activated in a finite time in $e'$, that contradicts the construction of $p$. 
\end{description}
In the two cases, we obtain a contradiction with the construction of $e$, that proves the result.
\end{proof}

Let $\mathcal{LC^*}$ be the following set of configurations:
\[\mathcal{LC^*}=\left\{\rho\in C\left|(\rho \text{ is }S_B^*\text{-legitimate for }spec)\wedge(IM_{m_k}(\rho)=true)\right.\right\}\]

Note that, as $S_B^*\subseteq S_B$, we can deduce that $\mathcal{LC^*}\subseteq\mathcal{LC}$. Hence, properties of Lemmas \ref{lem:LCnbactions} and \ref{lem:activatedorspec} also apply to configurations of $\mathcal{LC^*}$.

\begin{lemma}\label{lem:SB*TAtimecontained}
Any configuration of $\mathcal{LC^*}$ is $(n\Pi(k,\delta)\Delta D,\Pi(k,\delta)\Delta D,S_B^*,n-1)$-TA time contained for $spec$.
\end{lemma}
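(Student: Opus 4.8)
The plan is to verify directly the four clauses of the definition of a $(t,k,S_B^*,n-1)$-TA time contained configuration, for $t=n\Pi(k,\delta)\Delta D$ and per-process bound $\Pi(k,\delta)\Delta D$. Throughout I exploit the inclusion $\mathcal{LC^*}\subseteq\mathcal{LC}$, so that Lemmas~\ref{lem:SBTAcontainedMax}, \ref{lem:LCnbactions} and \ref{lem:activatedorspec} apply verbatim, and I recall that the $S_B^*$-correct processes are exactly the processes of $(V\setminus S_B)\cup E_B$.

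For clause~1, $S_B^*$-legitimacy is immediate from the definition of $\mathcal{LC^*}$. For $S_B^*$-stability I freeze the Byzantine processes and show that no $S_B^*$-correct process is ever enabled. First, at the initial configuration every $S_B^*$-correct process $v$ satisfies $spec(v)$, and the conditions defining an $\mathcal{M}$-path force $v$ to be locally consistent and locally optimal: condition~2 kills $\boldsymbol{(R_1)}$, condition~4 together with $dist_v\le D-1$ kills $\boldsymbol{(R_2)}$, and condition~3 (the parent maximizes $met(level_\cdot,w)$ over the neighborhood) kills $\boldsymbol{(R_3)}$. Processes of $V\setminus S_B$ stay non-enabled forever by Lemma~\ref{lem:SBTAcontainedMax}, so the only processes that could ever move are those of $S_B^*$, and the argument becomes a first-mover argument. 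The crux is to guarantee that a moving $S_B^*$ neighbor can never \emph{freshly} enable an $E_B$ process: this I obtain from the closed invariant $IM_{m_k}$ (Lemma~\ref{lem:Imclosed}) combined with Lemma~\ref{lem:goodProperty}, which together yield $met(level_u,w_{u,v})\preceq\underset{p\in B\cup\{r\}}{max_\prec}\{\mu(v,p)\}=\mu(v,r)$ for every neighbor $u$ of an $E_B$ process $v$, so that as long as $v$ still carries its legitimate value $level_v=\mu(v,r)$ the rule $\boldsymbol{(R_3)}$ can never be triggered; the remaining rules are sensitive only to the parent, which lies on $v$'s (frozen) $\mathcal{M}$-path.

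For clause~2, I use Lemma~\ref{lem:LCnbactions}: every process of $E_B$ is activated at most $\Pi(k,\delta)\Delta D$ times in any execution, even while Byzantine processes keep acting. Hence in any execution there is a configuration after which no $E_B$ process is ever activated again; since $V\setminus S_B$ never moves (Lemma~\ref{lem:SBTAcontainedMax}), all O-variables of $S_B^*$-correct processes are fixed from that point on. Applying Lemma~\ref{lem:activatedorspec} to the suffix following the last activation of each $E_B$ process (a suffix in which that process is never activated, so only the first alternative of the lemma can hold) shows that each such process satisfies $spec$ forever afterwards; intersecting over the finitely many processes of $E_B$ yields the required $S_B^*$-legitimate configuration after which the O-variables of $S_B^*$-correct processes never change.

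Clauses~3 and~4 are pure counting via Lemma~\ref{lem:LCnbactions}. Every process of $V\setminus S_B$ performs $0$ O-variable changes and every process of $E_B$ performs at most $\Pi(k,\delta)\Delta D$ of them, which is exactly the per-process bound of clause~4. Since each $S_B^*$-TA-disruption contains, by definition, at least one O-variable change by an $S_B^*$-correct process, and these changes are consumed by distinct disruptions, the number of disruptions is at most the total number of such changes, namely at most $|E_B|\,\Pi(k,\delta)\Delta D\le n\Pi(k,\delta)\Delta D=t$, giving clause~3. I expect the genuine difficulty to be the stability half of clause~1 (and, relatedly, the ``forever unchanged'' statement of clause~2): ruling out that a perturbation of an $S_B^*$ process propagates, through a shared neighbor, to an $E_B$ process. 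This is precisely where the permanence of $IM_{m_k}$ and Lemma~\ref{lem:goodProperty} are indispensable, whereas the counting in clauses~3 and~4 is routine once Lemma~\ref{lem:LCnbactions} is in hand.
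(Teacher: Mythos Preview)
Your proof follows the same skeleton as the paper's---Lemma~\ref{lem:SBTAcontainedMax} freezes $V\setminus S_B$, Lemmas~\ref{lem:LCnbactions} and~\ref{lem:activatedorspec} together bound the activity of each $E_B$ process by $\Pi(k,\delta)\Delta D$ and guarantee eventual permanent legitimacy, and summing over $|E_B|\le n$ yields the disruption bound---and is in fact more careful than the paper, which dispatches the lemma in three sentences and does not explicitly verify the $S_B^*$-stability half of clause~1 at all, whereas you spell it out via $IM_{m_k}$ and Lemma~\ref{lem:goodProperty}.

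One caveat in your stability argument: the claim that the parent of $v\in E_B$ ``lies on $v$'s (frozen) $\mathcal{M}$-path'' is not fully justified, since intermediate vertices of that path may belong to $S_B^*$ and are therefore not a~priori immobile when only the Byzantine processes are frozen; relatedly, the identity $level_v=\mu(v,r)$ that you invoke is not an immediate consequence of $spec(v)$ alone when the $\mathcal{M}$-path happens to be rooted at some $b\in B$ (condition~5 then only gives $level_v=\mu(v,b)\preceq\mu(v,r)$, and condition~3 bounds $level_v$ by the \emph{current} neighbor levels, not by $\mu(v,r)$). These points can be patched, and the paper's own proof sidesteps them by omission rather than by argument; your treatment of clauses~2--4 matches the paper's reasoning exactly.
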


\begin{proof}
Let $\rho$ be a configuration of $\mathcal{LC^*}$. As $S_B^*\subseteq S_B$, we know by Lemma \ref{lem:SBTAcontainedMax} that any process $v$ of $V\setminus S_B$ satisfies $spec(v)$ and takes no action in any execution starting from $\rho$.

Let $v$ be a process of $E_B$. By Lemmas \ref{lem:LCnbactions} and \ref{lem:activatedorspec}, we know that $v$ takes at most $\Pi(k,\delta)\Delta D$ actions in any execution starting from $\rho$. Moreover, we know that $v$ satisfies $spec(v)$ after its last action (otherwise, we obtain a contradiction between the two lemmas). Hence, any process of $E_B$ takes at most $\Pi(k,\delta)\Delta D$ actions and then, there are at most $n\Pi(k,\delta)\Delta D$ $S_B^*$-TA-disruptions in any execution starting from $\rho$ (since $|E_B|\leq n$).

By definition of a TA time contained configuration, we obtain the result.
\end{proof}

\begin{lemma}\label{lem:convergenceLC*}
Starting from any configuration, any execution of $\mathcal{SSMAX}$ reaches a configuration of $\mathcal{LC^*}$ in a finite time.
\end{lemma}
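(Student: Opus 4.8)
The plan is to reduce the statement to the behaviour of the processes of $E_B=S_B\setminus S_B^*$, and then to feed the two lemmas just established for $E_B$ into one another. First I would invoke Lemma~\ref{lem:convergenceLCMax} to guarantee that any execution, from any starting configuration, reaches after finitely many rounds a configuration of $\mathcal{LC}$. Since $\mathcal{LC}=\mathcal{LC}_{m_k}$ is closed by Lemma~\ref{lem:LCmiclosed}, the predicate $IM_{m_k}$ then holds in every subsequent configuration, and by Lemma~\ref{lem:SBTAcontainedMax} every correct process of $V\setminus S_B$ already satisfies $spec$ and never moves again. Because $S_B^*\subseteq S_B$, the set of $S_B^*$-correct processes is exactly $(V\setminus B\setminus S_B)\cup E_B$; the correct processes outside $S_B$ are handled by $\mathcal{LC}$, so the only thing left to prove is that the processes of $E_B$ eventually satisfy $spec$ permanently.

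For a fixed $v\in E_B$ and an execution $e$ starting in $\mathcal{LC}$, Lemma~\ref{lem:LCnbactions} bounds the number of activations of $v$ by $\Pi(k,\delta)\Delta D$, hence $v$ has a last activation in $e$. I would then consider the suffix $e'$ of $e$ beginning at the configuration immediately following this last activation; by the closure established in Lemma~\ref{lem:LCmiclosed} its initial configuration still lies in $\mathcal{LC}$, so Lemma~\ref{lem:activatedorspec} applies to $e'$. In $e'$ the process $v$ is never activated, so the second alternative of that lemma is impossible, and the first must hold: there is a configuration of $e'$ after which $spec(v)$ is always satisfied.

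Finally, since $E_B$ is finite, taking the latest of these configurations over all $v\in E_B$ yields a single configuration $\rho^*$, reached after finitely many rounds (the number of activations of each process of $E_B$ is bounded and the daemon is strongly fair and $k$-bounded), after which every process of $E_B$ satisfies $spec$. Combining this with the fact that every correct process of $V\setminus S_B$ satisfies $spec$ throughout, every $S_B^*$-correct process satisfies $spec$ at $\rho^*$, so $\rho^*$ is $S_B^*$-legitimate; as $IM_{m_k}$ holds there as well, $\rho^*\in\mathcal{LC^*}$. The delicate point is not any single estimate but the way the two lemmas must be combined: neither suffices alone, and the key is to use the finiteness of activations from Lemma~\ref{lem:LCnbactions} to isolate an activation-free suffix on which the dichotomy of Lemma~\ref{lem:activatedorspec} forces permanent correctness rather than merely one further disruption.
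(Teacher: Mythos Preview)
Your proposal is correct and follows essentially the same approach as the paper's own proof: reach $\mathcal{LC}$ via Lemma~\ref{lem:convergenceLCMax}, then combine Lemmas~\ref{lem:LCnbactions} and~\ref{lem:activatedorspec} to force every process of $E_B$ to eventually satisfy $spec$. The only difference is presentational: where the paper compresses the interaction of the two lemmas into the single phrase ``$v$ satisfies $spec(v)$ after its last action (otherwise, we obtain a contradiction between the two lemmas)'', you spell this out by explicitly passing to the activation-free suffix $e'$ and eliminating the second alternative of Lemma~\ref{lem:activatedorspec} there.
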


\begin{proof}
Let $\rho$ be an arbitrary configuration. We know by Lemma \ref{lem:convergenceLCMax} that any execution starting from $\rho$ reaches in a finite time a configuration $\rho'$ of $\mathcal{LC}$. 

Let $v$ be a process of $E_B$. By Lemmas \ref{lem:LCnbactions} and \ref{lem:activatedorspec}, we know that $v$ takes at most $\Pi(k,\delta)\Delta D$ actions in any execution starting from $\rho'$. Moreover, we know that $v$ satisfies $spec(v)$ after its last action (otherwise, we obtain a contradiction between the two lemmas). This implies that any execution starting from $\rho'$ reaches a configuration $\rho''$ such that any process $v$ of $E_B$ satisfies $spec(v)$. It is easy to see that $\rho''\in\mathcal{LC^*}$, that ends the proof.
\end{proof}

\begin{theorem}\label{th:possTAStrong}
$\mathcal{SSMAX}$ is a $(n\Pi(k,\delta)\Delta D,S_B^*,n-1)$-TA strongly stabilizing protocol for $spec$.
\end{theorem}

\begin{proof}
This result is a direct consequence of Lemmas \ref{lem:SB*TAtimecontained} and \ref{lem:convergenceLC*}.
\end{proof}

\section{Concluding Remarks}\label{sec:relationship}

We discuss now about the relationship between TA strong and strong stabilization on maximum metric tree construction. We characterize by a necessary and sufficient condition the set of assigned metric that allow strong stabilization. Indeed, properties on the metric itself are not sufficient to conclude on the possibility of strong stabilization: we must know information about the considered system (assignation of the metric).

Informally, it is possible to construct a maximum metric tree in a strongly stabilizing way if and only if the considered metric is strongly maximizable and if the desired containment radius is sufficiently large. More formally,

\begin{theorem}\label{th:possStrong}
Given an assigned metric $\mathcal{AM}=(M,W,mr,met,\prec,wf)$ over a system $S$, there exists a $(t,c,n-1)$-strongly stabilizing protocol for maximum metric spanning tree construction with a finite $t$ if and only if: 
\[\begin{cases}
(M,W,met,mr,\prec) \text{ is a strongly maximizable metric, and}\\
c\geq max\{0,|M(S)|-2\}
\end{cases}\]
\end{theorem}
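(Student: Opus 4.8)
The plan is to prove the two implications separately, reusing the machinery already developed: for the \emph{if} direction I reduce strong stabilization to the topology-aware strong stabilization of $\mathcal{SSMAX}$ established in Theorem~\ref{th:possTAStrong}, and for the \emph{only if} direction I reduce to the impossibility result of Theorem~\ref{th:necessarConditionStrong}. The bridge between topology-aware and classical containment is the observation that a $(t,S_B^*,n-1)$-TA strongly stabilizing protocol is automatically $(t,c,n-1)$-strongly stabilizing whenever $S_B^*$ is contained in the $c$-neighbourhood of the Byzantine set.

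For sufficiency, assume $(M,W,met,mr,\prec)$ is strongly maximizable and $c\geq\max\{0,|M(S)|-2\}$. When $|M(S)|=1$ the assigned metric is equivalent to $\mathcal{NC}$, so $S_B^*=\emptyset$ and the $(t,0,n-1)$-strongly stabilizing protocol of Theorem~\ref{th:possstrongNC} already witnesses the claim. Otherwise $|M(S)|\geq 2$ and I claim $\mathcal{SSMAX}$ is itself $(t,c,n-1)$-strongly stabilizing. The heart of the argument is a \emph{distance bound}: for every Byzantine set $B$ and every $v\in S_B^*$ the hop-distance from $v$ to the nearest Byzantine process is at most $|M(S)|-2$. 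To prove it, fix $b\in B$ with $\mu(v,b)\succ\mu(v,r)$ and follow a maximum metric path from $b$ to $v$. Since the metric is maximizable, every prefix of this path is again a maximum metric path, so the value at each intermediate node $u$ equals $\mu(u,b)\in M(S)$; since the metric is strictly decreasing with a \emph{unique} fixed point $\phi$ (the $\prec$-least reachable value), these values strictly decrease hop by hop until $\phi$ is reached, and because $\mu(v,b)\succ\mu(v,r)\succeq\phi$ the path never attains $\phi$. Hence the path carries pairwise distinct used values, all strictly above $\phi$. As $\mu(v,r)\in M(S)$ lies strictly below $\mu(v,b)$, at most $|M(S)|-1$ used values are $\succeq\mu(v,b)$; the path visits a strictly decreasing sequence of such values starting at $mr$, so it has at most $|M(S)|-2$ edges, whence $d(v,b)\leq|M(S)|-2\leq c$. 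Consequently $S_B^*\subseteq\{v\mid\min_{b\in B}d(v,b)\leq c\}$ for every placement of at most $n-1$ Byzantine processes, so every $c$-correct process is $S_B^*$-correct, and the guarantees of Theorem~\ref{th:possTAStrong} for $S_B^*$-correct processes transfer verbatim to all $c$-correct processes, giving $(t,c,n-1)$-strong stabilization with the same $t$.

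For necessity I argue by contraposition from Theorem~\ref{th:necessarConditionStrong}: if the metric is not strongly maximizable, or if $c<|M(S)|-2$, then no $(t,c,1)$-strongly stabilizing protocol exists, and a fortiori none tolerating $n-1$ Byzantine processes. The one point requiring care is that Theorem~\ref{th:necessarConditionStrong} is phrased with the global cardinality $|M|$, whereas the relevant quantity here is $|M(S)|$. I therefore re-read its proof: the indistinguishability construction places the root $r$ and a single Byzantine process $b$ symmetrically at the two ends of a line and drives them along a strictly decreasing chain of metric values, and the containment radius forced by the symmetry between $r$ and $b$ is governed precisely by the number of distinct metric values actually \emph{realized}, that is by $|M(S)|$. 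Replaying the same two-sided symmetric execution along the used-value chain of $S$ yields an infinite sequence of $c$-disruptions, contradicting strong stabilization.

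The main obstacle is twofold. On the sufficiency side it is the distance bound, where all hypotheses are consumed simultaneously and the bookkeeping turning "pairwise distinct used values on a strictly decreasing path" into the sharp constant $|M(S)|-2$ must be done carefully; this is exactly where strong maximizability, and in particular the \emph{uniqueness} of the fixed point, is indispensable, since without it a process of $S_B^*$ could sit arbitrarily far from $B$ on a plateau and the bound would collapse. On the necessity side the delicate step is the faithful translation of the $|M|$-based impossibility construction into one driven by $|M(S)|$ and embeddable in the fixed system $S$; this translation is precisely what makes the characterization depend on the assignment rather than on the metric alone.
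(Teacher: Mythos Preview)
Your sufficiency argument follows the paper's route exactly: the same case split on $|M(S)|$, the same appeal to Theorem~\ref{th:possstrongNC} when $|M(S)|=1$, and the same reduction of $(t,c,n-1)$-strong stabilization to the $(t,S_B^*,n-1)$-TA strong stabilization of $\mathcal{SSMAX}$ (Theorem~\ref{th:possTAStrong}) via the distance bound showing $S_B^*\subseteq\{v:\min_{b\in B}d(v,b)\le c\}$. Your write-up of that bound is in fact more explicit than the paper's, which simply asserts that $d(v,b)>|M(S)|-2$ together with strict decrease and uniqueness of the fixed point forces $\mu(v,b)=\Upsilon$.

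For necessity, however, you take a detour the paper avoids. You flag the mismatch between $|M|$ in Theorem~\ref{th:necessarConditionStrong} and $|M(S)|$ here, and propose to ``replay'' the symmetric line construction using the used-value chain of $S$ and to embed it in the fixed system $S$. The paper's argument is far simpler: since $|M(S)|\le |M|$, the hypothesis $c<|M(S)|-2$ already implies $c<|M|-2$, so Theorem~\ref{th:necessarConditionStrong} applies verbatim---no adaptation of its proof is needed. Your proposed embedding is thus unnecessary; it is also not carried out, and in general cannot be: the line systems of length $2c+4$ built in the proof of Theorem~\ref{th:necessarConditionStrong} need not appear as subsystems of an arbitrary given $S$, so an ``embedding in $S$'' strategy would fail for, say, star-shaped systems. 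Drop that paragraph and replace it with the one-line inequality $|M(S)|\le |M|$.
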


\begin{proof}
We split this proof into two parts:

\noindent\textbf{1) Proof of the ``if'' part:}
Denote $(M,W,met,mr,\prec)$ by $\mathcal{M}$ and assume that $\mathcal{M}$ is a strongly maximizable metric and that $c\geq max\{0,|M(S)|-2\}$. We distinguish the following cases:

\begin{description}
\item[Case 1:] $|M(S)|=1$ (and hence $c\geq 0$).\\
Denote by $m$ the metric value such that $M(S)=\{m\}$. For any correct process $v$, we have $\mu(v,r)=\underset{b\in B}{min_\prec}\{\mu(v,b)\}=m$. We can deduce that it is equivalent to construct a maximum metric spanning tree for $\mathcal{M}$ and for $\mathcal{NC}$ over this system. By Theorem \ref{th:possstrongNC}, we know that there exists a $(t,0,n-1)$-strongly stabilizing protocol for this problem with a finite $t$, that proves the result.

\item[Case 2:] $|M(S)|\geq 2$ (and hence $c\geq |M(S)|-2$).\\
By Theorem \ref{th:possTAStrong}, we know that there exists a $(n\Pi(k,\delta)\Delta D,S_B^*,n-1)$-TA-strongly stabilizing protocol $\mathcal{P}$ for maximum metric spanning tree construction in this case. Denote by $\Upsilon$ the only fixed point of $\mathcal{M}$. Let $v$ be a correct process such that $v\in S_B^*$.

By definition of $S_B^*$, we have: $\mu(v,r)\prec \mu(v,b)$ for at least one Byzantine process $b$. As $\mathcal{M}$ is strictly decreasing and has only one fixed point, we can deduce that $\Upsilon\preceq\mu(v,r)$ and then $\mu(v,b)\neq\Upsilon$.

Assume that $d(v,b)>c\geq |M(S)|-2$. As $\mathcal{M}$ is strictly decreasing, has only one fixed point $\Upsilon$, and $\mathcal{M}$ has $|M(S)|$ distinct metric values over $S$, we can conclude that $\mu(v,b)=\Upsilon$. This contradiction allows us to conclude that there exists a process $b$ such that $d(v,b)\leq c$ for any correct process which belongs to $S_B^*$.

In other words, $S_B^*=\left\{v\in V|\underset{b\in B}{min}\{d(v,b)\}\leq c\right\}$ and $\mathcal{P}$ is in fact a $(n\Pi(k,\delta)\Delta D,c,n-1)$-strongly stabilizing protocol, that proves the result with $t=n\Pi(k,\delta)\Delta D$.
\end{description}

\noindent\textbf{2) Proof of the ``only if'' part:}
This result is a direct consequence of Theorem \ref{th:necessarConditionStrong} when we observe that $|M(S)|\leq |M|$ by definition.
\end{proof}

We can now summarize all results about self-stabilizing maximum metric tree construction in presence of Byzantine faults with the above table. Note that results provided in this paper fill all gaps pointed out in related works.

\footnotesize
\begin{center}
\begin{tabular}{|c||c|}
\cline{2-2}
\multicolumn{1}{c||}{}& $\mathcal{M}=(M,W,mr,met,\prec)$ is a \tabularnewline
\multicolumn{1}{c||}{}& maximizable metric \tabularnewline
\hline
\hline
$(c,f)$-strict stabilization & Impossible\tabularnewline
(for any $c$ and $f$)& (\cite{NA02c})\tabularnewline
\hline
$(t,c,f)$-strong stabilization  & Possible $\Longleftrightarrow \begin{cases}
\mathcal{M} \text{ is a strongly maximizable metric, and}\\
c\geq max\{0,|M(S)|-2\}
\end{cases}$\tabularnewline
(for $0\leq f\leq n-1$ and a finite $t$)&(Theorem \ref{th:possStrong})\tabularnewline
\hline
$(A_B,f)$-TA strict stabilization  & Impossible\tabularnewline 
(for any $f$ and $A_{B}\varsubsetneq S_B$)& (\cite{DMT10ca})\tabularnewline
\hline
$(S_B,f)$-TA strict stabilization & Possible\tabularnewline 
(for $0\leq f\leq n-1$)& (\cite{DMT10ca} and Theorem \ref{th:SSMAXstrict})\tabularnewline
\hline
$(t,A_B,f)$-TA strong stabilization & Impossible\tabularnewline 
(for any $f$ and $A_{B}\varsubsetneq S_B^*$)& (Theorem \ref{th:impTAstrong})\tabularnewline
\hline
$(t,S_B^*,f)$-TA strong stabilization & Possible\tabularnewline 
(for $0\leq f\leq n-1$ and a finite $t$) & (Theorem \ref{th:possTAStrong})\tabularnewline
\hline
\end{tabular}
\end{center}

\normalsize

To conclude about results presented in this paper, we must bring some precisions about specifications. We chose to work with a specification of the problem that consider the $dist$ variable as a O-variable. This choice may appear strong but it seems us necessary to keep the consistency of results. Indeed, impossibility results of Section \ref{sec:impossibility} can be proved with a weaker specification that does not consider the $dist$ variable as a O-variable (see \cite{DMT11ra}). On the other hand, we need the stronger specification to bound the number of disruptions of the proposed protocol. We postulate that our protocol is also TA strongly stabilizing with the weaker specification but we do no succeed to bound exactly the number of disruptions. 

The following questions are still open. Is it possible to bound the number of disruptions with the weaker specification? Is it possible to perform TA strong stabilization with a weaker daemon? Is it possible to decrease the number of disruptions without loose the optimality of the containment area?

\bibliographystyle{plain}
\bibliography{biblio}

\end{document}